\def\ArxivPreprint{1}
\documentclass[11pt]{article}

\usepackage[margin=1in]{geometry}
\usepackage[T1]{fontenc}
\usepackage[utf8]{inputenc}
\usepackage{lmodern}
\usepackage{microtype}
\usepackage{setspace}
\ifdefined\UTMDWorkingPaper
\else
\fi

\usepackage{amsmath, amsthm, mathtools}

\usepackage{amssymb}
\allowdisplaybreaks

\usepackage{graphicx}
\usepackage{subcaption}
\usepackage{float} % for [H] placement
\usepackage[section]{placeins} % keep floats within their section
\usepackage{tikz}
\usetikzlibrary{arrows.meta,positioning,calc}
\usepackage{pgfplots}
\pgfplotsset{compat=1.18}
\usepgfplotslibrary{fillbetween}

\usepackage{booktabs}
\usepackage{array}
\usepackage[ruled]{algorithm2e}
\usepackage{enumitem}

\usepackage[authoryear,round]{natbib}
\ifdefined\UTMDWorkingPaper
\usepackage[colorlinks=true,allcolors=blue,hyperfootnotes=false]{hyperref}
\else
\usepackage[colorlinks=true,allcolors=blue]{hyperref}
\fi

\SetAlFnt{\small}
\SetAlCapFnt{\small}
\SetAlCapNameFnt{\small}
\SetAlCapHSkip{0pt}
\IncMargin{-\parindent}
\newcommand{\PaperTitle}{Exact Budget Balance via Payment-Rule Ambiguity: Incentive Preservation, Transfer Capacity, and Participation}
\ifdefined\UTMDWorkingPaper
\newcommand{\PaperDate}{July 28, 2026}
\else
\ifdefined\ArxivPreprint
\newcommand{\PaperDate}{August 1, 2026}
\else
\newcommand{\PaperDate}{}
\fi
\fi
\newcommand{\JELCodes}{D82, D81, D47, D44, C72}
\newcommand{\PaperKeywords}{Payment-rule ambiguity; Exact budget-balance transformations; Max--min expected utility; Mechanism design; Vickrey--Clarke--Groves mechanisms}
\hypersetup{
  pdftitle={\PaperTitle},
  pdfauthor={Hiroaki Odahara},
  pdfsubject={Mechanism design under payment-rule ambiguity},
  pdfkeywords={\PaperKeywords}
}

\title{\PaperTitle}
\makeatletter
\renewcommand{\maketitle}{%
  \vspace{-1.2em}%
  \begingroup
  \renewcommand\thefootnote{\fnsymbol{footnote}}%
  \begin{center}%
    {\LARGE\bfseries \@title\par}%
    \vskip 0.5em%
    {\@author\par}%
    \vskip 0.3em%
    {\@date\par}%
  \end{center}%
  \footnotesize\@thanks
  \endgroup
  \setcounter{footnote}{0}%
  \par\vskip 1em%
}
\makeatother
\ifdefined\UTMDWorkingPaper
\author{%
  \normalsize Hiroaki Odahara%
  \thanks{Graduate School of Economics, The University of Tokyo; Department of Informatics, Graduate School of Informatics and Engineering, The University of Electro-Communications.
  \texttt{hiroaki.odahara@e.u-tokyo.ac.jp}.
  This work was supported by JSPS KAKENHI Grant Number JP21H04979 and JST ERATO Grant Number JPMJER2301, Japan.
  I thank Hitoshi Matsushima and Atsushi Iwasaki for guidance and helpful discussions.
  This version substantially revises and expands UTMD Working Paper No.~130, ``Exact Budget-Balance Transformations via Payment-Rule Ambiguity'' (April 2026).
  Any remaining errors are my own.%
  }\\[0.2em]
}
\else
\ifdefined\ArxivPreprint
\author{%
  \normalsize Hiroaki Odahara\textsuperscript{1,2}\\[0.25em]
  \small \textsuperscript{1}Market Design Center, Graduate School of Economics, The University of Tokyo\\
  \small \textsuperscript{2}Graduate School of Informatics and Engineering, The University of Electro-Communications\\
  \small \texttt{hiroaki.odahara@e.u-tokyo.ac.jp}
}
\else
\author{%
  \normalsize Hiroaki Odahara\textsuperscript{a,b}\\[0.25em]
  \small \textsuperscript{a}Graduate School of Economics, The University of Tokyo,\\
  \small 7-3-1 Hongo, Bunkyo-ku, Tokyo 113-0033, Japan\\
  \small \textsuperscript{b}Department of Informatics, Graduate School of Informatics and Engineering,\\
  \small The University of Electro-Communications,\\
  \small 1-5-1 Chofugaoka, Chofu, Tokyo 182-8585, Japan\\
  \small Corresponding author: \texttt{hiroaki.odahara@e.u-tokyo.ac.jp}
}
\fi
\fi
\date{\small \PaperDate}

\newcommand{\N}{\{1,\dots,n\}}
\newcommand{\OmegaSet}{\Omega}
\newcommand{\E}{\mathbb{E}}

\theoremstyle{plain}
\newtheorem{theorem}{Theorem}[section]
\newtheorem{proposition}[theorem]{Proposition}
\newtheorem{lemma}[theorem]{Lemma}
\newtheorem{corollary}[theorem]{Corollary}

\theoremstyle{definition}
\newtheorem{example}[theorem]{Example}
\newtheorem{definition}[theorem]{Definition}

\newtheorem{remark}[theorem]{Remark}

\begin{document}

\maketitle
\thispagestyle{plain}

\begin{abstract}
This paper asks whether the payment side of an incentive-compatible mechanism can be repaired so that every realized settlement balances exactly while the allocation and all report-by-report comparisons under the maintained worst-case evaluation remain unchanged. Before reports, the designer commits to a menu of payment rules and a report-blind selection protocol. A pointwise preservation condition then carries over dominant-strategy incentives. Under a full-support common reference, a precisely known label law permits repair only if the original rule already balances, whereas frequency uncertainty about even one non-worst label makes every non-deficit rule repairable. Under the common residual-ambiguity benchmark, every preserving, exactly balanced finite menu at a positive-surplus state requires more transfer capacity as certified frequency information becomes tighter. In the canonical menu, fixing the transfer cap makes the best attainable balance deteriorate toward the original surplus as ambiguity vanishes. When only the possible labels are certified, two labels suffice. Suitably balanced partitions can also retain the original cap for nonnegative payments and preserve individual rationality under every realized rule. This last guarantee is impossible in positive-revenue Vickrey states when each agent's preserving distribution has full support. Frequency information therefore changes feasibility, liquidity, and realized participation in distinct ways.
\end{abstract}

\noindent\textbf{JEL Codes:} \JELCodes \\
\textbf{Keywords:} \PaperKeywords

\ifdefined\UTMDWorkingPaper
\else
\ifdefined\ArxivPreprint
\else
\newpage
\fi
\fi

% ========================================================================
\section{Introduction}\label{sec:intro}
% ========================================================================

Many incentive-compatible mechanisms collect a surplus even when a neutral operator, internal platform, or clearinghouse is not allowed to retain revenue. This paper asks whether the allocation and its established incentives can be left untouched while the payment side is repaired so that every realized settlement clears exactly. The question is most relevant when law, charter, or accounting rules make zero retention a hard settlement constraint; if residual retention is acceptable, conventional redistribution is the more behaviorally robust benchmark. The instrument studied here is ambiguity over a publicly posted menu of payment rules.

A Vickrey auction gives the basic intuition. Instead of applying a single payment rule, the designer commits before bids to a menu of balanced settlement rules and selects one after bids through a report-blind protocol. Different rules can be most costly for different bidders. The menu can therefore be constructed so that each bidder's worst-case payment equals her original Vickrey payment even though every realized settlement redistributes the auction revenue completely. A deterministic rebate rule would generally change incentives because a bidder may manipulate the revenue being returned. Here no component rule need be incentive compatible on its own; under worst-case evaluation, the compound menu preserves every report-by-report payoff comparison.

The commitment is to formulas and procedure rather than discretion after reports. The menu and selection mapping are fixed in advance, the label is selected afterward, and no signal observed before reporting narrows the relevant ambiguity set. Agents may face the same ambiguity set while having different worst cases because different formulas are most costly in different payment rows. The natural applications are one-shot, episodic, or nonstationary settlements; stable repetition or enforceable label-contingent contracts may instead require recalibration or remove the intended exposure.

The analysis starts from a pointwise preservation proposition. Suppose each agent has a report-independent worst-case distribution over labels that reproduces her original payment at every report profile. The compound and original mechanisms then give that agent the same evaluated payoff report by report. Dominant-strategy incentives and participation under worst-case evaluation therefore carry over. Pointwise preservation is deliberately stronger than interim preservation: it leaves every deviation comparison unchanged and aligns the transformation with rule-by-rule balance and statewise transfer caps.

Frequency information affects three margins differently.

First, feasibility is discontinuous. Under a full-support common reference, a precisely known common label law permits repair only of rules that already balance. Within the fixed-adverse-label architecture, uncertainty about even one non-adverse label makes every non-deficit rule repairable. At the support-only benchmark, feasibility requires only two labels.

Second, liquidity moves in the opposite direction. Under common residual ambiguity, an identity for arbitrary finite menus makes the required transfer capacity grow as ambiguity narrows. The canonical frontier is sharp: under a fixed cap, the worst settlement imbalance returns toward the original surplus. At the support-only benchmark, suitably balanced partitions retain the original cap for nonnegative payments.

Third, realized participation has its own boundary. The support-only construction never charges an agent more than the original rule, so ex post individual rationality carries over rule by rule. If every preserving distribution has full support, that guarantee is impossible in positive-revenue Vickrey states. A small amount of ambiguity is therefore enough for feasibility, but the stronger participation package appears only at the support-only boundary.

The three claims have different scopes. The feasibility characterization is relative to publicly fixed adverse labels and pointwise preservation; allowing a broader label architecture enlarges the feasible set. The inverse-exposure capacity bound is more general and applies to arbitrary finite menus without fixed adverse labels or dominance, whereas the sharp closed-form frontiers use their stated single-winner benchmark domains. The support-only case is a static boundary benchmark under maintained worst-case behavior, not a claim of robustness to every preference specification. In the Vickrey benchmark, a deviation can tie under the maintained worst case while improving every non-worst settlement. The analysis also shows that exact repair survives under sufficiently pessimistic mixtures of worst- and best-case evaluation, although the endpoint construction retains the original cap and preserves participation rule by rule only under pure max--min evaluation. The interior regimes measure the price of imposing certified positive frequency floors and require recalibration when those floors change.

\paragraph{Related literature.}
The closest classical payment-redesign papers hold the allocation rule or interim transfers fixed while changing settlement. \citet{AspremontCremerGerardVaret2004} identify an information condition under which a Bayesian mechanism can be modified to balance ex post without changing its decision rule. \citet{BorgersNorman2009} establish an equivalence between ex ante and ex post balance under independent types and interim participation constraints, and \citet{Le2026} studies ex post balanced mechanisms that preserve the allocation and interim transfer rules of a mechanism that balances in expectation. The present paper instead preserves each agent's worst-case payment report by report, requires balance under every realized label, and studies transfer capacity, capped imbalance, and incidence.

\citet{BodohCreed2012} derives payoff equivalence under max--min expected utility and applies it to auctions and budget-balanced bilateral trade. \citet{Wolitzky2016} derives a general necessary condition for implementation with max--min agents and characterizes efficient bilateral trade when agents have limited information about one another's values. Here ambiguity concerns settlement-rule labels, while the allocation and its report-by-report incentive comparisons are held fixed. The pointwise preservation proposition serves as a foundation for different questions: whether a public menu can repair every non-deficit rule, how menu size and frequency information determine transfer capacity, and where rule-by-rule participation fails. In particular, balance is imposed label by label rather than only on the evaluated compound mechanism.

The closest ambiguous-transfer benchmark is \citet{Guo2019}. Guo uses ambiguous transfers to create the incentives needed for efficient implementation under the Beliefs Determine Preferences (BDP) condition of \citet{Neeman2004}. This paper instead begins with an incentive-compatible mechanism and uses ambiguity only to repair payments. The distinction matters for the standard independent-private-values Vickrey--Clarke--Groves (VCG) benchmark, which lies outside Guo's BDP sufficient condition. Both approaches require every candidate transfer rule to balance ex post; the additional object here is a public, report-blind, auditable mapping from operational inputs to rule labels.

The classical benchmark is the tension between incentive compatibility and balance in Vickrey--Clarke--Groves mechanisms \citep{Vickrey1961,Clarke1971,Groves1973,GreenLaffont1979,MyersonSatterthwaite1983}. Bailey--Cavallo and related redistribution mechanisms preserve the Groves argument by making each agent's rebate independent of that agent's own report while seeking no-deficit or approximate balance \citep{Bailey1997,Cavallo2006,GuoConitzer2009,Moulin2009}. The support-only repair takes a different route: it permits own-report-dependent rebates only in formulas that do not determine that agent's worst-case evaluation.

Bayesian and dynamic approaches alter the informational or temporal structure instead. They include the Arrow--d'Aspremont--Gérard-Varet (AGV) mechanism, correlated-information transfer constructions, and a range of later Bayesian and dynamic mechanisms \citep{Arrow1979,AGV1979,Matsushima1991,Aoyagi1998,Mezzetti2004,Matsushima2007,KosenokSeverinov2008,GalavottiMutoOyama2011,AtheySegal2013}.

One strand uses ambiguity in optimal auctions, surplus extraction, and the design of mechanisms or communication devices \citep{BoseOzdenorenPape2006,BoseDaripa2009,BoseRenou2014,DiTillioKosMessner2017,HwangKohBaik2026}. Another studies robust or max--min implementation and incentive compatibility when beliefs or informational environments are incompletely specified \citep{BergemannMorris2005,deCastroLiuYannelis2017,DeCastroYannelis2018,GuoYannelis2021,Song2018,TangZhang2021,GuoYannelis2022,Song2023}. Ambiguous contracting provides a related design perspective \citep{DuttingEtAl2024}. These literatures study how informational robustness or ambiguity alters implementability or optimal design. The main branch here instead repairs the payment side of a given mechanism. An appendix extension connects the construction to belief-contingent transfers for efficient balanced implementation \citep{JohnsonPrattZeckhauser1990} and to surplus-extraction bets \citep{CremerMcLean1985,CremerMcLean1988,Guo2024}.

Section~\ref{sec:setup} opens with a motivating two-rule Vickrey repair, then defines the general payment-rule mechanism and establishes pointwise incentive preservation. Section~\ref{sec:main} characterizes exact balance across frequency-information regimes. Section~\ref{sec:price} derives transfer-cap frontiers and fixed-cap bounds. Section~\ref{sec:transfers} establishes the participation boundary, and Section~\ref{sec:discussion} concludes. The appendices collect the full capped-repair frontier, heterogeneous-exposure incidence, quantitative participation bounds, and technical extensions and proofs.

% ========================================================================
\section{Payment-Side Repair and Incentive Preservation}\label{sec:setup}
% ========================================================================

Starting from a direct mechanism with fixed allocation and reporting incentives, this section presents a Vickrey repair and then formalizes the menu, timing, payoff equivalence, and ambiguity foundation.

\subsection{A motivating two-rule Vickrey repair}\label{subsec:two_rule_vickrey}

To make the transformation concrete before introducing the general notation, consider a numerical example. Payments are positive when made by a bidder to the auctioneer and negative when they are rebates. The two payment formulas and the report-blind label-selection protocol are fixed before bids, and one label is selected only after bids.

The support-only benchmark certifies which labels are possible but imposes no positive lower bound on how often a label must be used. Under max--min evaluation, an agent therefore evaluates the largest payment assigned to her across the menu. Call a label \emph{adverse} for an agent when it gives that agent this largest payment. An adverse-label partition groups agents according to the label assigned to be adverse for them, with the assignment fixed before reports.

Consider a three-bidder second-price auction at the bid profile $(9,6,1)$. Bidder~1 wins, and the original Vickrey rule charges the winner the second-highest bid and charges each loser zero, so its payment vector is $x=(6,0,0)$ and its total revenue is $s=6$. More generally, at any bid profile write the original Vickrey payments as $x=(x_1,x_2,x_3)$ and their sum as $s$. Fix two labels before reports, with label $A$ adverse for bidder~1 and label $B$ adverse for bidders~2 and~3. The following two payment rules are functions of the bids through $x$ and $s$:
\[
  z^A=\left(x_1,\ x_2-\frac{s}{2},\ x_3-\frac{s}{2}\right),
  \qquad
  z^B=\left(x_1-s,\ x_2,\ x_3\right).
\]
Each rule balances exactly. Every bidder's payment under her adverse label equals her original Vickrey payment, while the other rule never charges more when $s\ge0$. The worst-case evaluated payment is therefore unchanged.

Table~\ref{tab:two_rule_vickrey} displays the repair at the stated bid profile.

\begin{table}[ht]
\centering
\caption{A two-rule support-only repair of a three-bidder Vickrey state}
\label{tab:two_rule_vickrey}
\begin{tabular}{@{}lrrr@{}}
\toprule
 & Original Vickrey rule & Rule $A$ & Rule $B$ \\
\midrule
Bidder 1 & $6$ & $6$ & $0$ \\
Bidder 2 & $0$ & $-3$ & $0$ \\
Bidder 3 & $0$ & $-3$ & $0$ \\
\midrule
Account total & $6$ & $0$ & $0$ \\
\bottomrule
\end{tabular}
\end{table}

The partition is tied to bidder identities rather than the realized ranking, and neither component rule need be incentive compatible in isolation. Proposition~\ref{prop:pointwise_incentive} formalizes the resulting report-by-report payoff equivalence. Section~\ref{sec:main} extends the construction to fixed partitions and establishes two-label minimality within the maintained class; Example~\ref{ex:vcg_vickrey3} later compares the same state with menus calibrated to positive frequency floors.

\subsection{Payment-rule menus and timing}\label{subsec:menus_timing}

There are $n \geq 2$ agents indexed by $i \in \N$.
Agent $i$ has a finite type and report space $\Omega_i$.
Let $\OmegaSet:=\prod_{i\in\N}\Omega_i$ and $\Omega_{-i}:=\prod_{j\neq i}\Omega_j$.
For a finite set $S$, let $\Delta(S)$ denote its probability simplex.
True type profiles are denoted by $\theta=(\theta_1,\dots,\theta_n)\in\OmegaSet$. When Bayesian and interim statements are invoked, the commonly known prior $p\in\Delta(\OmegaSet)$ has full support; the pointwise dominant-strategy results do not use a type distribution.
In the statewise transfer algebra below, $\omega\in\OmegaSet$ denotes a generic report profile (or a state at truthful reporting). When true types and deviations appear together, $\theta$ denotes the true type profile and $r_i\in\Omega_i$ denotes agent $i$'s report. Finiteness is used to write interim expectations as finite sums; the algebraic transformation itself is pointwise in the report profile.
Let $A$ be the allocation set, $g:\OmegaSet\to A$ an allocation rule, $x:\OmegaSet\to\mathbb R^n$ a payment rule, and $v_i:A\times\OmegaSet\to\mathbb R$ agent $i$'s valuation function. If the true profile is $\theta$ and the report profile is $r$, agent $i$'s quasi-linear payoff is
\[
  u_i(\theta,r;\,g,x)=v_i(g(r),\theta)-x_i(r),
\]
where a positive $x_i(r)$ is a payment from agent~$i$ to the mechanism.
Because payments enter utility with a minus sign, minimizing utility over rule-label distributions is equivalent to maximizing expected payment. Throughout, a worst-case distribution refers to this same object in the two equivalent representations.
Unless noted otherwise, I do \emph{not} assume private values; the VCG benchmark in Section~\ref{sec:main} specializes to private values.

The pair $(g,x)$ is the \emph{pure} direct mechanism. Let $J$ be a finite label set. For each $j\in J$, let $x^j$ be the payment vector under label $j$; for each agent $i$, let the report-independent set $\Delta q_i\subseteq\Delta(J)$ describe her ambiguity about label frequencies. An \emph{ambiguous payment-rule mechanism} is a tuple
\[
  \mathcal{M} \;=\; \bigl(g,\ \{x^j\}_{j\in J},\ \{\Delta q_i\}_{i\in\N}\bigr).
\]
The canonical interior analysis assigns one adverse label to each agent and sets $J=\N$; the support-only analysis in Section~\ref{sec:main} also allows a smaller fixed partition menu. Subscripts index agents and superscripts index labels.

The timing is as follows. First, the designer commits to the allocation rule $g$, the family $\{x^j\}_{j \in J}$, and a report-blind label-selection protocol. Second, agents report before observing the residual operational input, the realized label, or any signal that changes the ambiguity set relevant at the reporting date. Third, the operational input is realized, the allocation is made, and one rule label is selected. Finally, the public transfer formula attached to that label is applied.

\paragraph{Auditable protocol.}\label{rem:auditable_protocol}
The transfer formulas and the mapping from operational inputs to rule labels are fixed before reports. Any otherwise unspecified input must be external, logged, realized only after reports, and tamper-resistant: neither the designer nor reporting agents, acting alone or jointly, can choose, alter, suppress, or selectively trigger it to steer the label, and the mapping cannot use report data. This is a maintained application condition, not a consequence of exact balance. The realized input, label, and formula are verifiable ex post; only the input's probability law remains unspecified. Thus ambiguity concerns ex ante frequencies, not discretion after reports. More generally, conditional on every signal available at the reporting date, the maintained ambiguity set must still admit the same report-independent worst-case distribution at every report profile. This conditional stability, rather than report-blindness alone, is the condition used in Proposition~\ref{prop:pointwise_incentive}; neither a known probability law nor statistical independence is required.

For example, a consortium clearinghouse can map a signed external event record and transaction identifier to a settlement label under formulas, eligible records, and an audit rule fixed in advance. An independent auditor may certify the eligible-label support alone or a public baseline component that guarantees stated frequency floors, while a residual label component depends on an external, nonstationary input. The same certificate and mapping are available to the designer and participants; an attaining menu is calibrated to those public bounds, not to a privately known exact law. Ex post verification of realized inputs does not by itself certify that their probability law is stationary. If a stable law is certified or reliably learned, the narrower ambiguity regime applies. Exact balance gives the clearinghouse a zero net account under every label and hence no direct monetary stake in steering selection, although reputational, side-contracting, and other nonpecuniary incentives remain.

\begin{remark}[Behavioral scope]\label{rem:behavioral_scope}
The guarantees require agents, at the reporting date, to retain the announced ambiguity sets and to be unable to neutralize label exposure completely through enforceable contingent contracts. Learning or certification that narrows these sets calls for recalibration, while complete label-contingent insurance lies outside the model. The endpoint implications of this requirement are stated after the support-only menu has been defined.
\end{remark}

\subsection{Pointwise payoff equivalence}\label{subsec:pointwise_equivalence}

A payment-rule menu $\{x^j\}_{j\in J}$ satisfies \emph{ex post budget balance} if
\begin{equation}\label{eq:bb}
  \sum_{i\in\N}x_i^j(\omega)=0
  \quad\text{for every }j\in J\text{ and }\omega\in\OmegaSet.
\end{equation}
Given report-independent weights $q_i^\ast\in\Delta(J)$, the menu \emph{pointwise preserves} the original payment rule if
\begin{equation}\label{eq:preserve}
  \sum_{j\in J}q_i^{j\ast}x_i^j(\omega)=x_i(\omega)
  \quad\text{for every }i\in\N\text{ and }\omega\in\OmegaSet.
\end{equation}
The weights are behaviorally relevant when $q_i^\ast$ maximizes agent~$i$'s expected payment over her ambiguity set at every report profile.

A pure mechanism is \emph{dominant-strategy incentive compatible} (DSIC) if truthful reporting is optimal for every profile of the other agents' reports. The compound menu is \emph{max--min DSIC} when the corresponding comparison uses the worst-case payoff over label distributions. Normalize outside options to zero. A pure mechanism is \emph{ex post individually rational} if truthful participation gives every agent a nonnegative payoff at every type profile. The compound menu is \emph{max--min ex post individually rational} (max--min ex post IR) when that comparison evaluates the menu before its label is realized. It satisfies \emph{rule-by-rule ex post individual rationality} (rule-by-rule ex post IR) only if participation is nonnegative under every realized label; this is the stronger requirement studied in Section~\ref{sec:transfers}.

\begin{proposition}[Pointwise payoff equivalence]\label{prop:pointwise_incentive}
Suppose utility is quasi-linear, the allocation rule $g$ is label-independent, and each $q_i^\ast$ is a report-independent worst-case distribution for agent~$i$. If the menu satisfies pointwise preservation \eqref{eq:preserve}, then, for every true type profile $\theta$ and every report profile $r$,
\begin{equation}\label{eq:pointwise_preserve_payoff}
  \min_{q_i\in\Delta q_i}
  \left[v_i(g(r),\theta)-\sum_{j\in J}q_i^j x_i^j(r)\right]
  =v_i(g(r),\theta)-x_i(r).
\end{equation}
Consequently, $(g,x)$ is DSIC if and only if $(g,\{x^j\})$ is max--min DSIC, and $(g,x)$ is ex post individually rational if and only if the compound menu is max--min ex post individually rational.
\end{proposition}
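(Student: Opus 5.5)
The plan is to prove the displayed identity \eqref{eq:pointwise_preserve_payoff} pointwise in $(\theta,r)$ and then read off both equivalences from it.

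Fix $\theta$ and $r$. The allocation $g(r)$ does not depend on the label, so the term $v_i(g(r),\theta)$ is a constant over $\Delta q_i$ and can be pulled out of the minimization. This leaves
\[
  v_i(g(r),\theta)-\max_{q_i\in\Delta q_i}\sum_{j\in J}q_i^j x_i^j(r).
\]
By hypothesis, $q_i^\ast$ is a worst-case distribution at every report profile, and in particular at $r$. So the maximum is attained at $q_i^\ast$. The equivalence between minimizing utility and maximizing expected payment stated after the payoff definition makes this step immediate.

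Pointwise preservation \eqref{eq:preserve}, evaluated at $\omega=r$, gives $\sum_j q_i^{j\ast}x_i^j(r)=x_i(r)$, which yields \eqref{eq:pointwise_preserve_payoff}. The one point needing care is that the worst case must be report-independent, or at least attained at $r$ for every deviation $r_i$ and not only at truthful profiles. Otherwise the identity would hold only on the truthful diagonal, and deviation comparisons could change. The hypothesis supplies exactly this. I would also note that the minimum exists: either $\Delta q_i$ is taken compact, or the $\min$ is read as attained at $q_i^\ast$.

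The consequences then follow by substitution. Max--min DSIC of the menu requires, for all $i$, $\theta$, $r_{-i}$ and $r_i$, that the evaluated payoff at $(\theta_i,r_{-i})$ be at least the evaluated payoff at $(r_i,r_{-i})$. By the identity, both sides equal the corresponding pure payoffs $u_i(\theta,\cdot\,;g,x)$, so the two systems of inequalities coincide term by term. Similarly, max--min ex post IR evaluates the identity at $r=\theta$, which gives $u_i(\theta,\theta;g,x)\ge 0$, and this is ex post IR of $(g,x)$. Both directions are therefore immediate, and no step is a genuine obstacle beyond making the attainment and report-independence explicit.
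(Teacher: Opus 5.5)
Your proposal is correct and follows the paper's proof essentially step for step. You pull out the label-independent valuation, use the report-independent worst case $q_i^\ast$ as the expected-payment maximizer at every $r$, apply \eqref{eq:preserve} at $\omega=r$, and then read off the DSIC and ex post IR equivalences by applying the identity at truthful and deviating reports.
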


\begin{proof}
Because the valuation term is label-independent, minimizing utility is equivalent to maximizing the expected payment. The assumed report-independent maximizer is $q_i^\ast$, and \eqref{eq:preserve} makes its expected payment equal to $x_i(r)$. This proves \eqref{eq:pointwise_preserve_payoff}. Applying the identity to every truthful report and deviation gives the DSIC equivalence; applying it at truthful reports gives the evaluated ex post participation equivalence.
\end{proof}

This proposition takes the report-independent worst-case distribution as a reduced-form hypothesis. Lemma~\ref{lem:qstar} later verifies this hypothesis for the lower-bound ambiguity sets and fixed-adverse-label dominance introduced in the next subsection. The proposition invokes neither a type distribution nor statistical independence between types and labels. It does not say that a component payment rule is DSIC. Bayesian and interim extensions require additional separability assumptions and are stated in Appendix~\ref{app:bayesian_extension} after the ambiguity foundation has been specified.

\begin{remark}[Why pointwise preservation]\label{rem:pointwise_preservation}
Pointwise preservation is natural for the dominant-strategy and VCG applications because it keeps every report-by-report incentive comparison unchanged and aligns with rule-by-rule balance and transfer caps. If preservation were imposed only on interim expected transfers, gains and losses could offset across type or report profiles. That relaxation could enlarge the feasible set and lower the minimum transfer cap. The frontiers in Section~\ref{sec:price} are therefore sharp within the pointwise transformation class; they are not lower bounds for every interim-preserving payment redesign.
\end{remark}

\subsection{Fixed adverse labels and behavioral scope}\label{subsec:fixed_adverse_scope}

The primary behavioral objects are report-independent preserving weights $q_i^\ast\in\Delta(\N)$. In the max--min foundation below they are worst-case label distributions. The exact characterization uses a common reference vector $k\in\Delta(\N)$ satisfying
\begin{equation}\label{eq:referencewedge}
  q_i^{j\ast} \le k^j \le q_j^{j\ast}
  \quad \text{for all } i \neq j,\; j \in \N.
\end{equation}
Thus non-adverse agents place no more than the reference weight on a label, while its adverse agent places at least that weight. Since all vectors are probabilities,
\[
  R := \sum_{i \in \N} (q_i^{i\ast} - k^i)
  = \sum_{i \in \N} q_i^{i\ast} - 1 \ge 0.
\]
The condition $R>0$ is the nondegeneracy condition.

\begin{definition}[Reference-wedge condition]\label{def:refwedge}
Fix preserving weights $\{q_i^\ast\}_{i \in \N} \subseteq \Delta(\N)$.
Say that these weights \emph{satisfy the reference-wedge condition} if there exists $k \in \Delta(\N)$ satisfying \eqref{eq:referencewedge} and
\[
  R = \sum_{i \in \N} (q_i^{i\ast} - k^i) > 0.
\]
\end{definition}

The next subsection derives these weights and a wedge witness from a lower-bound ambiguity model.

To keep label $i$ adverse for agent~$i$ in the transformed family, I impose the pointwise dominance restriction
\begin{equation}\label{eq:monotone}
  x_i^j(\omega) \le x_i^i(\omega) \quad \text{for all } i \in \N,\; j \neq i,\; \omega \in \OmegaSet.
\end{equation}
Thus label~$i$ is pointwise the most expensive one for agent~$i$. The restriction gives the class a stable institutional meaning: each agent's adverse formula is publicly identified before reports rather than reassigned after observing them. Together with the lower-bound geometry below, it also supports a report-independent worst-case minimizer. Appendix~\ref{app:scope_counterexample} shows that dropping dominance enlarges the algebraically feasible class, so the restriction is substantive rather than without loss.

Given a pure payment rule $x$, a canonical payment-rule menu $\{x^j\}_{j\in\N}$ is a \emph{budget-balance (BB) transformation} of $x$ if it satisfies fixed-adverse-label dominance \eqref{eq:monotone}, ex post budget balance \eqref{eq:bb}, and pointwise preservation \eqref{eq:preserve}.

Some heterogeneity across preserving weights is essential. Under a common weight, only an already balanced pure rule can be preserved and balanced exactly.

\begin{proposition}[Necessity of heterogeneous preserving distributions]\label{prop:common_preserving_distribution}
Suppose all agents share the same report-independent preserving distribution $\bar q \in \Delta(\N)$ and that a payment-rule family $\{x^j\}_{j \in \N}$ satisfies \eqref{eq:bb} together with
\[
  \sum_{j \in \N} \bar q^j x_i^j(\omega) = x_i(\omega)
  \quad \text{for all } i \in \N,\; \omega \in \OmegaSet.
\]
Then the original pure payment rule is already exactly budget balanced state by state:
\[
  \sum_{i \in \N} x_i(\omega)=0
  \quad \text{for all } \omega \in \OmegaSet.
\]
\end{proposition}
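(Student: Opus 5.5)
The argument is a direct summation that interchanges the order of summation over agents and labels. Fix an arbitrary report profile $\omega\in\OmegaSet$. Summing the common preservation identity over all agents $i\in\N$ expresses the total original payment as a weighted combination of the total payments under each label:
\[
  \sum_{i\in\N}x_i(\omega)
  =\sum_{i\in\N}\sum_{j\in\N}\bar q^j x_i^j(\omega)
  =\sum_{j\in\N}\bar q^j\Bigl(\sum_{i\in\N}x_i^j(\omega)\Bigr).
\]
The interchange is legitimate because both index sets are finite. The key feature is that the weight $\bar q^j$ does not depend on $i$, so it factors out of the inner sum over agents.

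Next I invoke ex post budget balance \eqref{eq:bb}. Each inner sum $\sum_{i}x_i^j(\omega)$ equals zero for every label $j$. The whole right-hand side therefore vanishes, and the original rule balances at $\omega$. Since $\omega$ was arbitrary, the conclusion holds state by state.

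There is no real obstacle. Note that the argument never uses that $\bar q$ is a worst-case distribution, fixed-adverse-label dominance \eqref{eq:monotone}, or even that $\bar q$ sums to one; only label-by-label balance and the common weights matter. For the surrounding discussion, I would add a remark on the contrast with heterogeneous weights. In that case the weight $q_i^{j\ast}$ cannot be pulled out of the sum over $i$, and this is exactly the slack that the reference-wedge quantity $R$ of Definition~\ref{def:refwedge} is later designed to exploit.
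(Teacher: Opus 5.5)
Your proof is correct and is essentially the paper's own argument: sum the common preservation identities over agents, pull the agent-independent weight $\bar q^j$ out of the inner sum, and apply label-by-label balance \eqref{eq:bb}. Your side observations are also accurate: the argument uses neither dominance, nor the worst-case interpretation of $\bar q$, nor $\sum_j\bar q^j=1$.
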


\begin{proof}
Sum the preservation identities over agents and use \eqref{eq:bb}: $\sum_i x_i(\omega)=\sum_j\bar q^j\sum_i x_i^j(\omega)=0$ at every state.
\end{proof}

\subsection{A lower-bound foundation for the ambiguity sets}

I now derive the preserving weights from a lower-bound environment. The symmetric benchmark gives every agent the same common-contamination set, while a general lower-bound array identifies feasibility and the incidence of heterogeneous exposure. Later results use the resulting weights rather than the full ambiguity sets.

The only ambiguity studied here concerns the realized payment-rule label; uncertainty about other agents' types continues to be governed by the common prior $p$. Agent~$i$'s ambiguity about rule selection is represented by a nonempty convex set $\Delta q_i \subseteq \Delta(\N)$. Given a reference vector $k$ and lower-bound parameters $\varepsilon$, agent $i$ knows that each rule $j$ is selected with probability at least $(1 - \varepsilon_i^j) k^j$, where $\varepsilon_i^j \in [0,1]$, so that
\begin{equation}\label{eq:Deltaqi}
  \Delta q_i := \Bigl\{ q_i \in \Delta(\N) :\; q_i^j \ge (1-\varepsilon_i^j) k^j \text{ for all } j \in \N \Bigr\}.
\end{equation}
When $\varepsilon_i^j = 0$ for all $j$, agent~$i$ faces no ambiguity and knows the selection distribution exactly.
When $\varepsilon_i^j>0$ for at least one coordinate with $k^j>0$, the corresponding certified floor is strictly relaxed and the lower bounds no longer pin down the exact selection distribution.
The interpretation is incomplete disclosure of frequencies, not asymmetric information. The designer discloses the reference $k$ and commits to the lower bounds but does not specify an exact selection law. For a label with $k^j>0$, the parameter $\varepsilon_i^j$ measures how much frequency imprecision remains for agent $i$: a larger value weakens the certified floor, and the upper endpoint removes that floor altogether. A coordinate with $k^j=0$ imposes no floor at any parameter value. A protocol that instead announced an exact selection distribution would remove the ambiguity---and, by Proposition~\ref{prop:common_preserving_distribution}, the transformation with it. The public-certificate interpretation applies directly to the common lower bounds used in the main quantitative benchmark. Heterogeneous lower-bound arrays are a reduced-form extension and require participant-specific certified restrictions or admissible-model sets; they are not generated by a single common certificate alone.

The support-only benchmark is the opposite endpoint: no positive lower bound is imposed on a non-adverse label. Its sensitivity to narrower ambiguity sets is quantified in Appendix~\ref{app:calibration}.

As an interim representation, suppose the other agents report truthfully and let agent~$i$ evaluate the mechanism by max--min expected utility \citep{GilboaSchmeidler1989}:
\begin{equation}\label{eq:mmeu}
  \max_{r_i\in\Omega_i}\; \min_{q_i \in \Delta q_i}\;
  \E_p\Bigl[ v_i(g(r_i, \theta_{-i}), \theta)
  - \sum_{j \in \N} q_i^j\, x_i^j(r_i, \theta_{-i}) \;\Bigm|\; \theta_i \Bigr].
\end{equation}
The prior only averages these comparisons across the other agents' types. The dominant-strategy results use the pointwise worst-case payment comparison and do not depend on that prior. Each agent first considers the worst-case selection distribution within her ambiguity set and then chooses a report to maximize expected utility under that worst case.

Under the pointwise dominance restriction \eqref{eq:monotone}, label~$i$ is always the most expensive one for agent~$i$, so the inner minimization in \eqref{eq:mmeu} admits a simple report-independent minimizer:
\begin{equation}\label{eq:qstar}
  q_i^{j\ast} =
  \begin{cases}
    1 - \sum_{h \in \N \setminus\{i\}} (1 - \varepsilon_i^h) k^h, & j = i, \\
    (1 - \varepsilon_i^j) k^j, & j \neq i.
  \end{cases}
\end{equation}

\begin{lemma}[Report-independent worst-case distribution]\label{lem:qstar}
Under \eqref{eq:Deltaqi} and \eqref{eq:monotone}, for any report $r_i$ and any type $\theta_i$, the inner minimization in \eqref{eq:mmeu} admits $q_i^\ast$ in \eqref{eq:qstar} as a minimizer.
In particular, a worst-case distribution can be chosen report-independently.
\end{lemma}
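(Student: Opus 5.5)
The plan is to reduce the inner minimization in \eqref{eq:mmeu} to a linear program over the polytope $\Delta q_i$ and to show that $q_i^\ast$ in \eqref{eq:qstar} solves it at every report. Because the valuation term does not depend on $q_i$ and the conditional expectation is linear, minimizing utility is equivalent to maximizing
\[
  \Phi(q_i)=\sum_{j\in\N} q_i^j\, y^j,
  \qquad
  y^j:=\E_p\bigl[x_i^j(r_i,\theta_{-i})\bigm|\theta_i\bigr].
\]
By \eqref{eq:monotone}, $x_i^j\le x_i^i$ pointwise for all $j\neq i$. Taking conditional expectations preserves this ordering, so $y^j\le y^i$ for all $j\neq i$, for every report $r_i$ and type $\theta_i$. (The same argument applies without expectation in the pointwise, dominant-strategy reading, with $y^j=x_i^j(r)$.)

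Next I will check that $q_i^\ast\in\Delta q_i$. Its coordinates sum to one by construction. For $j\neq i$ it meets the floor $(1-\varepsilon_i^j)k^j$ with equality. For $j=i$ we have
\[
  q_i^{i\ast}=1-\sum_{h\neq i}(1-\varepsilon_i^h)k^h\;\ge\;1-\sum_{h\neq i}k^h=k^i\;\ge\;(1-\varepsilon_i^i)k^i\;\ge 0,
\]
where the first inequality uses $\varepsilon_i^h\ge 0$ and $k\in\Delta(\N)$.

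For optimality, take any $q_i\in\Delta q_i$. Since both vectors sum to one,
\[
  \Phi(q_i^\ast)-\Phi(q_i)=\sum_{j\neq i}\bigl(q_i^{j\ast}-q_i^j\bigr)\bigl(y^j-y^i\bigr).
\]
Each factor $q_i^{j\ast}-q_i^j=(1-\varepsilon_i^j)k^j-q_i^j$ is $\le 0$ because $q_i$ respects the lower bounds. Each factor $y^j-y^i$ is $\le0$ by dominance. Every summand is therefore nonnegative, so $q_i^\ast$ maximizes the expected payment, and hence minimizes utility. Since $q_i^\ast$ depends only on $(k,\varepsilon_i)$ and not on $r_i$ or $\theta_i$, it is report-independent.

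I do not expect a real obstacle. The only delicate point is confirming that $q_i^{i\ast}$ satisfies its own floor and is nonnegative. This uses $\sum_{h}k^h=1$, not merely the floors.
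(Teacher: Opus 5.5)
Your proof is correct and follows the paper's argument: the objective is linear in $q_i$, and dominance makes the worst case load all free mass onto label $i$ while leaving every other label at its floor $(1-\varepsilon_i^j)k^j$. The paper states this via the extreme-point property of linear programs; your exchange identity $\Phi(q_i^\ast)-\Phi(q_i)=\sum_{j\neq i}(q_i^{j\ast}-q_i^j)(y^j-y^i)\ge0$, together with your explicit check that $q_i^\ast\in\Delta q_i$ (which the paper leaves implicit), makes that same step fully rigorous.
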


\begin{proof}
The feasible set $\Delta q_i$ is a nonempty compact convex subset of the simplex $\Delta(\N)$, and for fixed $(r_i,\theta_{-i})$ the transfer term $\sum_j q_i^j x_i^j(r_i, \theta_{-i})$ is linear in $q_i$. Hence the inner minimization in \eqref{eq:mmeu} attains its optimum at an extreme point of $\Delta q_i$. By \eqref{eq:monotone}, label $i$ yields the largest transfer for agent~$i$ at every report profile, so one minimizing belief places as much probability mass as the constraints in \eqref{eq:Deltaqi} allow on label $i$ and the minimum feasible mass on every label $j \neq i$. This gives \eqref{eq:qstar}, and this minimizer does not depend on the report.
\end{proof}

Under \eqref{eq:qstar}, the off-diagonal components satisfy $q_i^{j\ast}=(1-\varepsilon_i^j)k^j \le k^j$ for every $j \neq i$, while
\[
  q_j^{j\ast}
  = 1 - \sum_{h \neq j}(1-\varepsilon_j^h)k^h
  = k^j + \sum_{h \neq j} \varepsilon_j^h k^h \ge k^j.
\]
Hence the lower-bound ambiguity model automatically implies the weak inequalities in \eqref{eq:referencewedge}, and
\[
  R = \sum_{j \in \N} (q_j^{j\ast} - k^j)
    = \sum_{j \in \N} k^j \sum_{i \neq j} \varepsilon_i^j.
\]
Thus the foundation satisfies the strict reference-wedge condition exactly when the displayed value of $R$ is positive. With full-support $k$, this is equivalent to $\varepsilon_i^j>0$ for some $i\neq j$; own-label exposure alone does not contribute to $R$.

The common-contamination benchmark gives every agent the familiar contamination set used in Knightian models such as \citet{EpsteinWang1994} and axiomatized by \citet{NishimuraOzaki2006},
\begin{equation}\label{eq:common_contamination_set}
  \Delta q_i(\varepsilon)
  :=\bigl\{(1-\varepsilon)k+\varepsilon q:q\in\Delta(\N)\bigr\},
  \qquad \varepsilon\in[0,1].
\end{equation}
Let $e_i$ denote the degenerate distribution that assigns probability one to label~$i$. Under fixed-adverse-label dominance, the preserving worst-case weights are
\begin{equation}\label{eq:common_contamination_weights}
  q_i^\ast(\varepsilon)=(1-\varepsilon)k+\varepsilon e_i.
\end{equation}
Thus the agents' worst-case distributions differ even though their primitive ambiguity sets are identical: each payment row has a different publicly assigned most-expensive label. The heterogeneity in the worst cases is generated by the menu, not by different announced laws or ambiguity attitudes. A larger common exposure expands the common ambiguity set and weakens every certified frequency floor. At zero exposure, all agents know the common label distribution exactly. At the upper endpoint, only the menu of possible labels is specified.

\paragraph{Ambiguity-neutral agents.}\label{rem:amb_neutral}
If all of agent $i$'s exposures vanish, then $\Delta q_i=\{k\}$ and $q_i^\ast=k$. More generally, if all her off-diagonal exposures vanish, then $q_i^\ast=k$, even though own-label exposure can leave $\Delta q_i$ non-singleton. With full-support $k$, one agent with a positive off-diagonal exposure is enough to make the wedge strict; all other agents may be ambiguity neutral.

\section{Exact Balance Across Exposure Regimes}\label{sec:main}
% ========================================================================

\subsection{Support-only partition repair}\label{subsec:partition_repair}

\paragraph{Notation convention.}
The symbol $x$ denotes the original pure payment rule, and superscripts index label-contingent payment rules. The symbol $z$ is used when an arbitrary-label or fixed-partition menu must be distinguished from the canonical one-label-per-agent menu, and a hat marks the canonical support-only endpoint family. The scalar $R$ denotes reference-wedge thickness; the calligraphic symbol $\mathcal R$ used later denotes an accounting rebate.

The support-only endpoint does not require one label per agent. Let $J$ be a finite label set and let $a:\N\to J$ be a surjective adverse-label map fixed before reports and used at every report profile. Formally, support-only ambiguity means that each agent's ambiguity set is the closed simplex $\Delta(J)$ over the certified labels. Hence every degenerate distribution $e_j$ is admissible and the worst case is attained. Write $z^j$ for the candidate payment rule under label $j$ and define the original surplus by $s(\omega):=\sum_{h\in\N}x_h(\omega)$. The original rule is \emph{statewise non-deficit} when $s(\omega)\ge0$ at every report profile. Write
\[
  G_j:=a^{-1}(j),\qquad g_j:=|G_j|.
\]
Under the assigned adverse-label dominance condition, the degenerate distribution on label $a(i)$ is a worst-case distribution for agent~$i$. Thus preservation requires $z_i^{a(i)}(\omega)=x_i(\omega)$, and fixed-adverse-label dominance requires $z_i^j(\omega)\le x_i(\omega)$ for every label $j$.

\begin{theorem}[Support-only partition repair]\label{thm:support-only-partition-repair}
Suppose $n\ge2$ and the adverse-label map $a:\N\to J$ is fixed ex ante and surjective.
For any partition with at least two nonempty groups, a support-only menu satisfying preservation, fixed-adverse-label dominance, and exact balance at every label exists if and only if the original payment rule is statewise non-deficit.
One such menu is
\begin{equation}\label{eq:partition_feasibility}
  z_i^j(\omega)=
  \begin{cases}
    x_i(\omega), & i\in G_j,\\[2pt]
    x_i(\omega)-\dfrac{s(\omega)}{n-g_j}, & i\notin G_j,
  \end{cases}
  \qquad j\in J.
\end{equation}
If the surplus is positive at some report profile, no one-label support-only menu can preserve and balance the original payments there. Hence two labels are necessary and sufficient for positive-surplus repair within this fixed-adverse-label class.
\end{theorem}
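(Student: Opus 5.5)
The proof splits into necessity, sufficiency via the displayed menu \eqref{eq:partition_feasibility}, and the one-label impossibility. All three rest on one observation from the setup preceding the theorem. Under support-only ambiguity each agent's ambiguity set is $\Delta(J)$. Fixed-adverse-label dominance makes the degenerate distribution $e_{a(i)}$ a worst case for agent~$i$, so the preservation requirement reduces to $z_i^{a(i)}(\omega)=x_i(\omega)$. Dominance itself then reads $z_i^j(\omega)\le x_i(\omega)$ for every label $j$. I will work state by state throughout, since every condition is pointwise in $\omega$.

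\emph{Necessity.} Fix a state $\omega$ and any label $j$. Budget balance at label $j$ together with dominance gives
\[
0=\sum_{i\in\N} z_i^j(\omega)\le \sum_{i\in\N} x_i(\omega)=s(\omega).
\]
So any admissible menu forces statewise non-deficit. I note that this uses only one label and needs no partition structure.

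\emph{Sufficiency.} Assume $s(\omega)\ge0$ and verify the menu \eqref{eq:partition_feasibility} condition by condition.
\begin{itemize}
\item \emph{Well-definedness.} With at least two nonempty groups, each $G_j$ is a proper subset of $\N$, so $n-g_j\ge1$ and the division is legitimate.
\item \emph{Preservation.} For $i\in G_{a(i)}$ we get $z_i^{a(i)}=x_i$ by construction.
\item \emph{Dominance.} For $j\neq a(i)$ we have $i\notin G_j$, so $z_i^j=x_i-s/(n-g_j)\le x_i$ precisely because $s\ge0$.
\item \emph{Balance.} At each label $j$, summing gives $\sum_i x_i-(n-g_j)\cdot s/(n-g_j)=s-s=0$.
\end{itemize}
It remains to confirm that $e_{a(i)}$ really is a worst case, so that Proposition~\ref{prop:pointwise_incentive} applies with this report-independent distribution. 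Expected payment is linear in $q$ on $\Delta(J)$ and is maximized at a vertex. Dominance makes $e_{a(i)}$ such a vertex at every report profile, so the worst case is report-independent. This is the analogue of Lemma~\ref{lem:qstar} with all floors removed.

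\emph{Label count.} With a single label $J=\{j\}$, surjectivity forces $a\equiv j$. Preservation then gives $z^j=x$, and balance gives $s(\omega)=0$. Equivalently, Proposition~\ref{prop:common_preserving_distribution} applies with the common distribution $e_j$. Hence one label cannot repair any state with positive surplus, and the two-label case of the construction shows that two labels suffice.

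The only delicate point is the dominance check. Dominance must hold for every non-adverse agent at every label, which is exactly where both $s\ge0$ and the requirement $n-g_j\ge1$ (at least two groups) enter. The rest is routine bookkeeping.
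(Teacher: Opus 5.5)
Your proposal is correct and follows essentially the same route as the paper's proof: the pointwise necessity chain $0=\sum_i z_i^j\le\sum_i x_i=s$ from dominance and balance, direct verification of the displayed menu, and the collapse of any one-label menu to $z=x$ and hence $s=0$. Your extra checks, that $n-g_j\ge1$ and that $e_{a(i)}$ is a report-independent worst case, are left implicit in the paper. They are harmless additions.
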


\begin{proof}
Fix a report profile and suppress it. If such a menu exists, dominance and support-only preservation give $z_i^j\le z_i^{a(i)}=x_i$ for every $i,j$. Exact balance therefore implies $0=\sum_i z_i^j\le\sum_i x_i=s$, so non-deficit is necessary. Conversely, suppose $s\ge0$. Formula \eqref{eq:partition_feasibility} preserves each adverse-label payment and satisfies dominance. For every label $j$,
\[
  \sum_{i\in\N}z_i^j
  =\sum_{i\in G_j}x_i+\sum_{i\notin G_j}x_i-s=0.
\]
Because the partition is fixed, applying the same formula at every report profile defines one menu of payment functions. With one label, preservation fixes every entry at $x_i$, so exact balance requires $s=0$. Any nontrivial bipartition supplies the two-label construction when $s>0$.
\end{proof}

The theorem permits signed original payments. Nonnegativity is needed only for the sharper cap results that follow.

\subsection{Menu size and endpoint transfer capacity}\label{subsec:partition_cap}

For any menu at a fixed report profile, define its \emph{absolute-transfer cap} as $\max_{i\in\N,\,j\in J}|z_i^j|$. Now fix a report profile with $x_i\ge0$ for every agent. For a fixed partition, let $M:=\max_i x_i$ and $S_j:=\sum_{i\in G_j}x_i$.

\begin{proposition}[Sharp support-only cap for a fixed partition]\label{prop:support-only-partition-cap}
Among support-only menus that use the fixed adverse-label partition $a$, preserve $x$, satisfy dominance, and balance every label, denote the minimum absolute-transfer cap by $m_{\mathrm{SO}}^\ast(x;a)$, where $\mathrm{SO}$ stands for support only. Then
\begin{equation}\label{eq:partition_cap}
  m_{\mathrm{SO}}^\ast(x;a)
  =\max\left\{M,\ \max_{j\in J}\frac{S_j}{n-g_j}\right\}.
\end{equation}
It is attained by
\begin{equation}\label{eq:partition_cap_menu}
  z_i^j=
  \begin{cases}
    x_i, & i\in G_j,\\[2pt]
    -\dfrac{S_j}{n-g_j}, & i\notin G_j.
  \end{cases}
\end{equation}
\end{proposition}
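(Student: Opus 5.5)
The plan is to prove the two inequalities behind \eqref{eq:partition_cap} separately, at a fixed report profile where $x_i\ge0$ for every agent. First comes a lower bound valid for every admissible menu. Then we check that the menu \eqref{eq:partition_cap_menu} is admissible and that its cap equals the right-hand side.

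\emph{Lower bound.} Take any support-only menu $\{z^j\}_{j\in J}$ on the partition $a$ that preserves $x$, satisfies dominance, and balances every label. As in the proof of Theorem~\ref{thm:support-only-partition-repair}, support-only preservation forces $z_i^{a(i)}=x_i$. So the cap is at least $\max_i|x_i|=M$, using nonnegativity. Now fix a label $j$. Balance gives
\[
\sum_{i\notin G_j}z_i^j=-\sum_{i\in G_j}z_i^j=-S_j.
\]
There are $n-g_j$ terms on the left, and $n-g_j\ge1$ because the partition has at least two nonempty groups. Hence some $i\notin G_j$ satisfies $z_i^j\le -S_j/(n-g_j)$. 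Since $S_j\ge0$, this gives $|z_i^j|\ge S_j/(n-g_j)$. Taking the maximum over $j$ and combining with the first bound yields $m_{\mathrm{SO}}^\ast(x;a)\ge\max\{M,\max_j S_j/(n-g_j)\}$. This is an averaging (pigeonhole) step; dominance is not even needed for the lower bound.

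\emph{Attainment.} Next we verify that \eqref{eq:partition_cap_menu} is admissible.
\begin{itemize}
\item \emph{Preservation.} For $i\in G_{a(i)}$ the entry is $x_i$, so preservation holds.
\item \emph{Balance.} Under label $j$ the total is $S_j-(n-g_j)\,S_j/(n-g_j)=0$.
\item \emph{Dominance.} For $i\notin G_j$ we have $z_i^j=-S_j/(n-g_j)\le0\le x_i$. This is the only place nonnegativity of payments is used.
\end{itemize}
The entries have absolute values $x_i\le M$ and $S_j/(n-g_j)$, so the cap equals the right-hand side of \eqref{eq:partition_cap} exactly.

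The main point requiring care is the lower bound. Non-adverse entries in a label are otherwise unconstrained, so the bound must come from the equal split being the min-max allocation of a fixed sum. This holds only because the non-adverse coordinates must jointly absorb $-S_j$ with $S_j\ge0$. Everything else is routine algebra. The "minimum" in the statement is attained because \eqref{eq:partition_cap_menu} meets the bound.
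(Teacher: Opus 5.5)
Your proposal is correct and follows the paper's own proof: preservation pins each adverse entry at $x_i$, giving the bound $M$; exact balance forces the $n-g_j$ outside-group entries of column $j$ to sum to $-S_j$, giving the bound $S_j/(n-g_j)$; and \eqref{eq:partition_cap_menu} attains both bounds, with dominance following from its nonpositive outside-group entries and nonnegative $x$. Your added observation that $n-g_j\ge1$ because the partition has at least two nonempty groups is a detail the paper leaves implicit.
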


\begin{proof}
Preservation fixes every adverse entry at $x_i$, so any cap is at least $M$. In column $j$, exact balance forces the $n-g_j$ outside-group payments to sum to $-S_j$; hence at least one has absolute value at least $S_j/(n-g_j)$. Formula \eqref{eq:partition_cap_menu} attains both lower bounds simultaneously. Its outside-group entries are nonpositive, so nonnegative original payments ensure dominance.
\end{proof}

The feasibility construction \eqref{eq:partition_feasibility} and the cap-attaining construction \eqref{eq:partition_cap_menu} solve different objectives and need not coincide.

\begin{corollary}[Uniform cap-preserving menu size]\label{cor:partition_parity}
A fixed partition retains the original cap $M$ for every nonnegative payment vector if and only if every group has size at most half the number of agents. Consequently, the smallest number of labels with this uniform guarantee is two when the number of agents is even and three when it is odd.
\end{corollary}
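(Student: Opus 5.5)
The plan is to apply Proposition~\ref{prop:support-only-partition-cap} directly. The cap is exactly $M$ for a given nonnegative $x$ precisely when $S_j/(n-g_j)\le M$ for every label $j$. So the corollary asks for the condition on group sizes $\{g_j\}$ under which $S_j\le (n-g_j)\max_i x_i$ holds for every $x\ge0$.

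For sufficiency, suppose $g_j\le n/2$ for every $j$. Then $n-g_j\ge g_j$. Each term of $S_j$ is at most $M$, so $S_j\le g_jM\le(n-g_j)M$. Hence \eqref{eq:partition_cap} equals $M$ for every nonnegative payment vector. The degenerate case $M=0$ is harmless, since then all $S_j=0$.

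For necessity, suppose some group satisfies $g_j>n/2$, so $g_j>n-g_j$. Take $x_i=1$ for $i\in G_j$ and $x_i=0$ otherwise. Then $M=1$ and $S_j=g_j$, so $S_j/(n-g_j)=g_j/(n-g_j)>1=M$, and the minimum cap strictly exceeds $M$. The theorem's standing requirement of at least two nonempty groups gives $n-g_j\ge1$, so the ratio is well defined.

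For the menu-size count, a partition of $n$ agents into $L$ groups each of size at most $\lfloor n/2\rfloor$ exists if and only if $L\lfloor n/2\rfloor\ge n$. One label is excluded, because a single group has size $n>n/2$ (and it is also excluded by Theorem~\ref{thm:support-only-partition-repair}). When $n$ is even, two groups of size $n/2$ work. When $n$ is odd, two groups would need sizes summing to $n$ with both at most $(n-1)/2$, which is impossible. Three groups of sizes at most $(n-1)/2$ suffice whenever $n\ge3$, for example $\{1\}$, a group of size $(n-1)/2$, and a remaining group of size $(n-1)/2$. The $n=3$ case is the singleton split $\{1\},\{2\},\{3\}$.

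The only delicate step is the necessity example. One must check that the adverse entries $z_i^{a(i)}=x_i$ are forced, which holds by preservation, so the cap cannot be reduced elsewhere. Proposition~\ref{prop:support-only-partition-cap} already guarantees this, so the argument is routine.
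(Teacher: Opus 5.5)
Your proposal is correct and follows essentially the same route as the paper: the bound $S_j\le g_jM\le(n-g_j)M$ combined with Proposition~\ref{prop:support-only-partition-cap} for sufficiency, a payment vector concentrated on an oversized group for necessity, and a parity count of group sizes for the menu-size claim. Your explicit three-group construction for odd $n$ and your check that $n-g_j\ge1$ just spell out steps the paper states briefly.
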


\begin{proof}
If $g_j\le n/2$, then $S_j\le g_jM\le(n-g_j)M$, and Proposition~\ref{prop:support-only-partition-cap} gives the cap $M$. If $g_j>n/2$, set every payment in $G_j$ equal to $M>0$ and every other payment to zero. Then $S_j/(n-g_j)>M$, so the uniform guarantee fails. A two-block partition has both blocks of size at most $n/2$ exactly when they have equal size. For odd $n\ge3$, three nonempty blocks of size at most $n/2$ exist.
\end{proof}

This corollary is a uniform statement over all nonnegative payment vectors for a fixed ex ante partition. It does not say that every individual odd-agent state requires three labels; a two-label partition can retain the original cap at particular states.

\subsection{The canonical endpoint menu}\label{subsec:canonical_endpoint}

The canonical menu takes $J=\N$ and $a(i)=i$. At the support-only endpoint, it can use more labels than the smallest feasible repair. Its appeal is instead that it is permutation-equivariant, assigns each agent a publicly fixed adverse label, gives equal rebates across all non-adverse labels, and connects directly to the common-exposure interior analysis developed below.

\begin{corollary}[Canonical support-only endpoint]\label{cor:frequency_unrestricted}
Suppose $n\ge2$, the pure payment rule is statewise non-deficit, and each ambiguity set contains the degenerate distribution $e_i$ on agent~$i$'s adverse label. The canonical menu
\begin{equation}\label{eq:frequency_unrestricted_repair}
  \widehat x_i^j(\omega)
  :=x_i(\omega)-\mathbf1\{j\neq i\}\frac{s(\omega)}{n-1}
\end{equation}
satisfies dominance, pointwise preservation, and exact balance under every label. Every realized payment is weakly below the corresponding original payment, so rule-by-rule ex post IR is inherited from an ex post individually rational original mechanism. At any state with nonnegative original payments, the menu has the minimum possible absolute-transfer cap, equal to the largest original payment.
\end{corollary}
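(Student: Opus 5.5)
The plan is to treat the canonical menu as the special case of Theorem~\ref{thm:support-only-partition-repair} and Proposition~\ref{prop:support-only-partition-cap} in which $J=\N$, $a(i)=i$, and every group is the singleton $G_j=\{j\}$, so that $g_j=1$ and $n-g_j=n-1\ge1$. Substituting $g_j=1$ into \eqref{eq:partition_feasibility} reproduces \eqref{eq:frequency_unrestricted_repair} exactly. This partition has $n\ge2$ nonempty groups, so dominance, adverse-label preservation, and exact balance at every label are inherited from that theorem under statewise non-deficit. For completeness I will also check the algebra directly at a fixed report profile. The diagonal entry is $\widehat x_i^i=x_i$. Each off-diagonal entry is $x_i-s/(n-1)\le x_i$ because $s\ge0$. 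Column $j$ sums to $s-(n-1)\,s/(n-1)=0$.

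The preservation step has to be handled with the stated hypothesis rather than the full simplex. Here each $\Delta q_i$ is only assumed to contain $e_i$, so I will argue the worst case directly. By dominance, $\sum_j q_i^j\widehat x_i^j\le \widehat x_i^i=x_i$ for every $q_i\in\Delta q_i$. The value $x_i$ is attained at $e_i\in\Delta q_i$, which is report-independent. So $q_i^\ast=e_i$ is a report-independent worst-case distribution, and \eqref{eq:preserve} holds with $q_i^\ast=e_i$. Proposition~\ref{prop:pointwise_incentive} then supplies the incentive consequences if they are wanted.

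Rule-by-rule individual rationality follows immediately from $\widehat x_i^j(\omega)\le x_i(\omega)$. Since the allocation is label-independent, $v_i(g(\theta),\theta)-\widehat x_i^j(\theta)\ge v_i(g(\theta),\theta)-x_i(\theta)\ge0$ for every label $j$ whenever the original mechanism is ex post IR.

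For the cap claim, fix a state with all $x_i\ge0$ and apply Proposition~\ref{prop:support-only-partition-cap} with $g_j=1$ and $S_j=x_j$. This gives the minimum cap $\max\{M,\max_j x_j/(n-1)\}=M$, since $n-1\ge1$. The remaining step, and the main point needing care, is that the canonical menu \eqref{eq:frequency_unrestricted_repair} is not the cap-attaining menu \eqref{eq:partition_cap_menu}, so I must bound its own entries. Diagonal entries equal $x_i\in[0,M]$. For an off-diagonal entry with $i\ne j$, I use $s\ge x_i$ and $s=x_i+\sum_{h\ne i}x_h\le x_i+(n-1)M$. These give $x_i-s/(n-1)\ge x_i-x_i/(n-1)-M\ge -M$ and $x_i-s/(n-1)\le x_i\le M$. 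Hence every absolute entry is at most $M$. The lower bound $M$ always holds because preservation fixes the diagonal, so the canonical menu attains the minimum.
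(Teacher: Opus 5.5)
Your proposal is correct and follows the paper's proof. You specialize Theorem~\ref{thm:support-only-partition-repair} to the singleton partition, identify $e_i$ as a report-independent worst case via dominance, read rule-by-rule IR off $\widehat x_i^j\le x_i$, and combine the lower bound $M$ from Proposition~\ref{prop:support-only-partition-cap} with a direct check that the canonical entries lie in $[-M,M]$; your inequality $x_i-s/(n-1)\ge -M$ just spells out a step the paper states in one line, and the remark $s\ge x_i$ is not needed for it.
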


\begin{proof}
This is Theorem~\ref{thm:support-only-partition-repair} for the singleton partition. Because $e_i$ is admissible and label $i$ is adverse, it is a report-independent worst-case distribution and the evaluated payment is $\widehat x_i^i=x_i$. Formula \eqref{eq:frequency_unrestricted_repair} also gives $\widehat x_i^j\le x_i$ under every label. For nonnegative $x$, its entries lie between minus and plus $M:=\max_i x_i$; Proposition~\ref{prop:support-only-partition-cap} supplies the matching lower bound $M$.
\end{proof}

The last conclusion concerns the minimum cap, not the number of labels. The endpoint package is specific to pure max--min evaluation; the Vickrey example below explains the behavioral limitation, and Appendix~\ref{subsec:homogeneous} provides recalibrated alternatives.

\subsection{Positive exposure and feasibility}\label{subsec:positive_exposure}

I now carry the canonical menu into the common-reference interior. Appendix~\ref{subsec:fullclassification} provides the complete reduced-form classification beyond this branch.

\begin{theorem}[Exact budget-balance transformation and closed form]\label{thm:bbtrans}
Fix report-independent preserving weights $\{q_i^\ast\}_{i \in \N} \subseteq \Delta(\N)$.
Suppose these weights satisfy the reference-wedge condition in the sense of Definition~\ref{def:refwedge}.
Fix a pure payment rule $x$.
There exists a BB transformation of $x$ (i.e., a payment-rule family $\{x^j\}_{j \in \N}$ satisfying \eqref{eq:bb}, \eqref{eq:preserve}, and \eqref{eq:monotone}) if and only if $x$ is \emph{statewise non-deficit}:
\begin{equation}\label{eq:nodef}
  \sum_{i \in \N} x_i(\omega) \ge 0 \quad \text{for all } \omega \in \OmegaSet.
\end{equation}
Whenever \eqref{eq:nodef} holds, the transformation
\begin{equation}\label{eq:closed_form}
  x_i^j(\omega)
  =
  x_i(\omega)
  +
  \frac{\mathbf{1}\{j=i\} - q_i^{i\ast}}{R} \, s(\omega)
\end{equation}
implements the repair, where $\mathbf{1}\{j=i\}$ is the indicator of $j=i$.
\end{theorem}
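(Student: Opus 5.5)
The plan has two parts: check directly that the closed form \eqref{eq:closed_form} works under \eqref{eq:nodef}, and prove necessity with a column-sum argument that uses the common reference $k$ from the reference-wedge condition. Throughout I fix a report profile $\omega$ and suppress it, since all three requirements \eqref{eq:bb}, \eqref{eq:preserve}, \eqref{eq:monotone} are statewise. The only global input is that the weights $q_i^\ast$ and $R$ are report-independent. Because of this, the formula applied at every $\omega$ defines a single menu of payment functions, exactly as in Theorem~\ref{thm:support-only-partition-repair}.

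\textbf{Sufficiency.} Assume $s\ge0$ and $R>0$, so \eqref{eq:closed_form} is well defined. I verify the three properties in turn.
\begin{itemize}
\item Balance: summing \eqref{eq:closed_form} over $i$ for a fixed label $j$ gives $s+(1-\sum_i q_i^{i\ast})s/R$. Since $\sum_i q_i^{i\ast}=1+R$, this equals $s-s=0$.
\item Preservation: averaging \eqref{eq:closed_form} over $j$ with weights $q_i^{j\ast}$ gives $x_i+(q_i^{i\ast}-q_i^{i\ast})s/R=x_i$, because the weights sum to one and the indicator picks out $q_i^{i\ast}$.
\item Dominance: for $j\neq i$ we get $x_i^j-x_i^i=-s/R\le0$, which is exactly where non-deficit and $R>0$ are used.
\end{itemize}
This step is routine.

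\textbf{Necessity.} Suppose some family $\{x^j\}$ satisfies \eqref{eq:bb}, \eqref{eq:preserve}, and \eqref{eq:monotone}. Define the dominance gaps $d_i^j:=x_i^i-x_i^j$, which are nonnegative by \eqref{eq:monotone}, and set $T:=\sum_i x_i^i$.
\begin{itemize}
\item Exact balance in column $j$ reads $\sum_i x_i^j=T-\sum_{i\neq j}d_i^j=0$. So every column has the same total gap: $D_j:=\sum_{i\neq j}d_i^j=T$ for all $j$.
\item Preservation rewrites each agent's row as $x_i=x_i^i-\sum_{j\neq i}q_i^{j\ast}d_i^j$.
\item Summing over agents and exchanging the order of summation gives $s=T-\sum_j\sum_{i\neq j}q_i^{j\ast}d_i^j$.
\item Now apply the left inequality of \eqref{eq:referencewedge}, $q_i^{j\ast}\le k^j$ for $i\neq j$, together with $d_i^j\ge0$. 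This yields $s\ge T-\sum_j k^jD_j=T-T\sum_j k^j=0$.
\end{itemize}
This holds at every profile, which is \eqref{eq:nodef}.

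The main obstacle is the necessity direction. It is the only place the common reference vector $k$ from Definition~\ref{def:refwedge} enters, and one must see that the right quantity to bound is the column total of dominance gaps rather than the individual entries. Once balance is recognized as forcing equal column gap totals $D_j=T$, the off-diagonal bound $q_i^{j\ast}\le k^j$ together with $\sum_j k^j=1$ closes the argument. Strictness $R>0$ is needed only to define the closed form in the sufficiency direction, not for necessity.
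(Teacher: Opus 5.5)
Your proof is correct and follows essentially the paper's argument. The sufficiency checks are identical. Your necessity step uses balance to get equal column gap totals $D_j=T$ and then bounds with $q_i^{j\ast}\le k^j$ and $\sum_j k^j=1$; this is just a reorganization of the paper's identity $s=\sum_i\sum_{j\neq i}(k^j-q_i^{j\ast})(x_i^i-x_i^j)$, which the paper obtains by subtracting the $k$-weighted balance identities from the summed preservation identities.
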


\begin{proof}
Fix a report profile and suppress it. For necessity, sum the preservation identities and subtract the $k$-weighted sum of the balance identities. Since probability rows sum to one,
\[
  s
  =
  \sum_{i\in\N}\sum_{j\neq i}
  (k^j-q_i^{j\ast})(x_i^i-x_i^j).
\]
Every coefficient is nonnegative under the reference wedge, and every spread is nonnegative under dominance, so $s\ge0$.

For sufficiency, suppose $s\ge0$ and use \eqref{eq:closed_form}. Every diagonal--off-diagonal spread equals $s/R\ge0$, proving dominance. For each agent,
\[
  \sum_{j\in\N}q_i^{j\ast}x_i^j
  =
  x_i+\frac{s}{R}
  \sum_{j\in\N}q_i^{j\ast}
  \bigl(\mathbf{1}\{j=i\}-q_i^{i\ast}\bigr)
  =x_i,
\]
so preservation holds. Finally, for each label $j$,
\[
  \sum_{i\in\N}x_i^j
  =
  s+\frac{1-\sum_iq_i^{i\ast}}{R}s
  =0.
\]
The argument is pointwise and therefore applies at every report profile.
\end{proof}

\begin{remark}[Scope of the if-and-only-if result]
Theorem~\ref{thm:bbtrans} is an if-and-only-if result within the BB-transformation class, whose fixed-adverse-label dominance restriction gives labels a stable institutional meaning and supports a report-independent worst-case minimizer under the lower-bound foundation. Its necessity direction is not an impossibility theorem for unrestricted ambiguous mechanisms. Appendix~\ref{app:scope_counterexample} gives a two-agent deficit repair that becomes feasible when dominance is dropped. A complete classification of the problem without dominance is not claimed here.
\end{remark}

The lower-bound foundation turns this reduced-form condition into an extensive-margin characterization. Only off-diagonal exposure matters: own-label exposure alone does not move the preserving weight away from the common reference. The next theorem is the heterogeneous reduced-form extension of the common-contamination benchmark; retaining it in the main text isolates the minimal ambiguity needed for repair.

\begin{theorem}[At least one exposed agent is necessary and sufficient]\label{thm:oneagent}
Let the preserving weights be generated by \eqref{eq:Deltaqi}--\eqref{eq:qstar} from a full-support common reference $k$ and arbitrary off-diagonal exposures $\varepsilon_i^j\in[0,1]$ for every $j\neq i$; own-label exposures are unrestricted and play no role in the result. Fix a pure payment rule $x$.
\begin{enumerate}
\item[(a)] If $\varepsilon_i^j=0$ for every $i$ and every $j\neq i$, a BB transformation of $x$ exists if and only if $x$ is statewise \emph{balanced}: $\sum_i x_i(\omega)=0$ for all $\omega$.
\item[(b)] If $\varepsilon_i^j>0$ for at least one pair $i,j$ with $j\neq i$, a BB transformation of $x$ exists if and only if $x$ is statewise non-deficit: $\sum_i x_i(\omega)\ge0$ for all $\omega$.
\end{enumerate}
Consequently, for a statewise non-deficit rule that raises a positive surplus somewhere, the existence of at least one agent with positive exposure to at least one off-diagonal label is necessary and sufficient for exact repair within the maintained class.
\end{theorem}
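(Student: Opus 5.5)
The plan is to reduce both parts to results already in hand: the wedge computation that follows Lemma~\ref{lem:qstar}, Proposition~\ref{prop:common_preserving_distribution}, and Theorem~\ref{thm:bbtrans}. Under \eqref{eq:Deltaqi}--\eqref{eq:qstar}, the text after Lemma~\ref{lem:qstar} gives three facts. First, $q_i^{j\ast}=(1-\varepsilon_i^j)k^j\le k^j$ for $j\neq i$. Second, $q_j^{j\ast}=k^j+\sum_{h\neq j}\varepsilon_j^h k^h\ge k^j$. Third, $R=\sum_j k^j\sum_{i\neq j}\varepsilon_i^j$. So the weak inequalities \eqref{eq:referencewedge} always hold with the common reference $k$ as witness. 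Only the sign of $R$ needs to be settled. Own-label exposures $\varepsilon_i^i$ do not enter \eqref{eq:qstar}, because the diagonal entry is the residual mass. This is why they play no role.

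For part (b), suppose $\varepsilon_i^j>0$ for some $j\neq i$. Because $k$ has full support, $k^j>0$, and so $R\ge k^j\varepsilon_i^j>0$. Hence the weights satisfy the reference-wedge condition of Definition~\ref{def:refwedge} with witness $k$. Theorem~\ref{thm:bbtrans} then gives existence if and only if $x$ is statewise non-deficit, with \eqref{eq:closed_form} as an explicit repair.

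For part (a), all off-diagonal exposures vanish, so $q_i^{j\ast}=k^j$ for $j\neq i$ and $q_i^{i\ast}=1-\sum_{h\neq i}k^h=k^i$. Every agent therefore has the same preserving distribution $\bar q=k$. Necessity of balance follows from Proposition~\ref{prop:common_preserving_distribution}: summing the preservation identities and using \eqref{eq:bb} gives $s(\omega)=0$ at every state. For sufficiency, suppose $s\equiv0$ and take the trivial menu $x^j=x$ for every $j$. Dominance holds with equality, preservation is immediate, and every label balances because $s=0$. The necessity direction does not even need dominance.

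The ``consequently'' clause compares the two cases. Take a statewise non-deficit $x$ with $s(\omega)>0$ at some $\omega$. Then $x$ is not statewise balanced, so by (a) no BB transformation exists when all off-diagonal exposures vanish. By (b), one exists as soon as a single off-diagonal exposure is positive. The main point needing care is not technical but interpretive. One must confirm that Lemma~\ref{lem:qstar} makes the weights \eqref{eq:qstar} the behaviorally relevant worst-case distributions for every candidate menu in the BB class, since dominance \eqref{eq:monotone} is part of that class. Only then is it legitimate to apply Theorem~\ref{thm:bbtrans} and Proposition~\ref{prop:common_preserving_distribution} with these fixed weights.
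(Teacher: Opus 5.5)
Your proposal is correct and matches the paper's proof. Part (a) uses Proposition~\ref{prop:common_preserving_distribution} with $q_i^\ast=k$ for necessity and the constant family $x^j\equiv x$ for sufficiency, and part (b) uses $R=\sum_j k^j\sum_{i\neq j}\varepsilon_i^j>0$ under full support to invoke Theorem~\ref{thm:bbtrans} with witness $k$.
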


\begin{proof}
(a) If all off-diagonal exposures vanish, \eqref{eq:qstar} gives $q_i^\ast=k$ for every $i$, so necessity is Proposition~\ref{prop:common_preserving_distribution}. Sufficiency is the constant family $x^j\equiv x$, which satisfies \eqref{eq:monotone} with equality, \eqref{eq:preserve} trivially, and \eqref{eq:bb} because $\sum_i x_i(\omega)=0$ at every state.

(b) The expression following \eqref{eq:qstar} gives
\[
  R=\sum_{j\in\N}k^j\sum_{i\neq j}\varepsilon_i^j>0
\]
whenever some off-diagonal $\varepsilon_i^j>0$ and $k$ has full support. Hence the preserving weights satisfy the reference-wedge condition of Definition~\ref{def:refwedge} with witness $k$, and Theorem~\ref{thm:bbtrans} applies. The argument remains valid when an exposure is at its upper endpoint.
\end{proof}

The theorem separates the extensive margin from the intensive margins studied below. A single arbitrarily small positive off-diagonal exposure changes which pure rules are repairable; its magnitude and allocation across agents determine the transfer burden and incidence.

At the support-only endpoint, \eqref{eq:closed_form} becomes the canonical menu \eqref{eq:frequency_unrestricted_repair}. Under the lower-bound foundation, $R=\sum_jk^j\sum_{i\neq j}\varepsilon_i^j$: the scalar measures aggregate off-diagonal exposure and the construction is unavailable when it vanishes. The reference-wedge inequalities are needed for the necessity direction; the closed-form construction itself uses only $R>0$.

\subsection{Exact calibration}\label{subsec:exact_calibration}

Return to the common-contamination benchmark in \eqref{eq:common_contamination_set}--\eqref{eq:common_contamination_weights} and fix its reference distribution $k\in\Delta(\N)$.
When $k$ has full support, every label receives positive preserving weight for every interior exposure level. The next result separates exact calibration from approximate robustness.

\begin{proposition}[No nontrivial exact calibration across exposure levels]\label{prop:no-uniform-calibration}
Fix a full-support reference $k$ and two distinct exposure levels $\varepsilon,\varepsilon'\in[0,1]$. Suppose the same canonical menu satisfies fixed-adverse-label dominance and exact balance and pointwise preserves the same original payment rule at both $q_i^\ast(\varepsilon)$ and $q_i^\ast(\varepsilon')$ for every agent. Then every agent's payment is constant across labels, and the original payment rule is already statewise balanced.
\end{proposition}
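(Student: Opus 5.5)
Fix a report profile and suppress it. Write $d_i^j:=x_i^i-x_i^j\ge0$ for the spread of agent~$i$ between her adverse label and label $j$; this is nonnegative by fixed-adverse-label dominance \eqref{eq:monotone}. Under the common-contamination weights \eqref{eq:common_contamination_weights}, preservation at level $\varepsilon$ reads
\[
  x_i=\sum_j\bigl((1-\varepsilon)k^j+\varepsilon\mathbf1\{j=i\}\bigr)x_i^j
  =(1-\varepsilon)\sum_jk^jx_i^j+\varepsilon x_i^i .
\]
Rewriting in terms of spreads gives $x_i=x_i^i-(1-\varepsilon)\sum_{j\neq i}k^jd_i^j$. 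The plan is to subtract this identity from the same identity at $\varepsilon'$. That difference yields
\[
  (\varepsilon'-\varepsilon)\sum_{j\neq i}k^jd_i^j=0 .
\]
Since $\varepsilon\neq\varepsilon'$, the $k$-weighted spread sum vanishes for every $i$. Because $k$ has full support and each $d_i^j\ge0$, every spread $d_i^j$ is zero. Hence $x_i^j=x_i^i$ for all $j$, so every agent's payment is constant across labels.

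With constant rows, preservation gives $x_i=x_i^i=x_i^j$ for every label $j$. Exact balance \eqref{eq:bb} at any label then gives $\sum_ix_i=\sum_ix_i^j=0$. Repeating the argument at every report profile shows that the original rule is statewise balanced.

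Alternatively, one could invoke the necessity identity in the proof of Theorem~\ref{thm:bbtrans}. That identity gives $s=\sum_i\sum_{j\neq i}(k^j-q_i^{j\ast})d_i^j=\varepsilon\sum_i\sum_{j\neq i}k^jd_i^j$ at level $\varepsilon$, and analogously at $\varepsilon'$. Equating the two expressions reproduces the same vanishing weighted sum, now aggregated over agents. The nonnegativity and full-support step then again kills every spread. I expect no serious obstacle. The only delicate points are the following two. First, the endpoints $\varepsilon=0$ or $\varepsilon'=1$ must not break the subtraction; they do not, since only $\varepsilon\neq\varepsilon'$ is used. Second, dominance is genuinely needed to conclude that each individual spread vanishes rather than merely that the spreads cancel.
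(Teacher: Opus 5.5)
Your proof is correct and follows the paper's argument. Subtracting the two preservation identities gives $\sum_{j\neq i}k^j(x_i^i-x_i^j)=0$, which is the paper's identity $x_i^i=\sum_j k^j x_i^j$ written in spread form; dominance and full support of $k$ then force constant rows, and preservation plus exact balance give $\sum_i x_i=0$ (your alternative via the necessity identity of Theorem~\ref{thm:bbtrans} is valid but only aggregates this same step over agents).
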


\begin{proof}
Fix a report profile and an agent. Subtracting the two preservation identities and using $\varepsilon\neq\varepsilon'$ gives
\[
  x_i^i=\sum_{j\in\N}k^j x_i^j.
\]
Dominance makes every term on the right weakly below $x_i^i$, and full support of $k$ therefore forces $x_i^j=x_i^i$ for every label $j$. Preservation then gives $x_i^j=x_i$ throughout the row. Exact balance of any column implies $\sum_i x_i=0$. Applying the argument at every report profile proves the claim.
\end{proof}

Thus a nontrivial exact repair is calibrated to one exposure level. This proposition does not preclude recalibrating the menu when the exposure changes, nor does it rule out approximate preservation. Appendix~\ref{app:calibration} gives the exact sensitivity formula, and Proposition~\ref{prop:eta_robust} gives the approximate feasibility boundary.

The behavioral limitation is not all-or-nothing. Appendix Corollary~\ref{cor:alpha_repair} studies a constant mixture of worst- and best-case evaluation. At a positive-surplus state, exact repair remains possible precisely when the worst-case weight exceeds $1/n$, although the required spread diverges as that threshold is approached. The no-additional-cap and rule-by-rule participation package at the support-only endpoint remains specific to pure max--min.

\long\def\SecondarySpreadAndRobustness{%
\begin{proposition}[Minimal adverse-label spread]\label{prop:closedform_optimal}
Suppose the preserving weights satisfy the reference-wedge condition with witness $k$ and let $R=\sum_i q_i^{i\ast}-1>0$.
Fix a state $\omega$ with surplus $s(\omega)\ge0$.
For any BB transformation $\{x^j\}$ of $x$, define the maximum adverse-label spread at $\omega$ by
\[
  G(\omega;\{x^j\})
  :=
  \max_{i\in\N,\; j\neq i}
  \bigl(x_i^i(\omega)-x_i^j(\omega)\bigr).
\]
Then every BB transformation satisfies
\[
  G(\omega;\{x^j\})\ge \frac{s(\omega)}{R}.
\]
The closed-form transformation in \eqref{eq:closed_form} attains this lower bound at every state. Hence it minimizes the maximum adverse-label spread among all BB transformations.
\end{proposition}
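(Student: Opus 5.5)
The lower bound comes from reusing the necessity identity in the proof of Theorem~\ref{thm:bbtrans}, with each spread bounded above by the maximal one. Fix the state $\omega$ and suppress it. Summing the preservation identities \eqref{eq:preserve} over agents and subtracting the $k$-weighted sum of the balance identities \eqref{eq:bb} gives, as in that proof,
\[
  s=\sum_{i\in\N}\sum_{j\neq i}(k^j-q_i^{j\ast})\bigl(x_i^i-x_i^j\bigr).
\]
Under the reference wedge \eqref{eq:referencewedge}, each coefficient $k^j-q_i^{j\ast}$ is nonnegative. Under dominance \eqref{eq:monotone}, each spread $x_i^i-x_i^j$ is nonnegative and at most $G:=G(\omega;\{x^j\})$. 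Hence
\[
  s\le G\sum_{i\in\N}\sum_{j\neq i}(k^j-q_i^{j\ast}).
\]

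Next I identify the coefficient sum with $R$. Because $q_i^\ast$ is a probability vector, $\sum_{j\neq i}q_i^{j\ast}=1-q_i^{i\ast}$. Because $k$ is a probability vector, $\sum_{j\neq i}k^j=1-k^i$. The inner sum for agent $i$ is therefore $(1-k^i)-(1-q_i^{i\ast})=q_i^{i\ast}-k^i$. Summing over $i$ gives $\sum_i q_i^{i\ast}-1=R$. Thus $s\le G R$, and since $R>0$ we obtain $G\ge s/R$.

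For attainment, I use the closed form \eqref{eq:closed_form}. For every $i$ and $j\neq i$, the spread $x_i^i-x_i^j$ equals $\bigl[(1-q_i^{i\ast})-(0-q_i^{i\ast})\bigr]s/R=s/R$. So every spread equals $s/R$, the maximum spread is exactly $s/R$, and the bound holds with equality. Theorem~\ref{thm:bbtrans} already verifies that this family is a BB transformation when $s\ge0$. Because the argument is statewise, the closed form attains the bound at every state simultaneously, and it therefore minimizes $G$ among all BB transformations.

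The only delicate step is the identity for $s$. I will re-derive it explicitly rather than just cite it. Writing $\sum_i x_i=\sum_i\sum_j q_i^{j\ast}x_i^j$ and $0=\sum_j k^j\sum_i x_i^j$, the difference is $\sum_i\sum_j(q_i^{j\ast}-k^j)x_i^j$. Within each row the coefficients sum to zero, so I may subtract $x_i^i$ times zero and rewrite the row as $\sum_{j\neq i}(k^j-q_i^{j\ast})(x_i^i-x_i^j)$. This yields the displayed identity, and the rest of the argument is routine.
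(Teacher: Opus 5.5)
Your proof is correct and follows the paper's argument: subtract the $k$-weighted balance identities from the summed preservation identities to get $s=\sum_i\sum_{j\neq i}(k^j-q_i^{j\ast})(x_i^i-x_i^j)$, note that the wedge coefficients are nonnegative and sum to $R$, and observe that every spread of \eqref{eq:closed_form} equals $s/R$. Your explicit derivation of the identity and of the coefficient sum supplies details the paper only asserts.
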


\begin{proof}
Fix a state and suppress it. Subtracting the $k$-weighted balance identities from the sum of the preservation identities gives
\[
  s=
  \sum_{i\in\N}\sum_{j\neq i}(k^j-q_i^{j\ast})
  \bigl(x_i^i-x_i^j\bigr).
\]
The wedge makes every coefficient nonnegative, and their sum is $R$. Hence $s\le RG$. Under \eqref{eq:closed_form}, every spread equals $s/R$, so the bound is attained.
\end{proof}

Proposition~\ref{prop:calibration_sensitivity} raises a complementary robustness question: exact preservation holds only at the designer's calibration point, but how does feasibility change when small preservation errors are allowed? A uniform preservation tolerance moves the feasibility boundary continuously, at Lipschitz rate $n$.

\begin{proposition}[Robustness of feasibility to approximate preservation]\label{prop:eta_robust}
Fix $\eta\ge0$ and suppose the preserving weights satisfy the reference-wedge condition. Then a payment-rule family satisfying \eqref{eq:monotone}, \eqref{eq:bb}, and the $\eta$-approximate preservation condition
\[
  \Bigl|\sum_{j\in\N}q_i^{j\ast}x_i^j(\omega)-x_i(\omega)\Bigr|\;\le\;\eta
  \qquad\text{for all }i,\omega
\]
exists if and only if $s(\omega)\ge-n\eta$ for all $\omega$. At $\eta=0$ this is Theorem~\ref{thm:bbtrans}.
\end{proposition}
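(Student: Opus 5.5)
The argument works one report profile at a time, exactly like Theorem~\ref{thm:bbtrans}, since every condition is pointwise in $\omega$. I suppress $\omega$ and prove the two directions separately.

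For necessity, I would repeat the identity from the proof of Theorem~\ref{thm:bbtrans}, now keeping track of the preservation errors. For each agent define
\[
  \delta_i:=\sum_{j\in\N}q_i^{j\ast}x_i^j-x_i ,
  \qquad |\delta_i|\le\eta .
\]
Summing $x_i=\sum_j q_i^{j\ast}x_i^j-\delta_i$ over agents and subtracting the $k$-weighted sum of the balance identities $0=\sum_j k^j\sum_i x_i^j$ gives
\[
  s=\sum_{i\in\N}\sum_{j\neq i}(k^j-q_i^{j\ast})(x_i^i-x_i^j)-\sum_{i\in\N}\delta_i .
\]
Here I use that rows of $q_i^\ast$ and $k$ sum to one, which rewrites $\sum_j(q_i^{j\ast}-k^j)x_i^j$ as $\sum_{j\neq i}(k^j-q_i^{j\ast})(x_i^i-x_i^j)$. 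The reference wedge \eqref{eq:referencewedge} makes every coefficient nonnegative, and dominance \eqref{eq:monotone} makes every spread nonnegative. The double sum is therefore nonnegative, so $s\ge-\sum_i\delta_i\ge-n\eta$.

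For sufficiency, suppose $s\ge-n\eta$. I would shift the target payments so that they become non-deficit and then apply the exact closed form. The natural choice is $\tilde x_i:=x_i+c$ with $c:=\min\{\eta,\max\{0,-s/n\}\}$. Then $|c|\le\eta$, and $\tilde s=s+nc\ge0$: if $s\ge0$ take $c=0$, and otherwise $c=-s/n\le\eta$. Apply formula \eqref{eq:closed_form} to $\tilde x$. By Theorem~\ref{thm:bbtrans}, which uses only $R>0$ for the construction and $\tilde s\ge0$ for dominance, the resulting family satisfies \eqref{eq:bb} and \eqref{eq:monotone}. It also preserves $\tilde x$ exactly, so the preservation error relative to $x$ is $|c|\le\eta$. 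Because $c$ is defined statewise, the construction yields a menu of payment functions.

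The main obstacle is only getting the sign bookkeeping right in the necessity identity, since the errors $\delta_i$ can have either sign. The worst case $\delta_i=\eta$ for all $i$ produces exactly the bound $-n\eta$, and the sufficiency construction shows it is attained. The case $\eta=0$ reduces verbatim to Theorem~\ref{thm:bbtrans}.
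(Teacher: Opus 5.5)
Your proposal is correct and follows the paper's proof essentially step for step. For necessity, you use the same wedge--dominance identity with the preservation errors carried along; the paper phrases this as exact preservation of the shifted vector $x-\delta$ and uses the opposite sign convention for $\delta_i$. For sufficiency, you shift every payment by $\max\{0,-s(\omega)\}/n$ and apply the closed form of Theorem~\ref{thm:bbtrans}, which is exactly the paper's construction; the outer $\min\{\eta,\cdot\}$ in your choice of $c$ is redundant under $s\ge-n\eta$ but harmless.
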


\begin{proof}
See Appendix~\ref{app:eta_robust}.
\end{proof}

}
\long\def\BayesianExtensionResult{%
For this extension, fix the full-support prior $p$ introduced in Section~\ref{sec:setup} and maintain that ambiguity about labels is separable from uncertainty about types. A pure mechanism is \emph{Bayesian incentive compatible} (BIC) if truthful reporting maximizes each type's interim expected utility. The menu is \emph{max--min BIC} when the same comparison first takes the worst case over label distributions. With outside options normalized to zero, \emph{interim individual rationality} (IIR) and \emph{max--min IIR} use the corresponding truthful interim payoffs. Report-blindness by itself says only that the label-selection mapping does not use reports; the separability maintained here is the additional condition that permits the pointwise worst-case distribution to be integrated over types.

\begin{proposition}[Bayesian and interim extension]\label{prop:preserveIC}
Fix an allocation rule $g$ and a prior $p$.
Suppose $\{x^j\}_{j \in \N}$ is a BB transformation of $x$ and, for each agent $i$, the weight $q_i^\ast$ is a pointwise worst-case minimizer at every report profile $r\in\OmegaSet$---the form delivered by Lemma~\ref{lem:qstar} under \eqref{eq:monotone}.
Then:
\begin{enumerate}
  \item[(a)] $(g, x)$ is BIC if and only if $(g, \{x^j\})$ is max--min BIC.
  \item[(b)] $(g, x)$ satisfies IIR if and only if $(g, \{x^j\})$ satisfies max--min IIR.
\end{enumerate}
\end{proposition}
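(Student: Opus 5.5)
The plan is to reduce the interim statement to the pointwise identity already established in Proposition~\ref{prop:pointwise_incentive}. The one new ingredient is to justify exchanging the minimum over label distributions with the expectation over the other agents' types. Fix an agent $i$, a true type $\theta_i$, and a report $r_i$. Let the other agents report truthfully.

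The interim max--min payoff in \eqref{eq:mmeu} is
\[
  \min_{q_i\in\Delta q_i}\E_p\Bigl[v_i(g(r_i,\theta_{-i}),\theta)-\sum_{j}q_i^j x_i^j(r_i,\theta_{-i})\Bigm|\theta_i\Bigr].
\]
Because $q_i$ does not depend on $\theta_{-i}$, which is the separability maintained in this extension, linearity of the finite conditional expectation lets me rewrite the objective as
\[
  \E_p[v_i\mid\theta_i]-\sum_j q_i^j\,\E_p[x_i^j(r_i,\theta_{-i})\mid\theta_i].
\]
The objective is therefore linear in $q_i$.

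The main step is to show that $q_i^\ast$ attains this interim minimum. By hypothesis, $q_i^\ast$ maximizes $\sum_j q_i^j x_i^j(r_i,\theta_{-i})$ over $\Delta q_i$ at every profile $(r_i,\theta_{-i})$. This is the report-independent, profile-independent minimizer delivered by Lemma~\ref{lem:qstar} under \eqref{eq:monotone}. For any $q_i\in\Delta q_i$, the pointwise inequality
\[
  \sum_j q_i^j x_i^j(r_i,\theta_{-i})\le\sum_j q_i^{j\ast}x_i^j(r_i,\theta_{-i})
\]
holds at every $\theta_{-i}$. Integrating it against $p(\cdot\mid\theta_i)$ preserves the inequality. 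Hence the interim minimum equals the value at $q_i^\ast$, with no minimax theorem needed. The point to be careful about is that a common maximizer across all $\theta_{-i}$ exists. This is exactly why the pointwise form, and not mere existence of some worst case at each profile, is assumed.

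Next, substitute pointwise preservation \eqref{eq:preserve}: $\sum_j q_i^{j\ast}x_i^j(r_i,\theta_{-i})=x_i(r_i,\theta_{-i})$ for every $\theta_{-i}$. The interim max--min payoff from reporting $r_i$ then equals the pure interim payoff $\E_p[v_i(g(r_i,\theta_{-i}),\theta)-x_i(r_i,\theta_{-i})\mid\theta_i]$, for every $\theta_i$ and $r_i$.

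Part (a) follows by applying this identity to the truthful report $r_i=\theta_i$ and to every deviation $r_i$. The two collections of interim comparisons then coincide inequality by inequality, so BIC of $(g,x)$ holds exactly when $(g,\{x^j\})$ is max--min BIC. Part (b) follows by applying the identity only at $r_i=\theta_i$. The truthful interim payoffs agree, so nonnegativity of one is equivalent to nonnegativity of the other for every $i$ and $\theta_i$. The full support of $p$ guarantees that the conditional expectations are well defined.
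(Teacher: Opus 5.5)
Your proposal is correct and follows essentially the same route as the paper: integrating the pointwise inequality across $\theta_{-i}$ shows that the common report-independent $q_i^\ast$ is also the interim worst case. Substituting pointwise preservation then equates every interim max--min payoff with its pure counterpart, which gives (a) by comparing all reports and (b) by evaluating at truthful reports.
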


\begin{proof}
See Appendix~\ref{app:preserveIC}.
\end{proof}

Pointwise minimization permits integration over types, so preservation leaves BIC and IIR comparisons unchanged. This conclusion uses separability, not merely report-blind label selection.

}
\long\def\HomogeneousAndClassificationResults{%
\subsection{Homogeneous payment evaluations}\label{subsec:homogeneous}

The closed-form BB transformation in Theorem~\ref{thm:bbtrans} is not tied to linear worst-case averaging. To state the broader algebra, let $\rho_i:\mathbb R^n\to\mathbb R$ be agent $i$'s certainty-equivalent evaluation of a label-contingent \emph{payment} vector; a larger value is therefore worse. Let $\mathbf 1$ denote the all-ones vector in $\mathbb R^n$. Assume that $\rho_i$ is finite, componentwise nondecreasing, independent of types and reports, and satisfies
\begin{align}
  \rho_i(0)&=0,\label{eq:rho_zero}\\
  \rho_i(z+c\mathbf 1)&=\rho_i(z)+c
    &&\text{for every }z\in\mathbb R^n,\ c\in\mathbb R,\label{eq:rho_cash}\\
  \rho_i(\lambda z)&=\lambda\rho_i(z)
    &&\text{for every }z\in\mathbb R^n,\ \lambda\ge0.\label{eq:rho_homogeneous}
\end{align}
Monotonicity gives the payment interpretation; the construction below uses cash invariance and positive homogeneity. Identifying label distributions with vectors, continue to write $e_i$ for the $i$th standard basis vector and define
\[
  a_i:=\rho_i(e_i),
  \qquad
  A_\rho:=\sum_{i\in\N}a_i.
\]

To fix the aggregation order for Bayesian incentive and participation notions, define the interim utility functional
\begin{equation}\label{eq:rho_interim_utility}
\begin{aligned}
  U_i^\rho(r_i\mid\theta_i)
  &:=
  \E_p\!\left[
    v_i\bigl(g(r_i,\theta_{-i}),\theta\bigr)
    \,\middle|\,\theta_i
  \right] \\
  &\quad-
  \rho_i\!\left(
    \left(
      \E_p\!\left[
        x_i^j(r_i,\theta_{-i})
        \,\middle|\,\theta_i
      \right]
    \right)_{j\in\N}
  \right).
\end{aligned}
\end{equation}
Thus Bayesian uncertainty is integrated out label by label before $\rho_i$ is applied. BIC means that truthful reporting maximizes \eqref{eq:rho_interim_utility}, and IIR evaluates \eqref{eq:rho_interim_utility} at truthful reporting. For DSIC and ex post IR, use the pointwise counterpart
\begin{equation}\label{eq:rho_pointwise_utility}
  u_i^\rho(r_i,r_{-i};\theta)
  :=
  v_i\bigl(g(r_i,r_{-i}),\theta\bigr)
  -
  \rho_i\!\left(
    \bigl(x_i^j(r_i,r_{-i})\bigr)_{j\in\N}
  \right).
\end{equation}
The two aggregation orders need not agree for a general joint evaluation of Bayesian risk and ambiguity; the separability and order in \eqref{eq:rho_interim_utility} are maintained assumptions.

\begin{proposition}[Repair under homogeneous payment evaluations]\label{prop:rho_repair}
Suppose $A_\rho>1$ and the pure payment rule $x$ is statewise non-deficit. Then
\begin{equation}\label{eq:rho_repair}
  x_i^j(\omega)
  :=
  x_i(\omega)
  +\frac{\mathbf 1\{j=i\}-a_i}{A_\rho-1}\,s(\omega)
\end{equation}
satisfies dominance \eqref{eq:monotone}, exact balance \eqref{eq:bb}, and
\[
  \rho_i\bigl((x_i^j(\omega))_{j\in\N}\bigr)=x_i(\omega)
  \qquad\text{for every }i,\omega.
\]
\end{proposition}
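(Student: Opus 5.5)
The argument is a direct pointwise verification, parallel to the sufficiency part of Theorem~\ref{thm:bbtrans}. The proof should hinge on one algebraic observation: for fixed $i$ and $\omega$, the row \eqref{eq:rho_repair} is a constant vector plus a nonnegative multiple of $e_i$. Fix a report profile $\omega$ and suppress it. Write $t:=s/(A_\rho-1)$. Since $s\ge0$ by statewise non-deficit and $A_\rho>1$, we have $t\ge0$. Then
\[
  (x_i^j)_{j\in\N}=(x_i-a_it)\mathbf 1+t\,e_i .
\]

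\emph{Dominance.} The diagonal entry is $x_i+(1-a_i)t$ and every off-diagonal entry is $x_i-a_it$. The spread is therefore $t\ge0$, which gives \eqref{eq:monotone}.

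\emph{Exact balance.} For a fixed label $j$, summing over agents gives
\[
  \sum_i x_i^j=s+t\Bigl(1-\sum_i a_i\Bigr)=s-(A_\rho-1)t=0 .
\]
Only agent $j$ receives the indicator term, which is why the sum contains a single $1$.

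\emph{Preservation.} Apply cash invariance \eqref{eq:rho_cash} with $c=x_i-a_it$, then positive homogeneity \eqref{eq:rho_homogeneous} with $\lambda=t\ge0$:
\[
  \rho_i\bigl((x_i^j)_j\bigr)=x_i-a_it+\rho_i(te_i)=x_i-a_it+t\,a_i=x_i .
\]
Normalization \eqref{eq:rho_zero} covers the case $t=0$, because homogeneity at $\lambda=0$ gives $\rho_i(0)=0$, consistent with $a_i t=0$.

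The only delicate step is preservation. The nonlinearity of $\rho_i$ is harmless only because each row has this "constant plus scaled basis vector" shape, and the scaling coefficient must be nonnegative. This is exactly where both $s\ge0$ and $A_\rho>1$ are used: if $t<0$, positive homogeneity would not apply and $\rho_i(te_i)$ would generally differ from $ta_i$. Since every step holds at each $\omega$ separately, the identities hold at every report profile. Monotonicity of $\rho_i$ is not needed for the algebra; it only supports the payment interpretation.
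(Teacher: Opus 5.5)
Your proposal is correct and matches the paper's proof: it writes each row as $x_i\mathbf 1+\tau(e_i-a_i\mathbf 1)$ with $\tau=s/(A_\rho-1)\ge0$, reads dominance off the common spread $\tau$, and gets preservation from cash invariance and positive homogeneity. Exact balance then follows from the column sum $s+\tau(1-A_\rho)=0$.
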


\begin{proof}
Fix a state and put $\tau:=s/(A_\rho-1)\ge0$. Agent $i$'s payment row is
\[
  x_i\mathbf1+\tau(e_i-a_i\mathbf1).
\]
Hence every diagonal--off-diagonal spread equals $\tau$, proving dominance. Cash invariance and positive homogeneity give
\[
  \rho_i\bigl(x_i\mathbf1+\tau(e_i-a_i\mathbf1)\bigr)
  =x_i-\tau a_i+\tau\rho_i(e_i)=x_i.
\]
Finally, every column sum is $s+\tau(1-A_\rho)=0$.
\end{proof}

Proposition~\ref{prop:rho_repair} is a sufficient closed-form result, not a full feasibility classification for arbitrary nonlinear evaluations. It nevertheless preserves the relevant behavioral comparisons under a transparent separability condition. The identity in the proposition makes \eqref{eq:rho_pointwise_utility} equal to the pure-mechanism payoff at every report profile, so DSIC and ex post participation are preserved. For \eqref{eq:rho_interim_utility}, conditional expectation leaves the payment row in the same affine form:
\[
  \left(
    \E_p[x_i^j(r_i,\theta_{-i})\mid\theta_i]
  \right)_{j\in\N}
  =
  \bar x_i(r_i,\theta_i)\mathbf1
  +\bar\tau(r_i,\theta_i)(e_i-a_i\mathbf1),
\]
where $\bar x_i(r_i,\theta_i):=\E_p[x_i(r_i,\theta_{-i})\mid\theta_i]$ and $\bar\tau(r_i,\theta_i):=\E_p[s(r_i,\theta_{-i})\mid\theta_i]/(A_\rho-1)\ge0$. Cash invariance and positive homogeneity therefore make the evaluated interim payment equal to $\bar x_i(r_i,\theta_i)$, preserving BIC and IIR. Correlation between labels and types, report-dependent functionals, or a nonseparable joint evaluation of the two uncertainties lies outside this result.

If every agent instead uses the same linear payment evaluation with weight vector $\bar q$, then $a_i=\bar q^i$ and $A_\rho=1$, so the construction degenerates. This is consistent with Proposition~\ref{prop:common_preserving_distribution}: a common precise label law cannot repair a positive surplus while preserving evaluated payments.

\enlargethispage{2pt}

The constant-coefficient alpha-maxmin model gives a sharp specialization. Let $\alpha\in[0,1]$ denote the weight on worst-case utility.\footnote{Some alpha-maxmin conventions attach $\alpha$ to the best-case utility term. The present parameterization attaches it to the worst-case term; equivalently, it multiplies the maximum payment in \eqref{eq:alpha_payment}.} The corresponding payment evaluation is
\begin{equation}\label{eq:alpha_payment}
  \rho_i^\alpha(z)
  :=
  \alpha\max_{q\in\Delta q_i}q\cdot z
  +(1-\alpha)\min_{q\in\Delta q_i}q\cdot z .
\end{equation}
This reverses the max and min relative to utility because payments enter utility negatively. The specification is the constant-weight alpha-maxmin case associated with \citet{GhirardatoMaccheroniMarinacci2004}.

\begin{corollary}[Alpha-maxmin threshold and optimal spread]\label{cor:alpha_repair}
Fix a reference distribution $k\in\Delta(\N)$. Suppose all agents have the common contamination set
\[
  \Delta q_i=\bigl\{(1-\varepsilon)k+\varepsilon q:q\in\Delta(\N)\bigr\},
  \qquad \varepsilon\in(0,1),
\]
and evaluate payments by \eqref{eq:alpha_payment}. At any state with surplus $s>0$, a dominant, exactly balanced family preserving every $\rho_i^\alpha$ exists if and only if
\[
  \alpha>\frac1n.
\]
When this condition holds, one such family is
\begin{equation}\label{eq:alpha_repair}
  x_i^j
  =
  x_i+
  \frac{\mathbf1\{j=i\}-(1-\varepsilon)k^i-\alpha\varepsilon}
       {\varepsilon(n\alpha-1)}\,s .
\end{equation}
Moreover, every feasible family satisfies
\[
  \max_{i\in\N}\left(x_i^i-\min_{j\in\N}x_i^j\right)
  \;\ge\;\frac{s}{\varepsilon(n\alpha-1)},
\]
and \eqref{eq:alpha_repair} attains equality.
\end{corollary}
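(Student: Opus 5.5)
The plan is to reduce $\rho_i^\alpha$ to an explicit formula on dominant rows and then argue in the same way as the proof of Theorem~\ref{thm:bbtrans}. For the contamination set, $\max_{q\in\Delta q_i}q\cdot z=(1-\varepsilon)k\cdot z+\varepsilon\max_j z^j$, and likewise for the minimum. Under fixed-adverse-label dominance \eqref{eq:monotone}, agent $i$'s maximum entry is $z_i^i$. Write the spreads as $d_i^j:=x_i^i-x_i^j\ge0$ and let $D_i:=\max_j d_i^j$. Preservation $\rho_i^\alpha=x_i$ then becomes
\[
  x_i^i-(1-\varepsilon)\sum_{j}k^jd_i^j-\varepsilon(1-\alpha)D_i=x_i .
\]
This identity is linear in the row except for the term $D_i$.

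\textbf{Necessity and the spread bound.} Exact balance of column $j$ gives $\sum_i x_i^i=\sum_i d_i^j=:T$ for every $j$, so $T$ does not depend on $j$. Weighting these identities by $k$ gives $\sum_i\sum_j k^jd_i^j=T$. Summing the preservation identities over $i$ then yields
\[
  s=\varepsilon\Bigl[T-(1-\alpha)\sum_iD_i\Bigr].
\]
Next I bound $T$ from above. In column $j$ we have $d_j^j=0$, so $T\le\sum_{h\neq j}D_h$. Averaging this over $j$ gives $T\le\frac{n-1}{n}\sum_hD_h$. Substituting,
\[
  s\le\varepsilon\bigl(\alpha-\tfrac1n\bigr)\sum_iD_i\le\varepsilon(n\alpha-1)\max_iD_i ,
\]
where the second inequality uses $\alpha-\frac1n>0$. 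If instead $\alpha\le1/n$, the middle expression is already $\le0$, because each $D_i\ge0$. That contradicts $s>0$, so positive-surplus repair forces $\alpha>1/n$. When $\alpha>1/n$, dividing by $\varepsilon(n\alpha-1)>0$ gives the stated lower bound, since $\max_i D_i=\max_i\bigl(x_i^i-\min_jx_i^j\bigr)$. The step that needs the most care is exactly this bound on $T$: choosing the uniform average over columns, rather than the $k$-weighted average, is what produces the threshold $1/n$.

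\textbf{Sufficiency and attainment.} Let $\tau:=s/(\varepsilon(n\alpha-1))>0$ and $c_i:=(1-\varepsilon)k^i+\alpha\varepsilon$, so that row $i$ of \eqref{eq:alpha_repair} is $x_i\mathbf1+\tau(e_i-c_i\mathbf1)$.
\begin{itemize}
\item Every diagonal--off-diagonal spread equals $\tau\ge0$, which gives dominance.
\item Because the row is a constant plus $\tau e_i$, the formula above gives $\rho_i^\alpha=x_i-\tau c_i+\tau\bigl[(1-\varepsilon)k^i+\varepsilon\alpha\bigr]=x_i$. This uses $\min_j e_i^j=0$, which holds because $n\ge2$.
\item Every column sums to $s+\tau\bigl(1-\sum_ic_i\bigr)=s-\tau\varepsilon(n\alpha-1)=0$, which gives exact balance.
\end{itemize}
Finally, $\max_iD_i=\tau$ for this family, so it attains the lower bound with equality.
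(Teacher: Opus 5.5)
Your proof is correct and follows essentially the paper's route. On necessity, you combine the $k$-weighted column balance with preservation, then use the uniform average over columns to bound the diagonal sum, which gives $s\le\varepsilon(\alpha-\tfrac1n)\sum_i D_i$. The only difference is that you measure spreads from the diagonal, whereas the paper measures offsets $y_i^j=x_i^j-\min_\ell x_i^\ell$ from the row minimum. On sufficiency, you verify the cash-invariance and homogeneity computation directly, while the paper obtains the same step by citing Proposition~\ref{prop:rho_repair} with $a_i=(1-\varepsilon)k^i+\alpha\varepsilon$.
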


\begin{proof}
See Appendix~\ref{app:alpha_repair}.
\end{proof}

At $\alpha=1$, \eqref{eq:alpha_repair} and the optimal spread reduce to the max--min closed form with $R=(n-1)\varepsilon$. As $\alpha$ approaches $1/n$ from above, the required spread diverges.

The stronger endpoint package in Corollary~\ref{cor:frequency_unrestricted} is specific to this pure max--min case. Under the full-simplex endpoint, evaluating its canonical payment row by alpha-maxmin gives
\[
  \rho_i^\alpha\bigl((\widehat x_i^j)_j\bigr)
  =
  x_i-\frac{1-\alpha}{n-1}s,
\]
which equals the original payment only when $\alpha=1$ or the original rule is already balanced. For $\alpha<1$, the alpha-maxmin preserving family is different and the no-additional-cap and rule-by-rule participation conclusions do not follow.

\begin{remark}[Reference-frequency incidence under alpha-maxmin]\label{rem:alpha_reference}
For \eqref{eq:alpha_repair},
\[
  x_i-\sum_jk^jx_i^j
  =\frac{(\alpha-k^i)s}{n\alpha-1}.
\]
These amounts sum to $s$. Under a uniform reference every agent has the same reference-frequency accounting rebate $s/n$. Under a nonuniform reference, an agent has a weak accounting rebate only when $k^i\le\alpha$; otherwise she faces a reference-frequency accounting surcharge. Thus the agent-by-agent identity in Proposition~\ref{prop:reference_compensation} should not be carried over unchanged to this broader preference class. Generic smooth-ambiguity evaluations such as \citet{KlibanoffMarinacciMukerji2005} and act-dependent ambiguity-attitude weights need not satisfy positive homogeneity and remain outside Proposition~\ref{prop:rho_repair}.
\end{remark}

\subsection{Complete reduced-form classification}\label{subsec:fullclassification}

The next result classifies feasibility for arbitrary report-independent preserving weights within the fixed-adverse-label, dominance-restricted class. It separates the common-reference branch used in the main text from the crossed betting branch in Subsection~\ref{sec:crossed}.

For a preserving-weight profile $q^\ast$, define two sets of reference vectors:
\[
K^+(q^\ast)
:=
\Bigl\{
k\in\Delta(\N):
q_i^{j\ast}\le k^j\le q_j^{j\ast}
\text{ for all } i\neq j,\; j\in\N
\Bigr\},
\]
and
\[
K^-(q^\ast)
:=
\Bigl\{
k\in\Delta(\N):
q_i^{j\ast}\ge k^j\ge q_j^{j\ast}
\text{ for all } i\neq j,\; j\in\N
\Bigr\}.
\]
The set $K^+$ corresponds to the reference-wedge ordering used above; $K^-$ is the reverse ordering. Both sets admit an elementary characterization in terms of the columnwise extremes of the preserving weights, which makes each branch of the classification below directly checkable. For each label $j$, write
\[
  M^j:=\max_{i\neq j}q_i^{j\ast},
  \qquad
  m^j:=\min_{i\neq j}q_i^{j\ast}.
\]

\begin{lemma}[Checkable characterization of the reference sets]\label{lem:wedgechar}
$K^+(q^\ast)\neq\varnothing$ if and only if $\sum_{j\in\N}M^j\le1$, and $K^-(q^\ast)\neq\varnothing$ if and only if $\sum_{j\in\N}m^j\ge1$.
\end{lemma}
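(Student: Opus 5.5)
The plan is to treat each set as an intersection of coordinatewise intervals with the simplex, and to decide nonemptiness by an interval-sum argument. For $K^+$, the constraints on coordinate $j$ are $q_i^{j\ast}\le k^j$ for all $i\neq j$ and $k^j\le q_j^{j\ast}$. So $k^j$ must lie in the interval $I_j^+:=[M^j,\,q_j^{j\ast}]$, together with $k^j\ge0$ (automatic since $M^j\ge0$). Hence $K^+(q^\ast)\neq\varnothing$ if and only if each $I_j^+$ is nonempty and some choice of $k^j\in I_j^+$ satisfies $\sum_j k^j=1$. By the intermediate value property of sums of intervals, this holds exactly when
\[
  \sum_j M^j\le 1\le \sum_j q_j^{j\ast}
\]
and $M^j\le q_j^{j\ast}$ for every $j$.

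The key step is to show that the extra conditions are implied by $\sum_j M^j\le1$, using only that the rows $q_i^\ast$ are probability vectors.

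First, fix $j$ and let $i\neq j$ attain $M^j$. Row $i$ sums to one, so
\[
  q_j^{j\ast}=1-\sum_{h\neq j}q_j^{h\ast}\ge 1-\sum_{h\neq j}M^h\ge M^j,
\]
where the middle inequality uses $q_j^{h\ast}\le M^h$ for $h\neq j$ (row $j$'s off-diagonal entries are among those maximized), and the last uses $\sum_h M^h\le1$. This gives $I_j^+\neq\varnothing$.

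Second, we need $\sum_j q_j^{j\ast}\ge1$. Summing the inequality $q_j^{j\ast}\ge1-\sum_{h\neq j}M^h$ over $j$ gives $\sum_j q_j^{j\ast}\ge n-(n-1)\sum_h M^h\ge1$, again using $\sum_h M^h\le1$.

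Conversely, if $k\in K^+$ then $k^j\ge M^j$ for every $j$, so $\sum_j M^j\le\sum_j k^j=1$. This gives the stated equivalence for $K^+$. An explicit witness is also available: move continuously from the vector $(M^j)_j$ toward $(q_j^{j\ast})_j$ until the sum reaches one.

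For $K^-$, argue symmetrically with the intervals $I_j^-=[q_j^{j\ast},\,m^j]$. Necessity is immediate: $k^j\le m^j$ gives $1\le\sum_j m^j$. For sufficiency, row $j$ summing to one gives
\[
  q_j^{j\ast}=1-\sum_{h\neq j}q_j^{h\ast}\le1-\sum_{h\neq j}m^h\le m^j
\]
when $\sum_h m^h\ge1$. Summing over $j$ gives $\sum_j q_j^{j\ast}\le n-(n-1)\sum_h m^h\le1$. The lower endpoints $q_j^{j\ast}\ge0$ then sum to at most one and the upper endpoints $m^j$ sum to at least one, so an intermediate choice lands in the simplex. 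Nonnegativity of $k$ holds because $k^j\ge q_j^{j\ast}\ge0$.

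The main obstacle is the interval-compatibility step. A naive argument checks only the sum condition and forgets that each coordinate interval must itself be nonempty and that the sum of diagonal entries must straddle one. The row-sum manipulation above handles both, and I would verify it carefully. The edge case $n=2$ needs the same argument with single-term sums and should pose no difficulty.
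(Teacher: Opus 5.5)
Your proof is correct and is essentially the paper's argument: you use row sums to show that each coordinate interval $[M^j,q_j^{j\ast}]$ (resp.\ $[q_j^{j\ast},m^j]$) is nonempty and that the endpoint sums straddle one, so the box meets the unit-sum hyperplane. The differences are cosmetic: the paper gets $\sum_j q_j^{j\ast}\ge1$ from $q_j^{j\ast}\ge M^j\ge q_1^{j\ast}$ and row~1 rather than by summing your per-coordinate bound, and in your first step the display actually uses row $j$, so the agent $i$ attaining $M^j$ plays no role.
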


\begin{proof}
See Appendix~\ref{app:wedge_details}.
\end{proof}

The complete linear-algebraic classification is as follows.

\begin{theorem}[Reduced-form feasibility]\label{thm:general_feasibility}
Fix a pure payment rule $x$ and preserving weights $\{q_i^\ast\}_{i\in\N}\subseteq\Delta(\N)$ that do not depend on reports.
For each state $\omega$, let $s(\omega):=\sum_i x_i(\omega)$.
A BB transformation of $x$ exists if and only if the following statewise condition holds for every $\omega$:
\begin{enumerate}
  \item[(i)] if $K^+(q^\ast)\neq\varnothing$ and $K^-(q^\ast)=\varnothing$, then $s(\omega)\ge0$;
  \item[(ii)] if $K^+(q^\ast)=\varnothing$ and $K^-(q^\ast)\neq\varnothing$, then $s(\omega)\le0$;
  \item[(iii)] if $K^+(q^\ast)\neq\varnothing$ and $K^-(q^\ast)\neq\varnothing$, then $s(\omega)=0$;
  \item[(iv)] if $K^+(q^\ast)=\varnothing$ and $K^-(q^\ast)=\varnothing$, then there is no aggregate-sign restriction on $s(\omega)$.
\end{enumerate}
In case (iv), the linear system defining a BB transformation is feasible at that state for every vector of original transfers $x(\omega)$.
\end{theorem}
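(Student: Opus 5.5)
The plan is to fix a report profile, suppress it, and reduce the BB-transformation system to a problem in nonnegative spreads, using only the columnwise extremes $M^j$ and $m^j$. Lemma~\ref{lem:wedgechar} then translates the conditions on $\sum_j M^j$ and $\sum_j m^j$ into the four cases on $K^\pm(q^\ast)$. The system is pointwise in $\omega$ and the weights are report-independent. So a menu exists if and only if the linear system is solvable at every state, and applying the same choice state by state defines the payment functions.

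\textbf{Step 1: change of variables.} For $j\neq i$, set $d_i^j:=x_i^i-x_i^j$, so dominance \eqref{eq:monotone} becomes $d_i^j\ge0$.
\begin{itemize}
\item Preservation \eqref{eq:preserve} for agent $i$ becomes $x_i^i=x_i+\sum_{h\neq i}q_i^{h\ast}d_i^h$. This determines the diagonal entries, and hence the whole menu, from $d$.
\item Balance \eqref{eq:bb} in column $j$ reads $\sum_i x_i^i-\sum_{i\neq j}d_i^j=0$.
\item Define $T:=\sum_i\sum_{h\neq i}q_i^{h\ast}d_i^h$. Then $\sum_i x_i^i=s+T$.
\end{itemize}
So a BB transformation exists at the state if and only if there is $d\ge0$ whose off-diagonal column sums $C^j:=\sum_{i\neq j}d_i^j$ all equal a common value $c$ with $c=s+T$.

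\textbf{Step 2: normalization.} If $c=0$, then $d=0$ and $T=0$, so $s=0$. Thus $s=0$ is always feasible through the constant family. If $c>0$, write $d_i^j=c\lambda_i^j$ with each $\lambda^j$ in the simplex over $\{i:i\neq j\}$; this simplex is nonempty because $n\ge2$. Regrouping $T$ by label gives
\[
  s=c\Bigl(1-\Phi(\lambda)\Bigr),
  \qquad
  \Phi(\lambda):=\sum_{j\in\N}\sum_{i\neq j}q_i^{j\ast}\lambda_i^j .
\]
The map $\Phi$ is linear in each $\lambda^j$ separately, and the $\lambda^j$ range over a product of simplices. Hence its range is exactly the interval $\bigl[\sum_j m^j,\ \sum_j M^j\bigr]$, with the endpoints reached at vertices and the interior reached by convexity. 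Since $c>0$ is free, a state with $s>0$ is feasible if and only if $\sum_j m^j<1$, and a state with $s<0$ is feasible if and only if $\sum_j M^j>1$.

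\textbf{Step 3: matching the cases.} By Lemma~\ref{lem:wedgechar}, $K^+(q^\ast)\neq\varnothing$ if and only if $\sum_j M^j\le1$, and $K^-(q^\ast)\neq\varnothing$ if and only if $\sum_j m^j\ge1$. Reading off the four combinations:
\begin{itemize}
\item Case (i): $\sum_j M^j\le1$ excludes $s<0$, and $\sum_j m^j<1$ permits $s>0$, so the condition is $s\ge0$.
\item Case (ii): this is the mirror image of case (i), giving $s\le0$.
\item Case (iii): both strict signs are excluded, leaving only $s=0$.
\item Case (iv): both strict signs and $s=0$ are feasible, so the system is solvable for every $x(\omega)$ at that state.
\end{itemize}
Since $m^j\le M^j$, case (iii) forces $\sum_j m^j=\sum_j M^j=1$, which is consistent with the claim.

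The main obstacle is Step 1. One must check that the substitution loses nothing: every solution of the original system yields such a $d$, and every such $d$ reconstructs a menu satisfying all three constraints. This means verifying that the diagonal entries defined through preservation make every column balance exactly when the column sums equal $s+T$. Once that equivalence is established, the range computation for $\Phi$ and the case analysis are routine.
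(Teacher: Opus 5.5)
Your proof is correct, and it takes a genuinely different route from the paper.

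\textbf{The paper's route.} The paper argues through a Farkas-type alternative. It writes the dominance, balance, and preservation system and shows that every dual-feasible multiplier has a constant preservation multiplier $\alpha_i\equiv\alpha$. It then reads $k:=-\beta/\alpha$ as a reference vector with $k^j\ge M^j$ when $\alpha>0$ and $k^j\le m^j$ when $\alpha<0$. This reduces feasibility to $\alpha s\ge0$ over the available signs of $\alpha$, and Lemma~\ref{lem:wedgechar} converts those sign conditions into nonemptiness of $K^\pm(q^\ast)$.

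\textbf{Your route.} You work in the primal. The change of variables to nonnegative spreads shows directly that feasibility depends on $x(\omega)$ only through $s$, because the diagonal absorbs the individual $x_i$. The normalization $d=c\lambda$ then reduces everything to the range $\bigl[\sum_j m^j,\sum_j M^j\bigr]$ of a linear map on a product of simplices.

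\textbf{What each buys.} Your argument needs no theorem of the alternative and is constructive. A vertex $\lambda$ together with $c=s/(1-\Phi(\lambda))$ gives an explicit menu, including in case (iv). It also delivers the strict thresholds directly: surplus repair needs $\sum_j m^j<1$, and deficit repair needs $\sum_j M^j>1$. The paper's dual argument instead identifies the infeasibility certificates as exactly the reference vectors. That ties the classification to the identity $s=\sum_i\sum_{j\neq i}(k^j-q_i^{j\ast})(x_i^i-x_i^j)$ behind Theorem~\ref{thm:bbtrans}, and to the dual formulation in Remark~\ref{rem:lp}.

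\textbf{On your flagged obstacle.} The check you call the main obstacle is already done in your Step~1. Given $d\ge0$, define $x_i^i$ by the preservation formula and set $x_i^j:=x_i^i-d_i^j$. This is a bijection with menus, and each of the three constraints becomes an equivalent one. Nothing further needs to be verified.
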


\begin{proof}
See Appendix~\ref{app:general_feasibility}.
\end{proof}

Cases (i)--(iii) respectively recover the main branch, its sign-reversed mirror, and already-balanced rules. Case (iv) has no surplus-sign restriction but requires crossed weights and at least three agents; Subsection~\ref{sec:crossed} explains its betting interpretation.

}
These results distinguish three comparisons: menu-size minimality at the support-only endpoint, cap optimality within a fixed architecture, and the architecture-robust capacity bounds developed in Section~\ref{sec:price}.

\subsection{The VCG benchmark}\label{subsec:vcg}

The VCG mechanism is the canonical benchmark for the transformation; it follows \citet{Vickrey1961}, \citet{Clarke1971}, and \citet{Groves1973}.

Assume \emph{private values}: for every agent $i$ there is a function $\hat v_i:A\times\Omega_i\to\mathbb R$ such that $v_i(a,\omega)=\hat v_i(a,\omega_i)$ for all $(a,\omega)$. To keep notation light, write this value as $v_i(a,\omega_i)$ below.
Recall that $A$ denotes the set of feasible allocations. Assume that all welfare maxima used below, both with all agents and with any one agent omitted, are finite and attained. In particular, for every $\omega \in \OmegaSet$ the efficient set $\arg\max_{a \in A} \sum_{i \in \N} v_i(a, \omega_i)$ is nonempty, and for every $i$ the welfare maximum obtained after omitting agent $i$ is attained.
Fix an efficient selection $g^*(\omega) \in \arg\max_{a \in A} \sum_{i \in \N} v_i(a, \omega_i)$.
A \emph{VCG mechanism} $(g^*, x^{\mathrm{VCG}})$ consists of this efficient allocation rule and the payment rule
\begin{equation}\label{eq:vcg}
  x_i^{\mathrm{VCG}}(\omega) = \max_{a \in A} \sum_{j \neq i} v_j(a, \omega_j)
  - \sum_{j \neq i} v_j(g^*(\omega), \omega_j).
\end{equation}
Each agent's payment equals the externality she imposes on others.
It is well known that VCG is DSIC: truthful reporting is a dominant strategy regardless of others' behavior.

I impose the following nonnegative-valuation assumption for the VCG application:
\begin{equation}\label{eq:normalization}
  \text{For all } a \in A,\; i \in \N,\; \omega_i \in \Omega_i:\quad
  v_i(a, \omega_i) \ge 0.
\end{equation}
By \eqref{eq:vcg}, VCG payments are nonnegative---$x_i^{\mathrm{VCG}}(\omega) \ge 0$ for all $i, \omega$---because the maximand can be evaluated at $a=g^*(\omega)$. Moreover,
\[
  v_i(g^*(\omega), \omega_i) - x_i^{\mathrm{VCG}}(\omega)
  = \sum_{j \in \N} v_j(g^*(\omega), \omega_j)
    - \max_{a \in A} \sum_{j \neq i} v_j(a, \omega_j)
  \ge 0,
\]
because $g^*(\omega)$ maximizes total value and \eqref{eq:normalization} gives $v_i(a,\omega_i)\ge 0$ for every $a$. Hence the underlying pure VCG mechanism satisfies ex post individual rationality.
Thus the pure VCG payment rule is statewise non-deficit and the underlying mechanism is ex post individually rational.

\begin{corollary}[Max--min DSIC, exact BB, and max--min ex post IR for VCG]\label{cor:vcg_bb}
Consider the private-value VCG environment above: an efficient selection $g^*$ exists, payments are given by \eqref{eq:vcg}, and the normalization \eqref{eq:normalization} holds. Suppose the payment-rule ambiguity sets are generated by the lower-bound foundation \eqref{eq:Deltaqi}, with preserving weights given by \eqref{eq:qstar}. By Lemma~\ref{lem:qstar}, each $q_i^\ast$ is then a report-independent worst-case minimizer under \eqref{eq:monotone}. Suppose also that these weights satisfy the reference-wedge condition of Definition~\ref{def:refwedge}.
Then a BB transformation of the VCG payment rule $x^{\mathrm{VCG}}$ exists, and the resulting payment-rule selection mechanism $(g^*, \{x^j\})$ satisfies simultaneously:
\begin{enumerate}
  \item[(i)] max--min DSIC;
  \item[(ii)] ex post budget balance;
  \item[(iii)] max--min ex post IR.
\end{enumerate}
\end{corollary}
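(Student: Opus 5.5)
The plan is to chain earlier results: establish the hypotheses of Theorem~\ref{thm:bbtrans} for the VCG payment rule, then transport incentives and participation through Proposition~\ref{prop:pointwise_incentive}.

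First, I would verify statewise non-deficit. The normalization \eqref{eq:normalization} implies $x_i^{\mathrm{VCG}}(\omega)\ge0$ for every $i$ and $\omega$, since the maximand in \eqref{eq:vcg} can be evaluated at $a=g^*(\omega)$. Summing over agents then gives $s(\omega)\ge0$, which is exactly condition \eqref{eq:nodef}.

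Second, I would produce the menu. The weights \eqref{eq:qstar} satisfy the reference-wedge condition by hypothesis, so Theorem~\ref{thm:bbtrans} yields a BB transformation. Concretely, this is the closed form \eqref{eq:closed_form} with spread $s/R\ge0$, and it satisfies \eqref{eq:bb}, \eqref{eq:preserve} and \eqref{eq:monotone}. Item (ii), ex post budget balance, is then immediate from \eqref{eq:bb}.

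Third, I would check the behavioral hypothesis of Proposition~\ref{prop:pointwise_incentive}, namely that each $q_i^\ast$ is a report-independent worst-case distribution. This is the step needing the most care, though it is not deep. The menu delivered by Theorem~\ref{thm:bbtrans} satisfies the dominance restriction \eqref{eq:monotone}, and the ambiguity sets are of the lower-bound form \eqref{eq:Deltaqi}. Lemma~\ref{lem:qstar} therefore applies to this specific menu, so the weights used to build the menu coincide with the worst-case minimizers at every report profile. The point is to avoid circularity: the weights are fixed first, the menu is then constructed from them, and dominance of the constructed menu is what certifies that those same weights are worst cases.

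Finally, I would invoke Proposition~\ref{prop:pointwise_incentive}. Label-independence of $g^*$ and quasi-linearity hold, and the private-values specialization does not affect the argument. Because pure VCG is DSIC, (i) max--min DSIC follows from the stated equivalence. Because pure VCG is ex post individually rational, as shown just before the corollary using the efficiency of $g^*$ and nonnegative valuations, (iii) max--min ex post IR follows as well.
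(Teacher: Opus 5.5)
Your proposal is correct and follows the paper's proof essentially step by step: nonnegativity of VCG payments gives statewise non-deficit, Theorem~\ref{thm:bbtrans} supplies the exactly balanced menu, and Lemma~\ref{lem:qstar} together with Proposition~\ref{prop:pointwise_incentive} carries DSIC and ex post IR over to the max--min setting. Your explicit remark that the weights are fixed first, and that dominance of the constructed menu is what then licenses Lemma~\ref{lem:qstar}, is a useful clarification that the paper leaves implicit.
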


\begin{proof}
VCG payments are nonnegative, so Theorem~\ref{thm:bbtrans} supplies a BB transformation. Pure VCG is DSIC and, by \eqref{eq:normalization}, ex post IR. Lemma~\ref{lem:qstar} and Proposition~\ref{prop:pointwise_incentive} preserve both properties under max--min evaluation, giving max--min DSIC and max--min ex post IR.
\end{proof}

Corollary~\ref{cor:vcg_bb} is a possibility result, and its economic reach depends on frequency information. At the support-only endpoint, Corollary~\ref{cor:frequency_unrestricted} adds rule-by-rule ex post IR without expanding the absolute-transfer cap. In the full-support interior, a cap-minimizing family must obey the transfer frontier characterized in Theorem~\ref{thm:frontier}, while rule-by-rule ex post IR fails in standard Vickrey states by Theorem~\ref{thm:full_support_participation_impossibility}. The following example makes the contrast concrete.

\begin{example}[Three-bidder Vickrey auction]\label{ex:vcg_vickrey3}
Consider the canonical one-label-per-bidder menu in a single-item auction with three bidders and value profile $(v_1,v_2,v_3)=(9,6,1)$.
The Vickrey allocation gives the item to bidder~1, and the VCG payment vector is
\[
  x^{\mathrm{VCG}}=(6,0,0),
\]
so the mechanism collects a surplus of $s=6$.
Take a symmetric reference vector $k^1=k^2=k^3=\frac13$, and for each bidder $i$ let
\[
  \varepsilon_i^j = \varepsilon \in (0,1]
  \qquad \text{for every } i,j.
\]
Then \eqref{eq:qstar} yields the worst-case distribution
\[
  q_i^{i\ast}=\frac{1+2\varepsilon}{3},
  \qquad
  q_i^{j\ast}=\frac{1-\varepsilon}{3}
  \quad (j\neq i),
\]
and hence
\[
  R=\sum_{h\in\N} q_h^{h\ast}-1 = 2\varepsilon.
\]
Applying the closed-form transformation \eqref{eq:closed_form} to $x^{\mathrm{VCG}}$ gives
\[
  x_i^j
  =
  x_i^{\mathrm{VCG}}
  + \frac{\mathbf{1}\{j=i\} - (1+2\varepsilon)/3}{2\varepsilon}\cdot 6.
\]
The resulting exactly budget-balanced rules are reported in Table~\ref{tab:vickrey_repairs}.
\begin{table}[H]
\centering
\caption{Exactly balanced canonical three-label payments under support-only ambiguity and two interior exposure levels.}
\label{tab:vickrey_repairs}
\begin{tabular}{cccc}
\toprule
Rule label $j$ & Support-only & $\varepsilon=\frac12$ & $\varepsilon=\frac15$ \\
\midrule
1 & $(6,-3,-3)$ & $(8,-4,-4)$ & $(14,-7,-7)$ \\
2 & $(3,0,-3)$ & $(2,2,-4)$ & $(-1,8,-7)$ \\
3 & $(3,-3,0)$ & $(2,-4,2)$ & $(-1,-7,8)$ \\
\bottomrule
\end{tabular}
\end{table}
Each displayed payment vector sums to zero, so every realized rule is ex post budget balanced.
The worst-case evaluated transfers are $(6,0,0)$ in all three columns, so truthfulness and max--min participation are preserved. In the support-only column, no bidder ever pays more than under the original Vickrey rule, the cap is $6$, and rule-by-rule ex post participation holds. Imposing positive frequency floors changes the picture. At exposure $1/5$, rule~1 charges the winner $14$ for an object worth $9$, even though her preserved evaluated payment remains $6$. The two interior columns use the spread-minimizing closed form and are illustrative rather than cap-attaining; Theorem~\ref{thm:frontier} gives the exact interior cap frontier. The example therefore displays both the clean endpoint and the liquidity and participation costs that arise away from it.
\end{example}

\paragraph{Expected utility and non-worst-label tie-breaking.}
The incentive guarantee concerns the compound family under the maintained max--min criterion, not each component formula. In the canonical support-only menu, fix the other bids and let a losing bidder $i$ raise her bid so that the second price rises from $s$ to $s'>s$ while she remains a loser. Her payment stays zero under adverse label $i$ and falls from $-s/(n-1)$ to $-s'/(n-1)$ under every non-adverse label. A report-independent subjective distribution $\pi_i\in\Delta(\N)$ with $\pi_i^i<1$ therefore makes the deviation strictly profitable: its expected-utility gain is
\[
  \frac{(1-\pi_i^i)(s'-s)}{n-1}>0.
\]
No revelation of the label before bidding is needed. Under pure max--min evaluation, by contrast, label $i$ remains worst and the two reports have the same evaluated payment. The deviation therefore ties with truth under the maintained criterion while being strictly better under every non-worst label. Pure Vickrey also makes this loser-preserving report payoff-equivalent. Proposition~\ref{prop:pointwise_incentive} and max--min DSIC remain intact; what is weaker is robustness to expected utility or to refinements that give non-worst labels positive relevance.

Known departures from pure max--min can sometimes be accommodated by recalibrating the menu. Proposition~\ref{prop:rho_repair} covers homogeneous payment evaluations when their aggregate adverse-label coefficient exceeds one, and Corollary~\ref{cor:alpha_repair} gives the sharp alpha-maxmin threshold. If ambiguity attitudes are unknown, the endpoint family has no such uniform guarantee: exact preservation becomes an additional calibration requirement, as Proposition~\ref{prop:no-uniform-calibration} and Appendix~\ref{app:calibration} illustrate.

\paragraph{Contrast with Bailey--Cavallo redistribution.}
The preceding deviation is the flip side of the construction. Bailey--Cavallo redistribution follows the standard Groves route: agent $i$'s rebate is a function only of the other agents' reports, so it does not disturb agent $i$'s Groves incentive comparison \citep{Bailey1997,Cavallo2006}. The support-only repair instead permits an own-report-dependent rebate under a non-adverse label because pure max--min evaluation is determined by the adverse label.

In a single-item Vickrey auction, Cavallo's rebate to bidder $i$ is $v_{-i}^{(2)}/n$, where $v_{-i}^{(2)}$ is the second-highest bid among the other bidders. For the same bids, the rebates are $(1/3,1/3,2)$, so Cavallo's mechanism returns $8/3$ of the VCG revenue and leaves $10/3$ with the operator. It retains standard preferences and rule-by-rule participation. The ambiguous repair instead returns all revenue and balances every rule. At the support-only endpoint it also preserves rule-by-rule participation without increasing the cap; in the full-support interior it can require larger payments and violate rule-by-rule participation. The comparison isolates the trade-off between single-rule robustness and exact recycling under the maintained worst-case evaluation.

This is not a claim that eliminating operator revenue is always welfare improving. When retention is permitted, Bailey--Cavallo offers the more robust compromise. The ambiguous repair addresses institutions in which retaining the residual is itself infeasible, so the relevant alternative to exact recycling may be an unbalanced settlement or suspension of the original allocation rule.

Appendix~\ref{sec:crossed} gives the institutionally distinct crossed branch, which repairs deficits through disagreement about the selection device; none of the main economic guarantees relies on it.

% ========================================================================
\section{Transfer Capacity}\label{sec:price}
% ========================================================================

This section studies how positive frequency floors affect transfer capacity. It derives a lower bound for arbitrary finite menus, the exact frontier in the canonical architecture, and the fixed-cap limit as exposure vanishes. The appendices give the full capped frontier and heterogeneous incidence.

\subsection{A menu-size-robust lower bound}\label{subsec:arbitrary_menu_bound}

Let $J$ be any finite nonempty label set and let $k\in\Delta(J)$ be common across agents. Extend the common-contamination definition in \eqref{eq:common_contamination_set} to $J$. At exposure $\varepsilon$, agent~$i$'s worst-case evaluated payment from a row $z_i=(z_i^\ell)_{\ell\in J}$ is
\begin{equation}\label{eq:arbitrary_menu_evaluation}
  (1-\varepsilon)\sum_{\ell\in J}k^\ell z_i^\ell
  +\varepsilon\max_{\ell\in J}z_i^\ell.
\end{equation}
This representation does not assign adverse labels in advance.

\begin{proposition}[Inverse-exposure identity for arbitrary finite menus]\label{prop:arbitrary_menu_identity}
Fix a state, an original payment vector $x$, its surplus $s:=\sum_{i\in\N}x_i$, an exposure $\varepsilon\in(0,1]$, any finite label set $J$, and a common reference distribution $k\in\Delta(J)$. Suppose a payment-rule menu $\{z^\ell\}_{\ell\in J}$ balances exactly at every label and preserves each original payment in the sense that
\[
  (1-\varepsilon)\sum_{\ell\in J}k^\ell z_i^\ell
  +\varepsilon\max_{\ell\in J}z_i^\ell=x_i
  \qquad\text{for every }i\in\N.
\]
Then
\begin{equation}\label{eq:arbitrary_menu_identity}
  \sum_{i\in\N}\max_{\ell\in J}z_i^\ell=\frac{s}{\varepsilon}.
\end{equation}
Consequently, if $m:=\max_{i\in\N,\,\ell\in J}|z_i^\ell|$ is the menu's absolute-transfer cap, then
\begin{equation}\label{eq:arbitrary_menu_cap_bound}
  m\ge\frac{s}{n\varepsilon}.
\end{equation}
\end{proposition}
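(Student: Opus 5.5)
The plan is to sum the preservation identities over agents and use exact balance label by label to eliminate the reference-weighted term. Summing the displayed identity over $i\in\N$ gives
\[
  (1-\varepsilon)\sum_{\ell\in J}k^\ell\sum_{i\in\N}z_i^\ell
  +\varepsilon\sum_{i\in\N}\max_{\ell\in J}z_i^\ell
  =\sum_{i\in\N}x_i=s.
\]
Exact balance at every label makes each inner column sum $\sum_i z_i^\ell$ vanish. The whole reference-weighted term is therefore zero, whatever $k$ is. What remains is $\varepsilon\sum_i\max_\ell z_i^\ell=s$. Dividing by $\varepsilon>0$ yields \eqref{eq:arbitrary_menu_identity}.

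For the cap bound \eqref{eq:arbitrary_menu_cap_bound}, each row maximum satisfies $\max_\ell z_i^\ell\le\max_\ell|z_i^\ell|\le m$. Summing over the $n$ agents gives $s/\varepsilon\le nm$, hence $m\ge s/(n\varepsilon)$.

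No adverse-label assignment, dominance condition, or menu-size restriction is needed. The evaluation \eqref{eq:arbitrary_menu_evaluation} already contains the maximum, and $k$ is common across agents, which is exactly what lets balance annihilate the first term. The bound carries information only when $s>0$. When $s\le0$ it is trivially true, and the identity still holds as stated.

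There is essentially no obstacle; the argument is a two-line aggregation. The only point to check is that the common reference $k$ is the same for every agent, so that the interchange $\sum_i\sum_\ell k^\ell z_i^\ell=\sum_\ell k^\ell\sum_i z_i^\ell$ produces column sums that balance kills. With agent-specific references the cancellation would fail, which is why the statement is posed under the common-contamination benchmark.
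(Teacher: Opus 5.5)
Your proposal is correct and follows the paper's proof essentially verbatim: sum the preservation identities over agents, let exact column balance kill the common $k$-weighted term to get $\varepsilon\sum_i\max_\ell z_i^\ell=s$, then bound each row maximum by $m$. Your remarks that the bound is trivial when $s\le0$ (since $m\ge0$) and that the cancellation relies on $k$ being common across agents are both accurate.
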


\begin{proof}
Sum the preservation identities over agents. The reference-weighted term vanishes because every label balances, leaving
\[
  s=(1-\varepsilon)\sum_{\ell\in J}k^\ell\sum_i z_i^\ell
    +\varepsilon\sum_i\max_{\ell\in J}z_i^\ell
   =\varepsilon\sum_i\max_{\ell\in J}z_i^\ell.
\]
This proves \eqref{eq:arbitrary_menu_identity}. If every absolute payment is at most $m$, each row maximum is at most $m$, which gives \eqref{eq:arbitrary_menu_cap_bound}.
\end{proof}

The identity needs neither one label per agent, fixed adverse labels, dominance, nonnegative original payments, nor a full-support reference. For negative surplus the displayed cap bound is vacuous, and the proposition does not establish attainability or a complete frontier for arbitrary menus. Its role is to show that the inverse-exposure burden cannot be removed merely by adding labels.

The lower bounds restrict every family that preserves the weights agents actually use and do not require the designer to know those weights. Attaining menus, by contrast, must be calibrated to the stated reference and exposure; Proposition~\ref{prop:no-uniform-calibration} and Appendix~\ref{app:calibration} give the exact and approximate limits.

\subsection{Canonical surplus absorption}\label{sec:identity}

The engine of this section is the observation that, under a common off-diagonal exposure, preservation yields an exact surplus-absorption identity rather than only a transfer-cap inequality.
For any payment-rule family, write
\[
  T_j(\omega):=\sum_{i\in\N}x_i^j(\omega)
\]
for the aggregate payment, or column imbalance, under label $j$ at state $\omega$.

\begin{lemma}[Surplus-absorption identity under common exposure]\label{lem:trace}
Fix a state $\omega$ and suppress it, and let $\varepsilon_i^j=\varepsilon\in[0,1)$ for all $i$ and $j\neq i$; the own-label exposure $\varepsilon_i^i$ is irrelevant for what follows. Then any payment-rule family satisfying the preservation condition \eqref{eq:preserve} at $\omega$ obeys
\[
  s\;=\;\sum_{j\in\N}k^jT_j\;+\;\frac{\varepsilon}{1-\varepsilon}\sum_{i\in\N}\bigl(x_i^i-x_i\bigr).
\]
In particular, if the family is exactly budget balanced at $\omega$ and $\varepsilon>0$, then
\[
  \sum_{i\in\N}x_i^i\;=\;\frac{s}{\varepsilon}\,.
\]
\end{lemma}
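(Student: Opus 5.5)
The plan is to reduce everything to one linear identity per agent, sum it over agents, and then rearrange. First I will write down the preserving weights explicitly. The lemma fixes a common off-diagonal exposure $\varepsilon_i^j=\varepsilon$ for every $j\neq i$, so \eqref{eq:qstar} gives
\[
q_i^{j\ast}=(1-\varepsilon)k^j \quad (j\neq i),
\qquad
q_i^{i\ast}=1-(1-\varepsilon)(1-k^i)=(1-\varepsilon)k^i+\varepsilon .
\]
The own-label exposure $\varepsilon_i^i$ never appears in these formulas, which is why it is irrelevant. In vector form this is the contamination weight \eqref{eq:common_contamination_weights}, $q_i^\ast=(1-\varepsilon)k+\varepsilon e_i$.

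Next I substitute these weights into the preservation condition \eqref{eq:preserve} at the fixed state. This yields, for every agent $i$,
\[
x_i=(1-\varepsilon)\sum_{j\in\N}k^jx_i^j+\varepsilon x_i^i .
\]
Summing over $i$ and exchanging the order of summation turns $\sum_i\sum_j k^jx_i^j$ into $\sum_j k^jT_j$. This gives the intermediate identity
\[
s=(1-\varepsilon)\sum_{j\in\N}k^jT_j+\varepsilon\sum_{i\in\N}x_i^i .
\]
The only step that needs any care is converting this to the displayed form of the lemma, and that is routine algebra. Because $\varepsilon<1$, I can solve for $\sum_j k^jT_j=(s-\varepsilon\sum_i x_i^i)/(1-\varepsilon)$. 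Adding $\frac{\varepsilon}{1-\varepsilon}\bigl(\sum_i x_i^i-s\bigr)$ to this expression then returns exactly $s$, which is the stated identity.

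For the second claim, exact balance at $\omega$ means $T_j=0$ for every label $j$. The intermediate identity then collapses to $s=\varepsilon\sum_i x_i^i$, and dividing by $\varepsilon>0$ gives $\sum_i x_i^i=s/\varepsilon$. This derivation does not need $\varepsilon<1$, and it is consistent with Proposition~\ref{prop:arbitrary_menu_identity}. I expect no substantive obstacle anywhere in the argument; the main thing to keep track of is the restriction $\varepsilon<1$, which is needed only for the rearranged form of the first identity.
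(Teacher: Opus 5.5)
Your proposal is correct and follows essentially the same route as the paper's proof. You rewrite each agent's preservation identity as $x_i=(1-\varepsilon)\sum_j k^jx_i^j+\varepsilon x_i^i$ using the weights from \eqref{eq:qstar}, sum over agents by exchanging the order of summation, and rearrange, with exact balance collapsing the identity to $\sum_i x_i^i=s/\varepsilon$. The only cosmetic difference is that the paper divides by $1-\varepsilon$ agent by agent before summing, whereas you sum first and divide afterward.
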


\begin{proof}
By \eqref{eq:qstar}, preservation for agent $i$ reads $(1-\varepsilon)\sum_{j\neq i}k^jx_i^j+\bigl[k^i+\varepsilon(1-k^i)\bigr]x_i^i=x_i$. Adding and subtracting $(1-\varepsilon)k^ix_i^i$ turns the diagonal coefficient into $\varepsilon$:
\[
  (1-\varepsilon)\sum_{j\in\N}k^jx_i^j+\varepsilon\,x_i^i=x_i,
  \qquad\text{i.e.,}\qquad
  \sum_{j\in\N}k^jx_i^j=\frac{x_i-\varepsilon x_i^i}{1-\varepsilon}.
\]
Summing over $i$ and exchanging the order of summation gives $\sum_jk^jT_j=\sum_i(x_i-\varepsilon x_i^i)/(1-\varepsilon)$, and therefore
\[
  \sum_{j\in\N}k^jT_j+\frac{\varepsilon}{1-\varepsilon}\sum_{i\in\N}(x_i^i-x_i)
  =\sum_{i\in\N}\frac{(x_i-\varepsilon x_i^i)+\varepsilon(x_i^i-x_i)}{1-\varepsilon}
  =\sum_{i\in\N}x_i=s.
\]
Under exact balance $T_j\equiv0$, so $\sum_i(x_i^i-x_i)=s(1-\varepsilon)/\varepsilon$, i.e., $\sum_ix_i^i=s/\varepsilon$.
\end{proof}

The exact-balance form is independent of the reference distribution: adverse-label payments must absorb the surplus grossed up by inverse exposure. This is an identity, not merely a bound, and only preservation is used before the imbalance term is set to zero.

\begin{corollary}[Sharpened cap--imbalance frontier]\label{cor:capfrontier}
In the setting of Lemma~\ref{lem:trace}, suppose $|T_j|\le\bar\delta$ for all $j$ and $x_i^i\le\bar m$ for all $i$. Then
\[
  s\;\le\;(1-\varepsilon)\,\bar\delta\;+\;n\varepsilon\,\bar m .
\]
\end{corollary}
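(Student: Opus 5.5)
The plan is to start from the exact identity in Lemma~\ref{lem:trace} and bound each of its two terms separately. The lemma gives
\[
  s=\sum_{j\in\N}k^jT_j+\frac{\varepsilon}{1-\varepsilon}\sum_{i\in\N}\bigl(x_i^i-x_i\bigr).
\]
The first term is easy. Since $k\in\Delta(\N)$ and $|T_j|\le\bar\delta$, it is at most $\bar\delta$. The second term, however, involves the original payments $x_i$, which are not controlled by the hypotheses. The main step is therefore to rewrite the identity so that only the adverse-label payments $x_i^i$ and the column imbalances $T_j$ appear.

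To do this I would go back to the row form of preservation derived inside the proof of Lemma~\ref{lem:trace},
\[
  (1-\varepsilon)\sum_{j\in\N}k^jx_i^j+\varepsilon x_i^i=x_i.
\]
Summing over $i$ and exchanging the order of summation gives directly
\[
  s=(1-\varepsilon)\sum_{j\in\N}k^jT_j+\varepsilon\sum_{i\in\N}x_i^i.
\]
This is the same identity as in Lemma~\ref{lem:trace}, rearranged. Substituting $x_i=(1-\varepsilon)\sum_jk^jx_i^j+\varepsilon x_i^i$ back into the lemma's expression confirms the equivalence, so the $x_i$ terms are eliminated.

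The bound then follows at once. The first term is at most $(1-\varepsilon)\bar\delta$, because the $k^j$ are probability weights and $|T_j|\le\bar\delta$. The second term is at most $n\varepsilon\bar m$, because each $x_i^i\le\bar m$ and $\varepsilon\ge0$. Together these give $s\le(1-\varepsilon)\bar\delta+n\varepsilon\bar m$.

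The only real obstacle is using the correct form of the identity. Plugging the bounds naively into the lemma's displayed form would leave an uncontrolled $-\sum_ix_i$ term and an extra $1/(1-\varepsilon)$ factor, and that route does not produce the stated bound. Note also that the lemma allows $\varepsilon=0$; in that case the bound reduces to $s\le\bar\delta$, which the rearranged identity covers as well.
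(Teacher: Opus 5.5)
Your proof is correct and is essentially the paper's argument. Both rest on the same summed preservation identity; you rearrange it to $s=(1-\varepsilon)\sum_j k^jT_j+\varepsilon\sum_i x_i^i$ before bounding, whereas the paper bounds first and rearranges afterward. Your closing remark is mistaken, however. The paper does plug the bounds directly into the lemma's displayed form, and the $-\sum_i x_i$ term is not uncontrolled, since it equals $-s$. So $s\le\bar\delta+\frac{\varepsilon}{1-\varepsilon}(n\bar m-s)$, multiplied by $1-\varepsilon>0$, rearranges to exactly the stated bound.
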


\begin{proof}
The identity gives $s\le\bar\delta+\frac{\varepsilon}{1-\varepsilon}\sum_i(\bar m-x_i)=\bar\delta+\frac{\varepsilon}{1-\varepsilon}(n\bar m-s)$, since $\sum_jk^j|T_j|\le\bar\delta$ and $\sum_ix_i=s$. Rearranging yields the display.
\end{proof}

The identity need not be the only binding restriction. With three agents, the off-diagonal entries can impose an additional allocation bound. The following lemma isolates it for later use.

\begin{lemma}[Three-agent allocation bound]\label{lem:three_allocation}
Let $n=3$, $k^j\equiv1/3$, and $\varepsilon_i^j=\varepsilon\in(0,1)$ for all $i$ and $j\neq i$. Fix a single-winner state $x=(t,0,0)$ with $t>0$. Any preserving family whose entries satisfy $|x_i^j|\le\bar m$ obeys
\begin{equation}\label{eq:three_allocation_bound}
  \max_{j\in\N}|T_j|
  \;\ge\;
  \frac{(1+2\varepsilon)t-\varepsilon(4-\varepsilon)\bar m}
       {1-\varepsilon^2}.
\end{equation}
Dominance is not needed for this lower bound.
\end{lemma}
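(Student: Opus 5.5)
The plan is to treat \eqref{eq:three_allocation_bound} as a linear-programming bound and prove it with an explicit dual certificate. Fix the state and suppress it. Write $c:=(1-\varepsilon)/3$ for the common reference weight, so that by \eqref{eq:qstar} agent $i$'s preserving weights are $c+\varepsilon$ on label $i$ and $c$ on each other label. Preservation \eqref{eq:preserve} is then, row by row,
\[
  c\sum_{j\in\N}x_i^j+\varepsilon x_i^i=x_i,
\]
with $x_1=t$ and $x_2=x_3=0$. As in Lemma~\ref{lem:trace}, only preservation is used; dominance never enters, which is why the statement can drop it.

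First I fix column weights $\lambda=(\lambda_1,\lambda_2,\lambda_3)$ with $\sum_j|\lambda_j|=1$, so that $\max_j|T_j|\ge\sum_j\lambda_jT_j$. I then pick row multipliers $\mu_i$ and subtract $\mu_i$ times row $i$'s preservation identity. This gives
\[
  \sum_j\lambda_jT_j=\mu_1t+\sum_{i,j}\bigl(\lambda_j-\mu_i(c+\varepsilon\mathbf 1\{i=j\})\bigr)x_i^j
  \;\ge\;\mu_1t-\bar m\sum_{i,j}\bigl|\lambda_j-\mu_i(c+\varepsilon\mathbf 1\{i=j\})\bigr| .
\]
The bound uses $|x_i^j|\le\bar m$ entrywise. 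The task is therefore to choose $(\lambda,\mu)$ so that the right-hand side equals the numerator of \eqref{eq:three_allocation_bound} divided by $1-\varepsilon^2$.

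By the symmetry between the two losers I set $\lambda_2=\lambda_3=:\lambda$ and $\mu_2=\mu_3$. I choose $\mu_1=\lambda/c=3\lambda/(1-\varepsilon)$, which kills the winner's two off-diagonal entries $x_1^2$ and $x_1^3$. Matching the $t$-coefficient to $(1+2\varepsilon)/(1-\varepsilon^2)$ then forces $\lambda=(1+2\varepsilon)/(3(1+\varepsilon))$. Normalization leaves $|\lambda_1|=1-2\lambda=(1-\varepsilon)/(3(1+\varepsilon))$. I will take $\lambda_1$ negative, because the winner's adverse entry $x_1^1$ has a large positive coefficient $-\mu_1\varepsilon$ that a negative $\lambda_1$ partly offsets.

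For each loser row, the multiplier $\mu_2$ multiplies $x_2=0$, so it affects only the residual coefficients $\lambda_1-\mu_2c$, $\lambda-\mu_2(c+\varepsilon)$ and $\lambda-\mu_2c$. I choose $\mu_2$ to minimize the sum of their absolute values. This is a one-dimensional piecewise-linear problem, and its minimizer sits at a breakpoint, plausibly the one that zeroes the diagonal entry, $\mu_2=\lambda/(c+\varepsilon)$.

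The main obstacle is the final verification. I must add the residual absolute coefficients (one from row~1, three from each loser row) and check that, after multiplying through by $1-\varepsilon^2$, they total exactly $\varepsilon(4-\varepsilon)$. If this particular $\mu_2$ or the sign of $\lambda_1$ gives a different constant, I will solve the small dual LP honestly. In that LP the variables are $\lambda_1$, $\lambda$, $\mu_1$ and $\mu_2$; the objective is to maximize $\mu_1t-\bar m\sum|\cdot|$ over the relevant breakpoint regimes, with the regime chosen so that the bound is nontrivial, i.e.\ the numerator is positive.

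As a cross-check I will exhibit a preserving family attaining equality. Such a family should be symmetric in the losers, have the adverse entry $x_1^1$ and the loser off-diagonals at $\pm\bar m$, and have $|T_1|=|T_2|=|T_3|$. Its existence would confirm by complementary slackness that the chosen certificate is the right one.
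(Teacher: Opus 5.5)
Your duality framework is sound and is the dual form of the paper's argument. The paper first symmetrizes over agents and labels $2$ and $3$. It then adds $\frac{1-\varepsilon}{2(1+2\varepsilon)}$ times the column-$1$ inequality $U\le D$ to the loser-column inequality $V\le D$, eliminating the loser diagonal through loser preservation. Your loser-column weight $\lambda=\frac{1+2\varepsilon}{3(1+\varepsilon)}$ and your $\mu_1=\frac{1+2\varepsilon}{1-\varepsilon^2}$ agree with that certificate; a certificate also makes the paper's symmetrization step unnecessary.

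\textbf{The gap: the sign of $\lambda_1$.} The step that fails is your choice $\lambda_1<0$, and the reason you give is reversed. The coefficient of $x_1^1$ is $\lambda_1-\mu_1(c+\varepsilon)=\lambda_1-\lambda-\mu_1\varepsilon$. This is negative because $\mu_1>0$, so a negative $\lambda_1$ enlarges it rather than offsetting it. With $\lambda_1=-\frac{1-\varepsilon}{3(1+\varepsilon)}$, the row-$1$ residual is $\frac{(1-\varepsilon)^2+(1+2\varepsilon)^2}{3(1-\varepsilon^2)}$, which does not vanish as $\varepsilon\to0$. The optimal $\mu_2$ is still $1/(1+\varepsilon)$, and the total residual comes to $\frac{2+\varepsilon^2}{1-\varepsilon^2}$ instead of $\frac{\varepsilon(4-\varepsilon)}{1-\varepsilon^2}$. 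Winner preservation writes $t$ as a convex combination of row $1$, so every capped preserving family has $\bar m\ge t$. Your bound $\frac{(1+2\varepsilon)t-(2+\varepsilon^2)\bar m}{1-\varepsilon^2}$ is therefore at most $-\frac{(1-\varepsilon)t}{1+\varepsilon}<0$, which is vacuous. As committed, the proof does not go through; the fallback ``solve the LP honestly'' is only a placeholder.

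\textbf{The fix.} Take $\lambda_1=+\frac{1-\varepsilon}{3(1+\varepsilon)}$. Then $\lambda_1/\lambda=c/(c+\varepsilon)$, so $\mu_2=\mu_3=\frac{1}{1+\varepsilon}$ kills the coefficients of $x_2^1$, $x_2^2$, $x_3^1$ and $x_3^3$ at once. This value is also the minimizer of your one-dimensional problem. The surviving absolute residuals are $\frac{\varepsilon(2+\varepsilon)}{1-\varepsilon^2}$ on $x_1^1$ and $\frac{\varepsilon}{1+\varepsilon}$ on each of $x_2^3$ and $x_3^2$. They sum to
\[
  \frac{\varepsilon(2+\varepsilon)}{1-\varepsilon^2}+\frac{2\varepsilon}{1+\varepsilon}
  =\frac{\varepsilon(4-\varepsilon)}{1-\varepsilon^2},
\]
which gives exactly \eqref{eq:three_allocation_bound}.

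Your planned complementary-slackness cross-check would have caught the sign. In the paper's attaining family \eqref{eq:three_tight_family}, $x_1^1=\bar m$ and $x_2^3=x_3^2=-\bar m$, and all three column sums equal the same positive value $L_1(\bar m)$. Equal positive column sums are incompatible with a tight certificate that puts negative weight on $T_1$. Note also that in that family only the loser-to-loser off-diagonal entries sit at the cap; the column-$1$ loser entries lie strictly inside it.
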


\begin{proof}
See Appendix~\ref{app:three_bound}.
\end{proof}

Economically, surplus can be absorbed through either rule-level imbalance or adverse-label transfer capacity. Exact balance closes the first channel, so positive frequency floors can raise the required transfer capacity. Corollary~\ref{cor:capfrontier} identifies this accounting constraint; the full single-winner frontier also incorporates the winner-preservation floor and, with three agents, an allocation bound. Proposition~\ref{prop:general_transfer_lb} gives the corresponding architecture-robust comparison.

\subsection{The exact transfer-cap frontier}\label{sec:frontier}

For a state $\omega$, define the minimal transfer cap compatible with exact repair---the formal liquidity price of exactness---
\[
  m^\ast(\omega)\;:=\;\inf\Bigl\{\,\max_{i,j\in\N}\bigl|x_i^j(\omega)\bigr|\;:\;\{x^j(\omega)\}_{j\in\N}\ \text{satisfies \eqref{eq:monotone}, \eqref{eq:bb}, and \eqref{eq:preserve} at }\omega\Bigr\},
\]
with $m^\ast(\omega)=+\infty$ if no such family exists. The benchmark state is \emph{single-winner}: $x(\omega)=(t,0,\dots,0)$ with $t>0$ after relabeling agents, so $s(\omega)=t$. This is exactly the transfer profile of a single-item Vickrey state.

\begin{theorem}[Exact transfer-cap frontier in the single-winner domain]\label{thm:frontier}
Let the preserving weights be generated by \eqref{eq:Deltaqi}--\eqref{eq:qstar}. Let $n\ge3$ and $\varepsilon_i^j=\varepsilon\in(0,1)$ for all $i$ and $j\neq i$, and fix a single-winner state $\omega$.
\begin{enumerate}
\item[(i)] For every full-support reference $k$, every family satisfying \eqref{eq:bb} and \eqref{eq:preserve} at $\omega$ has
\[
  \max_{i,j\in\N}\bigl|x_i^j(\omega)\bigr|\;\ge\;\max\Bigl\{\,t,\ \frac{t}{n\varepsilon}\Bigr\}.
\]
\item[(ii)] If $k^j\equiv1/n$, the frontier over the entire interior exposure range is
\begin{equation}\label{eq:full_exact_frontier}
m^\ast(\omega)=
\begin{cases}
\displaystyle \max\Bigl\{t,\frac{t}{n\varepsilon}\Bigr\},
  & n\ge4,\\[7pt]
\displaystyle \max\Bigl\{\frac{t}{3\varepsilon},
  \frac{(1+2\varepsilon)t}{\varepsilon(4-\varepsilon)}\Bigr\},
  & n=3.
\end{cases}
\end{equation}
Every branch of \eqref{eq:full_exact_frontier} is attained. Appendix~\ref{app:frontier} provides explicit equal-split, thick-exposure, and three-agent attaining families.
\end{enumerate}
\end{theorem}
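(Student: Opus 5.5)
The plan is to treat part (i) and the lower-bound half of part (ii) as consequences of the preservation algebra already established, and then supply attaining families branch by branch.

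\emph{Part (i).} Fix the single-winner state and suppress it. First, agent~1's preservation identity \eqref{eq:preserve} writes $t=x_1$ as a convex combination, with weights $q_1^\ast$, of the entries of her row. Hence some entry satisfies $x_1^j\ge t$, and the cap is at least $t$. Second, Lemma~\ref{lem:trace} under exact balance gives
\[
  \sum_i x_i^i=\frac{t}{\varepsilon}.
\]
So some diagonal entry is at least $t/(n\varepsilon)$. Neither step uses dominance or uniformity of $k$; the trace lemma only needs the common off-diagonal exposure. That yields the bound in (i) for every full-support reference.

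\emph{Lower bounds in (ii).} For $n\ge4$ the lower bound is exactly (i). For $n=3$, apply Lemma~\ref{lem:three_allocation} with $T_j\equiv0$ (exact balance) and $\bar m$ the cap. This forces
\[
  (1+2\varepsilon)t\le\varepsilon(4-\varepsilon)\bar m .
\]
Combine it with the trace bound $t/(3\varepsilon)$. The floor $t$ is redundant at $n=3$, because
\[
  (1+2\varepsilon)-\varepsilon(4-\varepsilon)=(1-\varepsilon)^2\ge0 .
\]
This explains why $t$ disappears from the $n=3$ branch.

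\emph{Attaining families.} For $k^j\equiv1/n$ I rewrite each agent's preservation as in the proof of Lemma~\ref{lem:trace}:
\[
  \frac{1-\varepsilon}{n}\sum_j x_i^j+\varepsilon x_i^i=x_i .
\]
I then parametrize a family by its diagonal $(d_i)$, subject to $\sum_i d_i=t/\varepsilon$, and choose off-diagonal entries to meet each row's required mean and each column's zero sum. I plan two regimes.
\begin{itemize}
\item \emph{Equal split} ($\varepsilon\le1/n$). Set all $d_i=t/(n\varepsilon)$. The winner's off-diagonals are set constant at a level fixed by her row mean. Each loser's off-diagonal negatives are spread evenly over the labels other than her own, with the remaining freedom used to zero the columns.
\item \emph{Thick exposure} ($\varepsilon>1/n$). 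Set $d_1=t$. Then the winner's whole row equals $t$, since dominance and mean $t$ force constancy. The losers' diagonals share $t/\varepsilon-t$ equally, and their off-diagonal entries absorb the winner's $t$ in each column.
\end{itemize}
For each family I verify four things: dominance (off-diagonal entries are at most the diagonal), exact balance, preservation, and that every absolute entry is at most the claimed $m^\ast$.

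\emph{Main obstacle.} The hard step is the cap verification for the losers' negative entries. With $n\ge4$ there are at least three losers, so the columns' negative mass can be spread thinly enough to stay within $\max\{t,t/(n\varepsilon)\}$; I will check this inequality explicitly in both regimes. With $n=3$ there are only two losers, and this spreading is exactly what Lemma~\ref{lem:three_allocation} obstructs. There I would first try making the losers' rows symmetric under swapping agents $2$ and $3$, and setting as many entries as possible equal to $\pm\bar m$ with $\bar m$ the claimed frontier value. I would then solve the resulting small linear system, which should reproduce equality in \eqref{eq:three_allocation_bound} on one branch and the trace bound on the other. The three-agent case, and identifying where the two $n=3$ branches cross, is where I expect the algebra to be most delicate.
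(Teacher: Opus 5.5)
Your proposal is correct and follows the paper's proof essentially step for step. It uses the same two universal bounds (the winner's convex-combination floor and the trace identity), Lemma~\ref{lem:three_allocation} with zero imbalance for $n=3$, and the same equal-split ($\varepsilon\le1/n$) and thick-exposure ($\varepsilon\ge1/n$) families for $n\ge4$. For $n=3$ the paper uses the equal-split family only on $\varepsilon\le1/7$. On $\varepsilon\ge1/7$ it uses exactly the swap-symmetric matrix your ansatz produces, with $x_1^1=\bar m$, $x_2^3=x_3^2=-\bar m$ and $x_2^1=x_3^1=-\bar m/2$; solving forces $\bar m=(1+2\varepsilon)t/[\varepsilon(4-\varepsilon)]$.
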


\begin{proof}
See Appendix~\ref{app:frontier}.
\end{proof}

The winner's preservation constraint requires some entry of her row to reach $t$, while Lemma~\ref{lem:trace} forces the diagonal to sum to $t/\varepsilon$. For four or more agents, equal splitting attains the diagonal bound until the winner floor $t$ takes over; the thick-exposure family then attains that floor. With three agents, Lemma~\ref{lem:three_allocation} adds an off-diagonal allocation bound. The second term in the three-agent formula already weakly exceeds the winner floor because $(1+2\varepsilon)/[\varepsilon(4-\varepsilon)]\ge1$ is equivalent to $(1-\varepsilon)^2\ge0$; this is why $t$ does not appear separately. At $\varepsilon=1/7$, this constraint replaces equal diagonal splitting as the binding one.

The interior frontier connects continuously to the support-only endpoint. In a single-winner state, Corollary~\ref{cor:frequency_unrestricted} gives $m^\ast(\omega)=t$ at that endpoint for every number of agents. More generally, its endpoint cap formula applies to every componentwise nonnegative payment vector, not only to single-winner states.

For $n\ge4$, the interior frontier already equals the original winner payment $t$ once $\varepsilon\ge1/n$. The full-support participation impossibility in Theorem~\ref{thm:full_support_participation_impossibility} nevertheless remains. In this benchmark, adding agents can therefore eliminate the additional liquidity cost without restoring rule-by-rule participation.

On the equal-split branch, the single-winner state is not special. Because the feasible set is defined by linear equalities and inequalities, convexity propagates that branch to \emph{every} nonnegative transfer profile. Outside this domain, mixing single-winner constructions still gives upper bounds, but the exact cap need not depend only on the aggregate surplus.

Identifying label distributions with vectors in $\mathbb R^n$, $e_i$ is the $i$th standard basis vector.

For the equal-split and surplus-only extensions, define
\[
  \hat\varepsilon(3):=\tfrac17,
  \qquad
  \hat\varepsilon(n):=\tfrac1n\quad(n\ge4).
\]

\begin{corollary}[Surplus-only exact frontier on the equal-split domain]\label{cor:anyx}
Let the preserving weights be generated by \eqref{eq:Deltaqi}--\eqref{eq:qstar}. Let $n\ge3$, $k^j\equiv1/n$, and $\varepsilon_i^j=\varepsilon\in(0,\hat\varepsilon(n)]$ for all $j\neq i$. Then for every state $\omega$ whose transfer vector is componentwise nonnegative, $x_i(\omega)\ge0$ for all $i$,
\[
  m^\ast(\omega)=\frac{s(\omega)}{n\varepsilon}\qquad\text{whenever }s(\omega)>0,
\]
and $m^\ast(\omega)=0$ when $s(\omega)=0$. The minimal cap is therefore the same for every distribution of a given surplus across agents; and since VCG payments are componentwise nonnegative, the frontier prices the exact repair of every VCG state.
\end{corollary}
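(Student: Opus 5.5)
The plan is to prove matching lower and upper bounds on $m^\ast(\omega)$ at a fixed nonnegative state. The case $s(\omega)=0$ is immediate. Nonnegativity then forces $x(\omega)=0$, so the constant zero family is feasible: it satisfies \eqref{eq:monotone} with equality, \eqref{eq:preserve}, and \eqref{eq:bb}, and has cap $0$.

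Now assume $s>0$. For the \textbf{lower bound} I will invoke Lemma~\ref{lem:trace}. Under exact balance it gives $\sum_i x_i^i=s/\varepsilon$, so some diagonal entry is at least $s/(n\varepsilon)$. Equivalently, Proposition~\ref{prop:arbitrary_menu_identity} gives $m\ge s/(n\varepsilon)$ for any preserving balanced family, with or without dominance. This part uses neither the single-winner structure nor nonnegativity.

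For the \textbf{upper bound} I will use convexity, as the discussion after Theorem~\ref{thm:frontier} suggests. Fix the exposure and the uniform reference. The set of families satisfying \eqref{eq:monotone}, \eqref{eq:bb}, and \eqref{eq:preserve} at a state is then described by linear constraints whose right-hand sides are linear in $x(\omega)$. So if $F^{(h)}$ is feasible for the payment vector $x^{(h)}$, then $\sum_h\lambda_h F^{(h)}$ is feasible for $\sum_h\lambda_h x^{(h)}$ whenever $\lambda_h\ge0$. Positive scaling is also allowed, since every constraint is homogeneous in the pair $(F,x)$.

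Write a nonnegative $x$ with $s>0$ as $x=\sum_h (x_h/s)\,(s\,e_h)$, a convex combination of single-winner vectors $s e_h$ (winner $h$, payment $t=s$). Let $F^{(h)}$ be the relabeled attaining family for the single-winner state $s e_h$. The combined family $F=\sum_h (x_h/s)F^{(h)}$ is then feasible for $x$. Its cap is at most $\sum_h (x_h/s)\max|F^{(h)}|$ by the triangle inequality, entrywise.

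It remains to check that on $\varepsilon\in(0,\hat\varepsilon(n)]$ the single-winner cap from Theorem~\ref{thm:frontier}(ii) equals $t/(n\varepsilon)$. For $n\ge4$ this holds because $t/(n\varepsilon)\ge t$ iff $\varepsilon\le 1/n$. For $n=3$ it reduces to
\[
  \frac{1}{3\varepsilon}\ \ge\ \frac{1+2\varepsilon}{\varepsilon(4-\varepsilon)}
  \iff 4-\varepsilon\ge 3+6\varepsilon
  \iff \varepsilon\le\tfrac17 .
\]
In both cases each $F^{(h)}$ has cap $s/(n\varepsilon)$, so the combined cap is at most $s/(n\varepsilon)$, matching the lower bound.

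The \textbf{main obstacle} is making sure the single-winner attaining families are available as genuine minimizers, since $m^\ast$ is defined as an infimum. The endpoint $\varepsilon=\hat\varepsilon(n)$, where two branches tie, needs the same check. I will rely on the statement that every branch of \eqref{eq:full_exact_frontier} is attained. If attainment were only approximate, I would use $\delta$-optimal single-winner families and let $\delta\to0$; this still gives the infimum.

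I also need that permuting agent labels maps a feasible single-winner family for winner $1$ to one for winner $h$. This holds because the uniform reference and common exposure make \eqref{eq:qstar} permutation-equivariant, as long as labels and agents are permuted together.

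The final remark follows because VCG payments are nonnegative by \eqref{eq:normalization}.
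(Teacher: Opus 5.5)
Your proposal is correct and follows the paper's proof essentially step for step. The lower bound comes from Lemma~\ref{lem:trace}; the upper bound writes a nonnegative $x$ as a convex combination of the permuted single-winner vectors $s\,e_h$ and mixes the attaining families of Theorem~\ref{thm:frontier}(ii) entrywise; and the case $s=0$ is handled by the zero family. Your explicit check that the single-winner frontier reduces to $t/(n\varepsilon)$ exactly when $\varepsilon\le\hat\varepsilon(n)$ is a useful detail the paper leaves implicit. One small correction: VCG nonnegativity follows from \eqref{eq:vcg} itself, by evaluating the maximand at $g^\ast(\omega)$, not from \eqref{eq:normalization}.
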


\begin{proof}
Write $x:=x(\omega)$ and $s:=s(\omega)$. The lower bound is Lemma~\ref{lem:trace}: exact balance forces $\sum_ix_i^i=s/\varepsilon$, so $\max_ix_i^i\ge s/(n\varepsilon)$. For the upper bound, note that $m^\ast(\omega)$ depends on the state only through the vector $x(\omega)$, and that the map $x\mapsto m^\ast(x)$ is convex: if $\{z^j\}$ and $\{\tilde z^j\}$ satisfy \eqref{eq:monotone}, \eqref{eq:bb}, and \eqref{eq:preserve} for the vectors $x$ and $\tilde x$ with caps $m$ and $\tilde m$, then the entrywise combination $\lambda z+(1-\lambda)\tilde z$ satisfies the same linear equalities and inequalities for $\lambda x+(1-\lambda)\tilde x$, with cap at most $\lambda m+(1-\lambda)\tilde m$. Because both the uniform reference and the common exposure profile are invariant under simultaneous permutations of agents and labels, the single-winner construction of Theorem~\ref{thm:frontier}(ii) applies to the vector $s\,e_i$ for every agent $i$, so $m^\ast(s\,e_i)=s/(n\varepsilon)$. Writing a nonnegative $x$ with $\sum_ix_i=s>0$ as the convex combination $x=\sum_i(x_i/s)(s\,e_i)$ and applying convexity yields $m^\ast(x)\le\sum_i(x_i/s)\cdot s/(n\varepsilon)=s/(n\varepsilon)$. If $s=0$ and $x\ge0$, then $x=0$ and the zero family is feasible.
\end{proof}

Appendix~\ref{app:price_supp} studies approximate preservation, compares the spread and cap criteria, and gives the linear-programming formulation.

\subsection{Fixed-cap imbalance and vanishing exposure}\label{sec:capped}

Theorem~\ref{thm:frontier} prices exactness in the interior: it reports the least transfer cap under which exact balance survives. Institutions often face the converse problem. Limited liability, margin requirements, or accounting conventions fix a hard cap on realized transfers, and the designer's objective is to come as close to balance as the cap allows. This subsection defines that capped repair problem and records the general lower bound needed for the vanishing-exposure result. Appendix~\ref{app:full_capped_frontier} derives the full sharp frontier and its attaining families. The support-only endpoint is treated separately by Corollary~\ref{cor:frequency_unrestricted}.

For a state $\omega$ and a cap $\bar m\ge0$, define the \emph{capped repair value}
\begin{equation}\label{eq:deltastar}
\begin{aligned}
  \delta^\ast_{\bar m}(\omega):=
  \inf\Bigl\{\,&\max_{j\in\N}\bigl|T_j(\omega)\bigr|:\,
  \{x^j(\omega)\}_{j\in\N}\ \text{satisfies \eqref{eq:monotone} and \eqref{eq:preserve} at $\omega$,}\\
  &\max_{i,j\in\N}\bigl|x_i^j(\omega)\bigr|\le\bar m\Bigr\}.
\end{aligned}
\end{equation}
Set $\delta^\ast_{\bar m}(\omega):=+\infty$ when the constraint set is empty. This is the minimum worst-rule absolute imbalance among capped preserving families. It is the companion of the minimal cap: since the constraint set in \eqref{eq:deltastar} is compact, the infimum is attained whenever finite, so $m^\ast(\omega)=\inf\{\bar m\ge0:\delta^\ast_{\bar m}(\omega)=0\}$; Remark~\ref{rem:capped_lp} describes the common value surface.

Write $[z]_+:=\max\{z,0\}$.

\long\def\FullCappedFrontier{%
\subsection{Full capped-repair frontier}\label{app:full_capped_frontier}

A deliberately simple family gives a first upper bound: scale the closed form. For $\gamma\in[0,1]$, define
\begin{equation}\label{eq:scaled}
  x_i^j(\omega;\gamma):=x_i(\omega)+\gamma\,\frac{\mathbf{1}\{j=i\}-q_i^{i\ast}}{R}\,s(\omega).
\end{equation}

The scaled family preserves evaluations and dominance and has imbalance $T_j(\omega;\gamma)=(1-\gamma)s(\omega)$ under every label. It is a convenient but generally nonoptimal benchmark; Proposition~\ref{prop:scaled} gives its cap bound, and Remark~\ref{rem:capped_position} gives the gap.

Part (i) below gives a universal lower bound driven by the surplus-absorption identity. For four or more agents, part (ii) shows that this accounting bound is tight over the entire interior exposure range. With three agents, part (iii) combines it with Lemma~\ref{lem:three_allocation}; the additional allocation line creates a kink only on the thin side of the parameter space.

\begin{theorem}[Capped repair frontier in the single-winner domain]\label{thm:capped_frontier}
Let the preserving weights be generated by \eqref{eq:Deltaqi}--\eqref{eq:qstar}. Let $n\ge3$ and $\varepsilon_i^j=\varepsilon\in(0,1)$ for all $i$ and $j\neq i$, and fix a single-winner state $\omega$ with surplus $t>0$.
\begin{enumerate}
\item[(i)] For every full-support reference $k$: if $\bar m<t$, the constraint set in \eqref{eq:deltastar} is empty and $\delta^\ast_{\bar m}(\omega)=+\infty$; and every family satisfying \eqref{eq:preserve} at $\omega$ with $x_i^i(\omega)\le\bar m$ for all $i$ obeys
\[
  \max_{j\in\N}\bigl|T_j(\omega)\bigr|
  \;\ge\;\frac{\bigl[t-n\varepsilon\bar m\bigr]_+}{1-\varepsilon}\,.
\]
\item[(ii)] Let $n\ge4$ and $k^j\equiv1/n$. Then
\[
  \delta^\ast_{\bar m}(\omega)=
  \begin{cases}
    +\infty, & 0\le\bar m<t,\\[3pt]
    \dfrac{\bigl[t-n\varepsilon\bar m\bigr]_+}{1-\varepsilon},
      & \bar m\ge t.
\end{cases}
\]
Every finite branch is attained. Appendix~\ref{app:capped} gives the capped equal-split construction for the positive branch and the exact construction used once the displayed value reaches zero.
\item[(iii)] Let $n=3$ and $k^j\equiv1/3$. Then the full frontier is
\[
\delta^\ast_{\bar m}(\omega)=
\begin{cases}
  +\infty, & 0\le\bar m<t,\\[3pt]
  \max\{0,L_1(\bar m),L_2(\bar m)\}, & \bar m\ge t,
\end{cases}
\]
where
\[
  L_1(\bar m):=
  \frac{(1+2\varepsilon)t-\varepsilon(4-\varepsilon)\bar m}
       {1-\varepsilon^2},
  \qquad
  L_2(\bar m):=
  \frac{t-3\varepsilon\bar m}{1-\varepsilon}.
\]
Every branch is attained. Appendix~\ref{app:capped} gives the three-agent attaining matrix for the first positive branch, the capped equal-split construction for the second, and the corresponding exact construction for the zero branch.
\end{enumerate}
\end{theorem}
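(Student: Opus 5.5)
The lower bounds come from the earlier identities. The upper bounds are explicit families in the uniform-reference case, and the constraint $\max_{i,j}|x_i^j|\le\bar m$ together with dominance is where the real work lies.

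\textbf{Part (i).} I will fix the single-winner state $x=(t,0,\dots,0)$ and suppress $\omega$.

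Emptiness for $\bar m<t$ comes from the winner's preservation identity. The weights $q_1^{j\ast}$ are a probability vector, so $\sum_j q_1^{j\ast}x_1^j=t$ forces $\max_j x_1^j\ge t>\bar m$.

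For the imbalance bound I will rearrange the surplus-absorption identity of Lemma~\ref{lem:trace}, which uses only preservation:
\[
  \sum_j k^jT_j \;=\; t-\frac{\varepsilon}{1-\varepsilon}\sum_i\bigl(x_i^i-x_i\bigr)\;\ge\; t-\frac{\varepsilon}{1-\varepsilon}(n\bar m-t)\;=\;\frac{t-n\varepsilon\bar m}{1-\varepsilon}.
\]
The inequality uses $x_i^i\le\bar m$ and $\sum_ix_i=t$. Since $k$ is a probability vector, $\max_j|T_j|\ge\sum_jk^jT_j$. Combining this with $\max_j|T_j|\ge0$ gives the $[\cdot]_+$ form; this is essentially Corollary~\ref{cor:capfrontier} read as a bound on $\delta$.

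\textbf{Part (ii), $n\ge4$, $k\equiv1/n$.} The infinite branch and the lower bound are part (i), so only attainment remains.

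\emph{Zero branch.} The displayed value is $0$ exactly when $\bar m\ge\max\{t,t/(n\varepsilon)\}$. By Theorem~\ref{thm:frontier}(ii) this threshold is $m^\ast(\omega)$. The attaining exact family there therefore has cap at most $\bar m$ and $T_j\equiv0$.

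\emph{Positive branch, $t\le\bar m<t/(n\varepsilon)$.} I will build a capped equal-split family.
\begin{itemize}
\item Set every diagonal entry $x_i^i=\bar m$. Preservation then pins each row's sum of all $n$ entries at $n(x_i-\varepsilon\bar m)/(1-\varepsilon)$.
\item Split the winner's off-diagonal mass equally across labels $2,\dots,n$.
\item For each loser $i\ge2$, split its off-diagonal mass between label $1$ and the other $n-2$ non-adverse labels, using a free parameter $\beta$.
\item By symmetry, columns $2,\dots,n$ have equal totals. I choose $\beta$ so that column~$1$ matches them. Then every column has the same imbalance $T$.
\item Lemma~\ref{lem:trace} then forces $T=(t-n\varepsilon\bar m)/(1-\varepsilon)$, which is exactly the lower bound.
\end{itemize}
What remains is checking that all off-diagonal entries lie in $[-\bar m,\bar m]$ and below the diagonal $\bar m$. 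The loser rows must absorb a negative mass of size about $n\varepsilon\bar m/(1-\varepsilon)$ spread over $n-1$ cells. I expect the room provided by $n-1\ge3$ cells, together with $\bar m\ge t$, to keep these entries within the cap. This is where $n\ge4$ should enter, and it mirrors the role of $n\ge4$ in Theorem~\ref{thm:frontier}.

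\textbf{Part (iii), $n=3$.} The lower bound $\max\{0,L_1,L_2\}$ combines Lemma~\ref{lem:three_allocation} (giving $L_1$) with part (i) (giving $L_2$).

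For attainment I will locate the crossing point of $L_1$ and $L_2$ in $\bar m$. There are three regimes.
\begin{itemize}
\item Where $L_1$ is binding: write a three-agent matrix in which the winner's row and the two loser rows saturate $|x_i^j|=\bar m$ in the cells used in the proof of Lemma~\ref{lem:three_allocation}. I expect this to be the appendix matrix in which the loser diagonals are not at $\bar m$.
\item Where $L_2$ is binding: reuse the capped equal-split family with $n=3$.
\item Zero branch: use the exact three-agent family from Theorem~\ref{thm:frontier}(ii).
\end{itemize}
One useful structural fact may reduce the verification: the feasible set is a polytope jointly in (entries, $\bar m$, imbalance), so $\bar m\mapsto\delta^\ast_{\bar m}$ is convex. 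Hence it would suffice to attain the bound at the breakpoints and interpolate linearly, since $L_1$ and $L_2$ are affine in $\bar m$.

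\textbf{Main obstacle.} The hardest step is the three-agent attaining matrix on the $L_1$ branch. Here the equality cases in the proof of Lemma~\ref{lem:three_allocation} must all hold at once while dominance and the cap are respected. My first attempt will be to reverse-engineer it: set all inequalities in that lemma's proof to equalities, then solve the resulting linear system for the nine entries.
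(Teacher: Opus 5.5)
Your plan is correct and is essentially the paper's proof. Part (i) is the same rearrangement of Lemma~\ref{lem:trace}. Solving your equalized-column ansatz for $\beta$ gives exactly the paper's capped equal-split entries. Setting $a=\bar m$, $f=-\bar m$ and $U=V$ in the proof of Lemma~\ref{lem:three_allocation} reproduces the matrix \eqref{eq:three_tight_family}, and the zero branches use the exact families of Theorem~\ref{thm:frontier}, as the paper does.

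The checks you deferred do go through. The binding one is that the loser entries at labels other than $1$ and $i$ stay at or above $-\bar m$, which reduces to $t\le(n-2)\bigl[(n-2)-2(n-1)\varepsilon\bigr]\bar m$. This holds for $n\ge4$ and $\varepsilon<1/n$, and for $n=3$ it holds exactly when $\bar m\ge t/(1-4\varepsilon)$, i.e., where $L_2$ binds. Your convexity interpolation is also a valid shortcut: at $\bar m=t$ the tight matrix has winner row $(t,t,t)$ and is trivially feasible. The paper instead verifies each family over its whole range, with case splits at $\varepsilon=1/7$ and $\varepsilon=1/4$.
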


\begin{proof}
See Appendix~\ref{app:capped}.
\end{proof}

Part (i) follows directly from the identity: only the $n$ diagonal caps enter, and $\max_j|T_j|$ bounds the $k$-weighted imbalance. For $n\ge4$, part (ii) makes Corollary~\ref{cor:capfrontier} exact over the entire interior exposure range:
\[
  \bigl\{(\bar m,\bar\delta)\;:\;\bar m\ge t,\ \bar\delta\ge0,\ \ (1-\varepsilon)\bar\delta+n\varepsilon\bar m\ge t\bigr\}.
\]
For $n=3$,
\[
L_1(\bar m)-L_2(\bar m)
=\frac{\varepsilon[t-(1-4\varepsilon)\bar m]}{1-\varepsilon^2}.
\]
If $\varepsilon<1/7$, $L_1$ binds until the lines meet at $\bar m=t/(1-4\varepsilon)$, after which $L_2$ reaches zero at $t/(3\varepsilon)$. At $\varepsilon=1/7$ the kink coincides with that endpoint; if $\varepsilon>1/7$, $L_1$ remains binding until it reaches zero at the exact cap in \eqref{eq:full_exact_frontier}. The additional line is an allocation constraint: with three agents, there can be too few off-diagonal positions to complete the adjustment within the cap. The kink therefore marks a change in the binding linear constraint, not in preferences or beliefs. For $n\ge4$, symmetry spreads the residual imbalance uniformly across rules. Figure~\ref{fig:capped_frontiers} illustrates the two geometries.

\begin{figure}[H]
\centering
\begingroup
\pgfplotsset{
  capped axis/.style={
    width=\linewidth,
    height=.72\linewidth,
    xmin=.75, xmax=4.05,
    ymin=-.04, ymax=1.06,
    axis lines=left,
    axis line style={black!75},
    tick align=outside,
    tick style={black!75},
    ytick={0,.25,.5,.75,1},
    yticklabels={$0$,$1/4$,$1/2$,$3/4$,$1$},
    ymajorgrids,
    grid style={black!10},
    label style={font=\small},
    tick label style={font=\scriptsize},
    legend style={draw=none,fill=white,fill opacity=.90,text opacity=1,
      font=\scriptsize,at={(.98,.97)},anchor=north east,
      legend cell align=left,row sep=-1pt},
    every axis plot/.append style={line cap=round,line join=round}
  },
  capped exact/.style={very thick,blue!60!black},
  capped scaled/.style={thick,red!65!black,dashed},
  capped bound/.style={thick,black!60,densely dotted}
}

\begin{subfigure}[t]{.485\textwidth}
\centering
\begin{tikzpicture}
\begin{axis}[
  capped axis,
  xlabel={normalized cap $\bar m/t$},
  ylabel={worst-rule absolute imbalance $\delta_{\bar m}^{\ast}/t$},
  xtick={1,2.5,3.25,4},
  xticklabels={$1$,$5/2$,$13/4$,$4$}
]
  \addplot[draw=none,fill=black!8,forget plot]
    coordinates {(.75,-.04) (1,-.04) (1,1.06) (.75,1.06)} \closedcycle;
  \node[rotate=90,font=\scriptsize,text=black!60]
    at (axis cs:.865,.51) {infeasible};
  \addplot[capped exact]
    coordinates {(1,.6666667) (2.5,0) (4.05,0)};
  \addlegendentry{exact frontier}
  \addplot[capped scaled]
    coordinates {(1,1) (3.25,0)};
  \addlegendentry{scaled family}
\end{axis}
\end{tikzpicture}
\caption{Four agents: one binding segment.}
\end{subfigure}
\hfill
\begin{subfigure}[t]{.485\textwidth}
\centering
\begin{tikzpicture}
\begin{axis}[
  capped axis,
  xlabel={normalized cap $\bar m/t$},
  xtick={1,1.6666667,3.3333333,4},
  xticklabels={$1$,$5/3$,$10/3$,$4$}
]
  \addplot[draw=none,fill=black!8,forget plot]
    coordinates {(.75,-.04) (1,-.04) (1,1.06) (.75,1.06)} \closedcycle;
  \node[rotate=90,font=\scriptsize,text=black!60]
    at (axis cs:.865,.51) {infeasible};
  \addplot[capped exact]
    coordinates {(1,.8181818) (1.6666667,.5555556)
      (3.3333333,0) (4.05,0)};
  \addlegendentry{exact frontier}
  \addplot[capped scaled]
    coordinates {(1,1) (4,0)};
  \addlegendentry{scaled family}
  \addplot[capped bound]
    coordinates {(1,.7777778) (1.6666667,.5555556)};
  \addlegendentry{identity bound}
  \draw[black!40,thin,densely dashed]
    (axis cs:1.6666667,-.04) -- (axis cs:1.6666667,.5555556);
  \addplot[only marks,mark=*,mark size=1.6pt,blue!60!black,forget plot]
    coordinates {(1.6666667,.5555556)};
  \node[anchor=south west,font=\scriptsize,text=black!70]
    at (axis cs:1.70,.575) {kink};
\end{axis}
\end{tikzpicture}
\caption{Three agents: an additional tight-cap segment.}
\end{subfigure}
\endgroup
\caption{Capped repair frontiers at $\varepsilon=0.1$, normalized by the single-winner surplus $t$. Solid curves give exact values, dashed curves the scaled family, and the shaded region $\bar m/t<1$ is infeasible. With three agents, the additional tight-cap segment creates the kink at $\bar m/t=5/3$; the dotted segment is the surplus-absorption bound before it binds.}
\label{fig:capped_frontiers}
\end{figure}

Convexity propagates the surplus-absorption branch from single-winner states to every nonnegative transfer profile on the domains stated in Corollary~\ref{cor:capped_anyx}. It does not propagate the tight-cap three-agent formula as an exact surplus-only value.

\begin{corollary}[Surplus-only capped frontier on its stated domain]\label{cor:capped_anyx}
Let the preserving weights be generated by \eqref{eq:Deltaqi}--\eqref{eq:qstar}, let $k^j\equiv1/n$, and let $\varepsilon_i^j=\varepsilon\in(0,1)$ for all $j\neq i$. Fix a state $\omega$ whose transfer vector is componentwise nonnegative with surplus $s(\omega)>0$. Then
\[
  \delta^\ast_{\bar m}(\omega)\;=\;\frac{\bigl[s(\omega)-n\varepsilon\bar m\bigr]_+}{1-\varepsilon}
\]
for every $\varepsilon\in(0,1)$ and $\bar m\ge s(\omega)$ when $n\ge4$. When $n=3$, the same conclusion holds for $\varepsilon\in(0,1/7]$ and $\bar m\ge s(\omega)/(1-4\varepsilon)$.
\end{corollary}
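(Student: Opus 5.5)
The plan mirrors the proof of Corollary~\ref{cor:anyx}: a lower bound from Theorem~\ref{thm:capped_frontier}(i) that depends only on the surplus, and a matching upper bound from convexity plus permutation symmetry. Write $x:=x(\omega)$ and $s:=s(\omega)>0$.

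\textbf{Lower bound.} Theorem~\ref{thm:capped_frontier}(i) is stated for single-winner states, but its argument uses only Lemma~\ref{lem:trace}. That identity holds for any transfer vector. For any feasible family with cap $\bar m$, every diagonal entry satisfies $x_i^i\le\bar m$, and $\sum_j k^j T_j\le\max_j|T_j|$. Substituting both into the identity gives
\[
  s\le\max_j|T_j|+\frac{\varepsilon}{1-\varepsilon}\bigl(n\bar m-s\bigr),
\]
which rearranges to
\[
  \max_j|T_j|\ge\frac{s-n\varepsilon\bar m}{1-\varepsilon}.
\]
Combined with $\max_j|T_j|\ge0$, this gives $\delta^\ast_{\bar m}(x)\ge[s-n\varepsilon\bar m]_+/(1-\varepsilon)$ for every nonnegative $x$. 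Feasibility of the constraint set is not at issue here; that is handled by the upper bound.

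\textbf{Upper bound.} I will show that $x\mapsto\delta^\ast_{\bar m}(x)$ is convex in $x$ for fixed $\bar m$, in the sense used in Corollary~\ref{cor:anyx}. Take families $z$ feasible for $x$ and $\tilde z$ feasible for $\tilde x$, both under cap $\bar m$. The entrywise mixture $\lambda z+(1-\lambda)\tilde z$ satisfies dominance \eqref{eq:monotone} and preservation \eqref{eq:preserve} for $\lambda x+(1-\lambda)\tilde x$, because these are linear constraints with report-independent weights. Its cap is at most $\bar m$, and its column imbalances are the mixtures of the original ones. The triangle inequality then gives
\[
  \max_j|T_j|\le\lambda\max_j|T_j(z)|+(1-\lambda)\max_j|\tilde T_j(\tilde z)|.
\]
Next, write $x=\sum_i(x_i/s)(s\,e_i)$. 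The uniform reference and common exposure are invariant under simultaneous permutations of agents and labels. So Theorem~\ref{thm:capped_frontier} applied to the single-winner state $s\,e_i$ with $t=s$ yields attaining families, and we can mix them.

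\textbf{Case $n\ge4$.} Theorem~\ref{thm:capped_frontier}(ii) gives $\delta^\ast_{\bar m}(s\,e_i)=[s-n\varepsilon\bar m]_+/(1-\varepsilon)$ for every $\bar m\ge s$ and every $\varepsilon\in(0,1)$. Convexity turns this into the same upper bound for $x$, which matches the lower bound.

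\textbf{Case $n=3$.} Theorem~\ref{thm:capped_frontier}(iii) gives $\max\{0,L_1,L_2\}$, so I need $L_1\le\max\{0,L_2\}$ on the stated domain. From the displayed difference
\[
  L_1-L_2=\frac{\varepsilon\,[s-(1-4\varepsilon)\bar m]}{1-\varepsilon^2},
\]
we get $L_1\le L_2$ exactly when $\bar m\ge s/(1-4\varepsilon)$, which requires $\varepsilon<1/4$. If $L_2<0$, we still need $L_1\le0$. For $\varepsilon\le1/7$, $L_2\le0$ means $\bar m\ge s/(3\varepsilon)\ge s/(1-4\varepsilon)$; this follows from $1-4\varepsilon\ge3\varepsilon$, i.e.\ $\varepsilon\le1/7$. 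Then $L_1\le L_2\le0$, so the frontier value is $\max\{0,L_2\}=[s-3\varepsilon\bar m]_+/(1-\varepsilon)$ on the whole domain. At $\varepsilon=1/7$ the two thresholds coincide, which is why the domain is restricted there. Also, $s/(1-4\varepsilon)\ge s$, so the single-winner constraint sets are nonempty.

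\textbf{Main obstacle.} The delicate step is the convexity argument. The infimum is attained, since the constraint set is compact and nonempty when $\bar m\ge s$ (Theorem~\ref{thm:capped_frontier}). The worst-rule imbalance objective is convex, so mixing attaining families is legitimate. Care is needed because, for non-single-winner $x$, feasibility at cap $\bar m$ is inherited from the mixture itself and is not assumed separately.
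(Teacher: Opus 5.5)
Your proposal is correct and follows essentially the same route as the paper. The lower bound reruns Theorem~\ref{thm:capped_frontier}(i) at an arbitrary state via Lemma~\ref{lem:trace}, and the upper bound mixes permuted single-winner attaining families with weights $x_i/s$. The only difference is cosmetic: you bound the mixed imbalance by the triangle inequality and check explicitly that $L_1\le L_2$ on the three-agent domain, whereas the paper mixes the specific capped equal-split and exact constructions, whose column imbalances are already common across labels.
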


\begin{proof}
Write $x:=x(\omega)$ and $s:=s(\omega)$. The lower bound is part (i) of Theorem~\ref{thm:capped_frontier} with $s$ in place of $t$: Lemma~\ref{lem:trace} uses only preservation and the diagonal caps, so the derivation applies verbatim at any state. For the upper bound, argue as in Corollary~\ref{cor:anyx}. For each agent $i$, permutation symmetry applies the relevant single-winner construction to $s\,e_i$. On the positive branch this is the capped equal-split family, and every column imbalance equals $\bigl(s-n\varepsilon\bar m\bigr)/(1-\varepsilon)$. For $n\ge4$ and $\varepsilon\ge1/n$, the thick-exposure family instead gives exact balance already at cap $s$. Writing $x=\sum_i(x_i/s)(s\,e_i)$ and mixing the corresponding families entrywise preserves dominance and preservation, keeps every entry within $[-\bar m,\bar m]$, and gives the claimed common imbalance.
\end{proof}

For three agents below the cap domain of Corollary~\ref{cor:capped_anyx}, the capped value need not depend only on aggregate surplus; Remark~\ref{rem:three_not_surplus_only} gives an explicit counterexample. More generally, for caps between $\max_ix_i(\omega)$ and $s(\omega)$ the mixing argument is unavailable; the lower bound of Theorem~\ref{thm:capped_frontier}(i), with $s(\omega)$ in place of $t$, continues to apply.

}

Corollary~\ref{cor:capfrontier} and the constant family yield the two bounds used in the following vanishing-exposure result.

\begin{corollary}[Vanishing exposure]\label{cor:vanishing_exposure}
Fix a state $\omega$ with $s:=s(\omega)>0$, a full-support reference $k$, and a sequence of common-contamination exposures $\varepsilon_\tau\in(0,1)$ such that $\varepsilon_\tau\to0$. Let the preserving weights at each $\tau$ be generated by \eqref{eq:common_contamination_weights}. Let $m_\tau^\ast(\omega)$ and $\delta_{\bar m,\tau}^\ast(\omega)$ denote the exact-repair cap and the capped repair value under exposure $\varepsilon_\tau$. Even if the designer chooses a different preserving family at every $\tau$,
\[
  m_\tau^\ast(\omega)
  \;\ge\;\frac{s}{n\varepsilon_\tau}
  \;\longrightarrow\;+\infty.
\]
Moreover, for every fixed $\bar m\ge\max_i|x_i(\omega)|$,
\[
  \frac{\bigl[s-n\varepsilon_\tau\bar m\bigr]_+}{1-\varepsilon_\tau}
  \;\le\;\delta_{\bar m,\tau}^\ast(\omega)
  \;\le\;s,
  \qquad\text{and hence}\qquad
  \delta_{\bar m,\tau}^\ast(\omega)\longrightarrow s.
\]
Thus exact preservation, exact balance, and uniformly bounded realized transfers cannot coexist along a vanishing-exposure path.
\end{corollary}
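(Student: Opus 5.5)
The plan is to work at the fixed state $\omega$ and obtain the two claims from two earlier results. The cap divergence will come from the surplus-absorption identity of Lemma~\ref{lem:trace}. The imbalance sandwich will come from Corollary~\ref{cor:capfrontier} together with an explicit feasible family.

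\emph{Cap divergence.} First I confirm that the common-contamination weights \eqref{eq:common_contamination_weights} coincide with \eqref{eq:qstar} when $\varepsilon_i^j=\varepsilon_\tau$ for every $i$ and $j\neq i$. In that case Lemma~\ref{lem:trace} applies at each $\tau$. Any exactly balanced, preserving family (dominance is not even needed) then satisfies $\sum_i x_i^i=s/\varepsilon_\tau$. Hence $\max_i x_i^i\ge s/(n\varepsilon_\tau)$, and the cap is bounded below by the same quantity. If no exactly balanced family exists, $m_\tau^\ast=+\infty$ and the bound holds trivially. Because the bound holds for every family at every $\tau$, it holds for the infimum. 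It therefore survives the designer changing families with $\tau$, and it diverges as $\varepsilon_\tau\to0$ because $s>0$.

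\emph{Lower bound on $\delta^\ast$.} Take any family feasible for \eqref{eq:deltastar}; if there is none, $\delta^\ast=+\infty$ and the bound is trivial. Such a family has $|x_i^j|\le\bar m$, so in particular $x_i^i\le\bar m$. Set $\bar\delta:=\max_j|T_j|$. Corollary~\ref{cor:capfrontier} gives $s\le(1-\varepsilon_\tau)\bar\delta+n\varepsilon_\tau\bar m$. Rearranging and combining with $\bar\delta\ge0$ yields $\bar\delta\ge[s-n\varepsilon_\tau\bar m]_+/(1-\varepsilon_\tau)$. Taking the infimum over feasible families gives the left inequality.

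\emph{Upper bound on $\delta^\ast$.} I use the constant family $x^j\equiv x$, which corresponds to $\gamma=0$ in the scaled family. Dominance \eqref{eq:monotone} holds with equality. Preservation holds because the weights sum to one. The cap is $\max_i|x_i|\le\bar m$ by assumption, and every column imbalance equals $T_j=s$. So the constraint set is nonempty and $\delta^\ast\le s$.

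\emph{Limit.} As $\varepsilon_\tau\to0$ with $\bar m$ fixed, the lower bound converges to $[s]_+=s$, and squeezing gives $\delta^\ast\to s$. The closing sentence of the statement then follows. Fix any cap and suppose exact preservation and exact balance held along the path. The cap divergence contradicts that. Equivalently, capped preservation forces the worst-rule imbalance back to the full surplus $s>0$. I expect no real obstacle. The only points needing care are checking that the weight specification matches the hypothesis of Lemma~\ref{lem:trace}, and handling the degenerate cases where a feasible set is empty or the bound is infinite.
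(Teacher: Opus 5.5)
Your proposal is correct and follows the paper's proof essentially step for step. It uses Lemma~\ref{lem:trace} under exact balance for the cap divergence, Corollary~\ref{cor:capfrontier} for the lower bound on the capped value, and the constant family $x^j\equiv x$ for the upper bound $s$, with the squeeze giving the limit.
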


\begin{proof}
Under exact balance, Lemma~\ref{lem:trace} gives $\sum_i x_i^i=s/\varepsilon_\tau$, so the absolute transfer cap is at least $s/(n\varepsilon_\tau)$. Corollary~\ref{cor:capfrontier} gives the lower bound on the capped value. For the upper bound, the constant family $x_i^j=x_i$ satisfies dominance and preservation, fits the stated cap, and leaves $T_j=s$ under every label.
\end{proof}

The exposure path is exogenous, so this result is a reduced-form comparative static. A structural learning model would determine that path endogenously; along any vanishing-exposure path, period-by-period recalibration cannot rescue bounded exact repair.

\long\def\HeterogeneousExposureResults{%
\subsection{Heterogeneous exposures: who pays for exactness}\label{sec:hetero}

A common exposure is a symmetric benchmark, not a requirement. This subsection instead allows agent-specific exposures $\varepsilon_i$, each constant across agent~$i$'s non-adverse labels, with weights
\[
  w_i:=\frac{\varepsilon_i}{1-\varepsilon_i}\;\ge\;0,
\]
so feasibility is governed by whether aggregate effective exposure $\sum_i\varepsilon_i(1-k^i)$ is positive. With a full-support reference, this is equivalent to some agent having positive exposure; the identity then dictates who carries the burden.

\begin{lemma}[Surplus-absorption identity under heterogeneous exposures]\label{lem:devidentity}
Fix a state, suppress it, and let $\varepsilon_i^j=\varepsilon_i\in[0,1)$ for all $j\neq i$. Any family satisfying \eqref{eq:preserve} obeys
\[
  s=\sum_{j\in\N}k^jT_j+\sum_{i\in\N}w_i\bigl(x_i^i-x_i\bigr);
  \qquad\text{under exact balance,}\quad
  \sum_{i\in\N}w_i\bigl(x_i^i-x_i\bigr)=s .
\]
\end{lemma}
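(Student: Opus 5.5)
The plan mirrors the proof of Lemma~\ref{lem:trace}, keeping track of agent-specific exposures. Fix the state and an agent $i$. With $\varepsilon_i^j=\varepsilon_i$ for all $j\neq i$, formula \eqref{eq:qstar} gives $q_i^{j\ast}=(1-\varepsilon_i)k^j$ for $j\neq i$. Because the row sums to one, the diagonal weight is
\[
  q_i^{i\ast}=1-(1-\varepsilon_i)(1-k^i)=(1-\varepsilon_i)k^i+\varepsilon_i .
\]
The own-label exposure $\varepsilon_i^i$ does not enter. The preservation identity \eqref{eq:preserve} for agent $i$ therefore becomes
\[
  (1-\varepsilon_i)\sum_{j\in\N}k^jx_i^j+\varepsilon_i x_i^i=x_i .
\]
Since $\varepsilon_i<1$, we may divide by $1-\varepsilon_i$:
\[
  \sum_{j\in\N}k^jx_i^j=\frac{x_i-\varepsilon_i x_i^i}{1-\varepsilon_i}
  =x_i-w_i\bigl(x_i^i-x_i\bigr).
\]
The last equality is a one-line check using $w_i=\varepsilon_i/(1-\varepsilon_i)$ and $1/(1-\varepsilon_i)=1+w_i$.

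Next, sum over agents and exchange the order of summation on the left. This gives
\[
  \sum_{j}k^jT_j=s-\sum_i w_i\bigl(x_i^i-x_i\bigr),
\]
which is the first display after rearranging. Under exact balance every $T_j=0$, so the $k$-weighted term vanishes and the second display follows immediately.

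No dominance or sign condition is used. The only step that needs care is the per-agent rewriting of preservation with an agent-specific coefficient. It relies on $\varepsilon_i<1$, which is why the lemma restricts exposures to $[0,1)$. It also relies on $k$ summing to one, which is what makes the diagonal weight equal $(1-\varepsilon_i)k^i+\varepsilon_i$. As a sanity check, setting $\varepsilon_i\equiv\varepsilon$ recovers Lemma~\ref{lem:trace}.
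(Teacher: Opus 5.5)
Your proof is correct and follows the paper's argument essentially verbatim. The paper also rewrites agent $i$'s preservation identity as $\sum_j k^j x_i^j=(x_i-\varepsilon_i x_i^i)/(1-\varepsilon_i)$, using the diagonal weight $(1-\varepsilon_i)k^i+\varepsilon_i$, then sums over agents and sets every $T_j=0$ under exact balance.
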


\begin{proof}
Identical to the proof of Lemma~\ref{lem:trace}, carrying $\varepsilon_i$: preservation for agent $i$ rearranges to $\sum_jk^jx_i^j=(x_i-\varepsilon_ix_i^i)/(1-\varepsilon_i)$, and summing over $i$,
\[
  \sum_{j\in\N}k^jT_j+\sum_{i\in\N}w_i(x_i^i-x_i)
  =\sum_{i\in\N}\frac{(x_i-\varepsilon_ix_i^i)+\varepsilon_i(x_i^i-x_i)}{1-\varepsilon_i}
  =s. \qedhere
\]
\end{proof}

The identity also gives the necessary heterogeneous cap--imbalance bound
\[
  \max_j|T_j|
  \ge s-\sum_{i\in\N}w_i(\bar m-x_i)
\]
whenever $x_i^i\le\bar m$ for every $i$. No exact heterogeneous capped frontier is claimed. When $w_i\equiv\varepsilon/(1-\varepsilon)$, the bound reduces to Theorem~\ref{thm:capped_frontier}(i).

Aggregate exposure alone does not determine incidence: dispersed exposure spreads the adjustment across rows, whereas concentrated exposure loads it onto fewer agents. Theorem~\ref{thm:oneagent} gives the extensive margin---whether any effective off-diagonal exposure exists---and the following results identify the intensive burden at positive exposure.

\begin{corollary}[Burden concentration]\label{cor:pinned}
In the setting of Lemma~\ref{lem:devidentity} with $\sum_iw_i>0$, any family that preserves and balances exactly at a state with surplus $s\ge0$ satisfies
\[
  \max_{i:\,w_i>0}\ \bigl(x_i^i-x_i\bigr)\;\ge\;\frac{s}{\sum_{i\in\N}w_i}\,,
\]
and the burden falls entirely on exposed agents: unexposed agents contribute nothing to the identity and serve only as intermediaries. In particular, if exactly one agent $i_0$ is exposed, then the identity pins down her adverse-label payment at every state:
\[
  x_{i_0}^{i_0}(\omega)\;=\;x_{i_0}(\omega)+\frac{1-\varepsilon_{i_0}}{\varepsilon_{i_0}}\,s(\omega).
\]
It exceeds her original payment by $(1-\varepsilon_{i_0})/\varepsilon_{i_0}$ times the surplus.
\end{corollary}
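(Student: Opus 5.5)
The plan is to read both claims of Corollary~\ref{cor:pinned} directly off the exact-balance form of Lemma~\ref{lem:devidentity}, namely $\sum_{i\in\N}w_i(x_i^i-x_i)=s$. Partition the agents into exposed agents ($w_i>0$) and unexposed agents ($w_i=0$). Unexposed agents enter the sum with coefficient zero, whatever their adverse-label payments. The identity therefore reduces to $\sum_{i:\,w_i>0}w_i(x_i^i-x_i)=s$. This already gives the qualitative statement: only exposed agents' adverse-label spreads absorb the surplus. Unexposed agents can still carry nonzero entries that are needed for column balance, but these entries do not enter the absorption constraint. In that sense they act only as intermediaries.

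For the quantitative bound I would use an averaging argument. Let $D:=\max_{i:\,w_i>0}(x_i^i-x_i)$, which is well defined because $\sum_i w_i>0$ makes the set of exposed agents nonempty. Each term on the left of the reduced identity is at most $w_iD$. Hence $s\le D\sum_{i}w_i$, and dividing by the positive number $\sum_i w_i$ gives the displayed inequality. The hypothesis $s\ge0$ is not actually needed for the inequality itself; it only makes the bound economically meaningful. Dominance is not used either, since Lemma~\ref{lem:devidentity} requires only preservation.

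For the single-exposed-agent case, I would use the fact that when only $i_0$ is exposed the reduced identity consists of the single term $w_{i_0}(x_{i_0}^{i_0}-x_{i_0})=s$. The exposure $\varepsilon_{i_0}\in(0,1)$ makes $w_{i_0}$ positive and finite. Solving the identity gives $x_{i_0}^{i_0}=x_{i_0}+s/w_{i_0}=x_{i_0}+\frac{1-\varepsilon_{i_0}}{\varepsilon_{i_0}}s$. Because Lemma~\ref{lem:devidentity} holds pointwise, this formula holds at every state $\omega$.

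None of these steps is difficult. The only point that needs care is the interpretation of ``exposed.'' It has to mean $w_i>0$, that is, $\varepsilon_i\in(0,1)$ in the setting of Lemma~\ref{lem:devidentity}, so that $w_i$ is finite and the division is legitimate. I would state this convention explicitly at the start of the proof.
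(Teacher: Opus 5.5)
Your proposal is correct and matches the paper's proof. Like the paper, you restrict the exact-balance identity $\sum_i w_i(x_i^i-x_i)=s$ from Lemma~\ref{lem:devidentity} to the agents with $w_i>0$, bound the largest term below by the weighted average $s/\sum_i w_i$, and solve the single remaining term when only $i_0$ is exposed; your side remarks that neither $s\ge0$ nor dominance is needed for the inequality are also accurate.
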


\begin{proof}
Immediate from $\sum_iw_i(x_i^i-x_i)=s$: the maximum of the terms with $w_i>0$ is at least their weighted average $s/\sum_iw_i$, and with a single positive weight the single term equals $s$.
\end{proof}

The reference vector is the disclosed baseline used to construct the lower bounds. It is not assumed to be the objective selection frequency, a Bayesian belief, or a social-welfare weight. The next result is therefore an incidence and accounting identity under that baseline. It becomes an objective expected-transfer comparison only if the reference vector has an independent frequency interpretation.

\begin{proposition}[Reference-frequency accounting identity]\label{prop:reference_compensation}
Fix a state and suppress it. Let the preserving weights be generated from a common reference $k\in\Delta(\N)$ and off-diagonal exposures satisfying $\varepsilon_i^j=\varepsilon_i\in[0,1)$ for every $j\neq i$, and write $w_i:=\varepsilon_i/(1-\varepsilon_i)$. Suppose a payment-rule family satisfies dominance \eqref{eq:monotone}, preservation \eqref{eq:preserve}, and exact balance \eqref{eq:bb}. Define agent $i$'s reference-frequency accounting rebate by
\[
  \mathcal{R}_i^k:=x_i-\sum_{j\in\N}k^j x_i^j.
\]
Then
\[
  \mathcal{R}_i^k=w_i\bigl(x_i^i-x_i\bigr)\ge0
  \qquad\text{for every }i,
  \qquad
  \sum_{i\in\N}\mathcal{R}_i^k=s.
\]
Thus every agent's reference-weighted payment is weakly lower than under the pure rule, and the aggregate accounting difference equals the original surplus. Worst-case evaluated utilities remain unchanged. An exposed agent's accounting rebate equals her contribution to surplus absorption; if only one agent is exposed, her accounting rebate equals the entire surplus.
\end{proposition}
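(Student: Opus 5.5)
The plan is to work agent by agent. Fix a state and an agent $i$, and rewrite agent $i$'s preservation condition in the reference-weighted form already derived in the proof of Lemma~\ref{lem:trace} and Lemma~\ref{lem:devidentity}. Under \eqref{eq:qstar} with $\varepsilon_i^j=\varepsilon_i$ for $j\neq i$, the preserving weights are
\[
  q_i^{j\ast}=(1-\varepsilon_i)k^j \quad (j\neq i),
  \qquad
  q_i^{i\ast}=k^i+\varepsilon_i(1-k^i).
\]
Adding and subtracting $(1-\varepsilon_i)k^ix_i^i$, preservation \eqref{eq:preserve} becomes
\[
  (1-\varepsilon_i)\sum_{j\in\N}k^jx_i^j+\varepsilon_i x_i^i=x_i .
\]
Because $\varepsilon_i<1$, we may divide by $1-\varepsilon_i$ to get
\[
  \sum_{j\in\N}k^jx_i^j=\frac{x_i-\varepsilon_ix_i^i}{1-\varepsilon_i}.
\]

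Substituting this into the definition of $\mathcal R_i^k$ gives
\[
  \mathcal R_i^k
  =x_i-\frac{x_i-\varepsilon_ix_i^i}{1-\varepsilon_i}
  =\frac{\varepsilon_i(x_i^i-x_i)}{1-\varepsilon_i}
  =w_i(x_i^i-x_i).
\]

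\emph{Sign.} This is the one step needing an argument beyond algebra, and it is the only place dominance \eqref{eq:monotone} enters. Since $x_i$ is a $q_i^\ast$-weighted average of the row entries $x_i^j$, and $x_i^i$ is the maximum of that row, we get $x_i^i\ge x_i$. As $w_i\ge0$, it follows that $\mathcal R_i^k\ge0$.

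\emph{Aggregate.} Summing $\mathcal R_i^k$ over $i$ gives
\[
  \sum_{i\in\N}\mathcal R_i^k=s-\sum_{j\in\N}k^jT_j .
\]
Exact balance makes every $T_j=0$, so the sum equals $s$. Equivalently, one can invoke the exact-balance form of Lemma~\ref{lem:devidentity}, which states $\sum_iw_i(x_i^i-x_i)=s$.

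The remaining claims follow immediately. Worst-case evaluated utilities are unchanged by Proposition~\ref{prop:pointwise_incentive}. An unexposed agent has $w_i=0$ and hence zero rebate. If only agent $i_0$ is exposed, the aggregate identity forces $\mathcal R_{i_0}^k=s$.

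The main obstacle is modest: one has to keep the weight form from \eqref{eq:qstar} correct, and note that $\varepsilon_i<1$ is what allows the division by $1-\varepsilon_i$.
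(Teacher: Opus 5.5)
Your proposal is correct and follows essentially the same route as the paper. Like the paper, you rewrite preservation as $x_i=(1-\varepsilon_i)\sum_j k^j x_i^j+\varepsilon_i x_i^i$, rearrange it to $\mathcal R_i^k=w_i(x_i^i-x_i)$, use dominance for the sign, and use exact balance for the aggregate. The differences are cosmetic. The paper gets $x_i\le x_i^i$ by applying dominance to the $k$-average rather than the $q_i^\ast$-average. It also cites Lemma~\ref{lem:qstar} for the unchanged worst-case evaluation; you need that lemma implicitly anyway to invoke Proposition~\ref{prop:pointwise_incentive}.
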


\begin{proof}
Write $\bar x_i^k:=\sum_jk^jx_i^j$. The preserving weight in \eqref{eq:qstar} and \eqref{eq:preserve} give
\[
  x_i=(1-\varepsilon_i)\bar x_i^k+\varepsilon_i x_i^i.
\]
Dominance implies $\bar x_i^k\le x_i^i$, hence $x_i\le x_i^i$, and rearranging gives $x_i-\bar x_i^k=w_i(x_i^i-x_i)\ge0$. Exact balance gives
\[
  \sum_i\mathcal{R}_i^k=\sum_ix_i-\sum_jk^j\sum_ix_i^j=s.
\]
The displayed payment comparison follows directly. Lemma~\ref{lem:qstar} and preservation leave the worst-case evaluated transfer equal to $x_i$. If $k$ is independently interpreted as the objective label frequency, label independence of the allocation turns the accounting difference into an expected-utility difference of the same amount; that additional interpretation is not imposed here.
\end{proof}

\begin{remark}[Expected incidence under an objective label law]\label{rem:endpoint_objective_law}
Suppose an analyst has an objective label law $\pi\in\Delta(\N)$, even though that law is not certified to the agents at the reporting date. For the canonical endpoint family,
\[
  x_i-\sum_{j\in\N}\pi^j\widehat x_i^j
  =\frac{1-\pi^i}{n-1}s\ge0,
  \qquad
  \sum_{i\in\N}\left(x_i-\sum_{j\in\N}\pi^j\widehat x_i^j\right)=s.
\]
Thus an agent with $\pi^i<1$ has a strictly positive expected realized-utility gain at a positive-surplus state, while preservation leaves her max--min value unchanged. This separates subjective incentive preservation from expected realized incidence under an independently specified data-generating law. It is a participant-side comparison; whether eliminating operator revenue is a social gain depends on the institution's retention constraint and the value assigned to that revenue.
\end{remark}

The individual formula is specific to the canonical family. Proposition~\ref{prop:endpoint_accounting_general} shows that nonnegativity and the aggregate identity hold for every dominant, preserving, exactly balanced endpoint family. Complete enforceable label-contingent insurance can eliminate label exposure only by replacing the compound mechanism with a balanced pure payment rule; Remark~\ref{rem:complete_insurance} gives the side-contract equivalence and its incentive caveat.

Together, these results show the feasibility discontinuity at zero effective off-diagonal exposure and, at positive effective exposure, identify both the burden and its reference-frequency accounting incidence.

}

\paragraph{Heterogeneous exposure.}
Theorem~\ref{thm:oneagent} gives the extensive feasibility margin in the main text. Appendix~\ref{sec:hetero} derives the complementary intensive margin: a heterogeneous surplus-absorption identity identifies which exposed agents must carry the adverse-label burden and gives the corresponding reference-frequency accounting incidence.

% ========================================================================
\section{Participation}\label{sec:transfers}
% ========================================================================

Pointwise repair preserves max--min participation, not participation under every realized rule. This section proves that the stronger requirement fails under full-support preserving weights and recovers at the support-only boundary; Appendix~\ref{app:transferbound} derives quantitative bounds.

\subsection{Evaluated and rule-by-rule participation}\label{subsec:participation_concepts}

Fix a state $\omega$ and write
\[
  y_i(\omega):=v_i(g(\omega),\omega),
  \qquad
  V(\omega):=\sum_{i\in\N}y_i(\omega).
\]
A payment-rule family satisfies \emph{rule-by-rule ex post IR} at $\omega$ if
\[
  y_i(\omega)-x_i^j(\omega)\ge 0
  \quad\text{for all }i,j.
\]
Rule-by-rule participation is stronger than max--min participation: it treats every label as a settlement the agent must be willing and able to honor and therefore captures hard liability limits. A family can preserve the evaluated participation value yet violate a realized settlement constraint. The results below are algebraic and do not use dominance, although dominance supports the worst-case interpretation.

\long\def\QuantitativeParticipationBound{%
\subsection{A sharp participation bound}\label{subsec:participation_bound}

\begin{proposition}[Sharp participation bound]\label{prop:rule_ir_sharp}
Fix a state $\omega$ and let the preserving weights be generated by \eqref{eq:Deltaqi}--\eqref{eq:qstar} with a full-support common reference $k$ and off-diagonal exposures satisfying $\varepsilon_i^j=\varepsilon_i\in[0,1)$ for $j\neq i$. If a payment-rule family preserves the original transfers at $\omega$, is exactly budget balanced at $\omega$, and satisfies rule-by-rule ex post IR at $\omega$, then
\[
  s(\omega)\;\le\;\sum_{i\in\N}w_i\bigl(y_i(\omega)-x_i(\omega)\bigr),
  \qquad w_i=\frac{\varepsilon_i}{1-\varepsilon_i}\,.
\]
With common exposure $\varepsilon_i\equiv\varepsilon$ this reads
\[
  s(\omega)\;\le\;\varepsilon\,V(\omega).
\]
The constant $\varepsilon$ cannot be improved: if $n\ge2$, $k^j\equiv1/n$, $y_i(\omega)\equiv y>0$, and $x_i(\omega)=\varepsilon y$ for every $i$ (so that $s=\varepsilon V$), the family $x_i^i=y$ and $x_i^j=-y/(n-1)$ for $j\neq i$ preserves, balances exactly, and satisfies rule-by-rule ex post IR with equality on the diagonal.
\end{proposition}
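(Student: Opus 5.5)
The bound follows from the heterogeneous surplus-absorption identity, Lemma~\ref{lem:devidentity}. Fix the state and suppress it. Under \eqref{eq:qstar} with off-diagonal exposure $\varepsilon_i$, preservation for agent~$i$ can be rewritten as $x_i=(1-\varepsilon_i)\sum_jk^jx_i^j+\varepsilon_ix_i^i$. Summing over agents and using exact balance gives
\[
  \sum_{i\in\N}w_i\bigl(x_i^i-x_i\bigr)=s .
\]
No dominance is used in this step, which is consistent with the remark that the participation results are purely algebraic.

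Rule-by-rule ex post IR applied to the adverse label gives $x_i^i\le y_i$ for every $i$. Because $w_i\ge0$, we get $w_i(x_i^i-x_i)\le w_i(y_i-x_i)$ for each $i$. Summing yields
\[
  s\le\sum_{i\in\N}w_i(y_i-x_i).
\]
For common exposure, the weighted sum becomes $\frac{\varepsilon}{1-\varepsilon}(V-s)$. The inequality $s\le\frac{\varepsilon}{1-\varepsilon}(V-s)$ rearranges to $(1-\varepsilon)s\le\varepsilon(V-s)$, that is, $s\le\varepsilon V$. Only the diagonal IR constraints are needed.

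For sharpness, take $k^j\equiv1/n$, $y_i\equiv y$, and $x_i\equiv\varepsilon y$, so that $s=n\varepsilon y=\varepsilon V$. Consider the family $x_i^i=y$ and $x_i^j=-y/(n-1)$ for $j\neq i$.
\begin{itemize}
  \item \emph{Balance.} Each column contains one entry $y$ and $n-1$ entries $-y/(n-1)$, so it sums to zero.
  \item \emph{Rule-by-rule IR.} The diagonal holds with equality, and off-diagonal payments are negative, so $y_i-x_i^j>0$ there.
  \item \emph{Dominance.} Every off-diagonal entry lies below the diagonal entry.
  \item \emph{Preservation.} Since $\sum_jk^jx_i^j=\frac1n\bigl(y-y\bigr)=0$, the rewritten preservation identity gives $(1-\varepsilon)\cdot0+\varepsilon y=\varepsilon y=x_i$.
\end{itemize}
Thus the family meets every condition and attains $s=\varepsilon V$.

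I do not expect a real obstacle. The only point needing care is rewriting preservation in the form $(1-\varepsilon_i)\bar x_i^k+\varepsilon_ix_i^i$, and that is already done in the proof of Lemma~\ref{lem:trace} and Lemma~\ref{lem:devidentity}.
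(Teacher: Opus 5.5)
Your proposal is correct and follows the paper's proof essentially step for step. You rewrite preservation under \eqref{eq:qstar}, sum with exact balance to obtain $s=\sum_i w_i(x_i^i-x_i)$, use only the diagonal IR constraints, and rearrange in the common-exposure case; the one cosmetic difference is that you check preservation in the sharpness example through $\sum_j k^j x_i^j=0$, whereas the paper substitutes $q_i^{i\ast}$ and the off-diagonal weights directly, and the two checks are equivalent.
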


\begin{proof}
Only the diagonal IR constraints are needed. Under the stipulated weights, preservation rearranges to
\[
  \sum_{j\in\N}k^j x_i^j
  =x_i-w_i\bigl(x_i^i-x_i\bigr).
\]
Summing over agents and using exact balance gives
\[
  s=\sum_iw_i\bigl(x_i^i-x_i\bigr).
\]
Rule-by-rule IR gives $x_i^i(\omega)\le y_i(\omega)$, and therefore
\[
  s\le\sum_iw_i\bigl(y_i-x_i\bigr).
\]
With $\varepsilon_i\equiv\varepsilon$ the right side is $\frac{\varepsilon}{1-\varepsilon}(V-s)$, and $s\le\frac{\varepsilon}{1-\varepsilon}(V-s)$ rearranges to $s\le\varepsilon V$. For tightness, the displayed family balances exactly ($y-(n-1)\cdot\frac{y}{n-1}=0$ in every rule) and satisfies \eqref{eq:monotone} and rule-by-rule IR ($-\frac{y}{n-1}<y$ and $x_i^i=y_i$); it preserves because, with $k^j\equiv1/n$,
\[
  q_i^{i\ast}\,y-(1-\varepsilon)\,\tfrac{n-1}{n}\cdot\tfrac{y}{n-1}
  =\Bigl[\tfrac{1+(n-1)\varepsilon}{n}-\tfrac{1-\varepsilon}{n}\Bigr]y
  =\varepsilon y
  =x_i(\omega). \qedhere
\]
\end{proof}

Proposition~\ref{prop:rule_ir_sharp} is a necessary diagnostic, not a sufficient condition for arbitrary profiles. Under common exposure, rule-by-rule participation limits recycled surplus to the ambiguity share of aggregate realized value. More generally, only exposed agents' valuation slack is weighted: an unexposed agent contributes nothing, while an exposed agent with little slack may bind. By Proposition~\ref{prop:reference_compensation}, each exposed agent's contribution is also her reference-frequency accounting rebate. Positive aggregate slack remains insufficient unless the remaining preservation, balance, and participation constraints can also be met.

}

The main text now turns to the stronger qualitative failure in standard Vickrey states.

\subsection{Full-support impossibility}\label{subsec:participation_impossibility}

\begin{theorem}[Full-support impossibility for rule-by-rule participation]\label{thm:full_support_participation_impossibility}
Fix a state $\omega$ and suppose $q_i^{j\ast}>0$ for every $i,j$.
Suppose there is an agent $h$ such that
\[
  y_h(\omega)>0,\qquad
  y_i(\omega)=0 \text{ and } x_i(\omega)=0 \text{ for all } i\neq h,
  \qquad
  x_h(\omega)>0.
\]
Then no payment-rule family can simultaneously preserve the original transfers at $\omega$, be exactly budget balanced at $\omega$, and satisfy rule-by-rule ex post IR at $\omega$.
\end{theorem}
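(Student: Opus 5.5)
The plan is a proof by contradiction that only uses the preservation identities of the non-$h$ agents, the balance of every column, and rule-by-rule IR. Fix $\omega$, suppress it, and suppose a family $\{x^j\}$ preserves, balances exactly, and satisfies rule-by-rule ex post IR.

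\emph{Step 1: pin down the non-$h$ rows.} For every $i\neq h$, rule-by-rule IR with $y_i=0$ gives $x_i^j\le 0$ for every label $j$. Preservation gives $\sum_j q_i^{j\ast}x_i^j=x_i=0$. This is a weighted sum of nonpositive terms with strictly positive weights, by the full-support hypothesis $q_i^{j\ast}>0$, and it equals zero. Hence $x_i^j=0$ for every $j$ and every $i\neq h$. This is the step where full support is essential: with a zero weight on some label, the agent could absorb a rebate there, as in the support-only construction of Corollary~\ref{cor:frequency_unrestricted}.

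\emph{Step 2: use balance.} Exact budget balance of each column $j$ now reads $x_h^j+\sum_{i\neq h}x_i^j=x_h^j=0$. So $x_h^j=0$ for every label $j$.

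\emph{Step 3: contradict the winner's preservation.} Agent $h$'s preservation identity becomes $x_h=\sum_j q_h^{j\ast}x_h^j=0$, which contradicts $x_h>0$. Agent $h$'s IR constraint and $y_h>0$ are never used; they serve only to place the statement in a positive-revenue Vickrey state, such as the losers in Example~\ref{ex:vcg_vickrey3}.

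The argument uses no dominance and no particular form of the weights, so it applies to every full-support preserving profile, including \eqref{eq:qstar} with full-support $k$ and exposures below one. There is no serious obstacle. The only point needing care is that Step 1 must work label by label: it needs strict positivity of every $q_i^{j\ast}$ for the non-$h$ agents, and that is exactly what the hypothesis supplies.
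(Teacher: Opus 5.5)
Your proposal is correct and follows essentially the same argument as the paper. Rule-by-rule IR and full-support preservation force every non-$h$ row to vanish, exact balance then zeroes agent $h$'s row, and this contradicts $x_h(\omega)>0$. Your side observations are also accurate: strict positivity is needed only for the non-$h$ weights, and neither $y_h>0$ nor dominance is used.
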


\begin{proof}
For every $i\neq h$, rule-by-rule IR implies $x_i^j(\omega)\le 0$ for all $j$.
Preservation gives
\[
  \sum_{j\in\N}q_i^{j\ast}x_i^j(\omega)=x_i(\omega)=0.
\]
Since $q_i^{j\ast}>0$ for every $j$ and every term in the weighted average is weakly negative, it follows that $x_i^j(\omega)=0$ for all $j$ and all $i\neq h$.
Exact budget balance then implies $x_h^j(\omega)=0$ for every rule $j$.
But preservation for agent $h$ would require
\[
  \sum_{j\in\N}q_h^{j\ast}x_h^j(\omega)=x_h(\omega)>0. \qedhere
\]
\end{proof}

\subsection{The support-only boundary}\label{subsec:participation_endpoint}

\begin{corollary}[Sharpness at the support-only endpoint]\label{cor:endpoint_ir_sharpness}
Fix a full-support reference $k$ and the common-contamination family
\[
  \Delta q_i(\varepsilon)
  =
  \bigl\{(1-\varepsilon)k+\varepsilon q:q\in\Delta(\N)\bigr\},
  \qquad \varepsilon\in[0,1].
\]
Fix a state satisfying the payment and value-profile hypotheses of Theorem~\ref{thm:full_support_participation_impossibility}, and suppose additionally that $x_h(\omega)\le y_h(\omega)$, so the original payment rule is ex post individually rational at that state. A payment-rule family satisfying dominance, preservation, exact budget balance, and rule-by-rule ex post individual rationality at the state exists if and only if $\varepsilon=1$.
\end{corollary}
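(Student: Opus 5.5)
The proof splits into the two directions of the ``if and only if.'' In both, I will use the preserving weights \eqref{eq:common_contamination_weights}, $q_i^\ast(\varepsilon)=(1-\varepsilon)k+\varepsilon e_i$, which are the worst-case distributions under fixed-adverse-label dominance.

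For necessity, I will show that every $\varepsilon<1$ is impossible. Because $k$ has full support, for $\varepsilon\in[0,1)$ every coordinate of $q_i^\ast(\varepsilon)$ is at least $(1-\varepsilon)k^j>0$. So the weights have full support for every agent, and the hypothesis $q_i^{j\ast}>0$ of Theorem~\ref{thm:full_support_participation_impossibility} holds. The state satisfies that theorem's payment and value-profile hypotheses by assumption. Hence no family can preserve, balance exactly, and satisfy rule-by-rule ex post IR, even before dominance is imposed. Therefore existence forces $\varepsilon=1$. At $\varepsilon=0$ the weights collapse to $k$, which is still full support, so this endpoint is covered as well.

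For sufficiency at $\varepsilon=1$, the ambiguity set is the whole simplex $\Delta(\N)$. The worst-case weight is then $e_i$, and we are at the support-only endpoint. I will take the canonical menu \eqref{eq:frequency_unrestricted_repair} of Corollary~\ref{cor:frequency_unrestricted}. The original rule is non-deficit at this state because $s=x_h>0$. That corollary gives dominance, preservation and exact balance. It also gives $\widehat x_i^j\le x_i$ for every $i,j$, so it remains to check IR rule by rule:
\begin{itemize}
\item for agent $h$, $y_h-\widehat x_h^j\ge y_h-x_h\ge0$, using the added hypothesis $x_h\le y_h$;
\item for $i\neq h$, $y_i-\widehat x_i^j\ge 0-0=0$.
\end{itemize}
Concretely, $\widehat x^h$ equals $x$, and each $\widehat x^j$ with $j\neq h$ has $h$ paying $x_h-s/(n-1)$ and every other agent paying $-s/(n-1)$ except agent $j$, who pays $0$.

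Corollary~\ref{cor:frequency_unrestricted} is stated globally, but all its claims are pointwise, so applying it at the single state is immediate. The main point of care is making sure that the preservation notion at $\varepsilon=1$ is the degenerate weight $e_i$, so that the full-support hypothesis genuinely fails only at that endpoint.
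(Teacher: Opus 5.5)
Your proof is correct and follows the paper's own argument. For $\varepsilon<1$ the weights satisfy $q_i^\ast(\varepsilon)\ge(1-\varepsilon)k>0$, so Theorem~\ref{thm:full_support_participation_impossibility} applies; at $\varepsilon=1$ the canonical menu \eqref{eq:frequency_unrestricted_repair} supplies dominance, preservation, and balance, and rule-by-rule IR follows from $\widehat x_i^j\le x_i$. One slip in your concrete description, which the argument does not rely on: $\widehat x^h$ is not $x$, which sums to $s>0$ and so cannot be a balanced column; under label $h$ every $i\neq h$ pays $-s/(n-1)$, as in the support-only column $(6,-3,-3)$ of Example~\ref{ex:vcg_vickrey3}.
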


\begin{proof}
For every $\varepsilon<1$, the adverse-label worst-case weight is
\[
  q_i^\ast(\varepsilon)=(1-\varepsilon)k+\varepsilon e_i.
\]
Because $k$ has full support, every component of every $q_i^\ast(\varepsilon)$ is positive. Theorem~\ref{thm:full_support_participation_impossibility} rules out the required family. At $\varepsilon=1$, $q_i^\ast=e_i$. Apply \eqref{eq:frequency_unrestricted_repair} at the fixed state: the resulting payments satisfy dominance, preservation, and exact budget balance there. Every realized payment is weakly below its original counterpart, so the assumed ex post individual rationality of the pure rule implies rule-by-rule ex post individual rationality.
\end{proof}

In a single-item Vickrey auction with positive revenue, the winner has positive realized value and payment, while losers have zero realized value and VCG payments. Theorem~\ref{thm:full_support_participation_impossibility} therefore makes the failure in Example~\ref{ex:vcg_vickrey3} construction-independent and stronger than a transfer-magnitude bound. Rule-by-rule participation forces losers' payments to be nonpositive; full-support preservation makes them all zero, and exact balance then eliminates the winner's positive preserved payment.

Full support is natural under a full-support reference at every interior exposure: every settlement rule matters for preservation. If some labels receive zero preserving weight, their transfers may escape the averaging argument. Corollary~\ref{cor:endpoint_ir_sharpness} establishes sharpness along the common-contamination path, not logical minimality for every pattern of zero weights. The discontinuity concerns whether non-adverse labels are behaviorally priced.

% ========================================================================
\section{Conclusion}\label{sec:discussion}
% ========================================================================
Pointwise preservation carries the original mechanism's dominant-strategy incentives into an exactly balanced payment-rule menu. Support-only ambiguity permits exact repair with small fixed menus and can preserve both the original transfer cap and participation under every realized rule. Positive frequency floors leave feasibility intact but raise transfer-cap requirements; with full-support preserving distributions, rule-by-rule participation fails in standard Vickrey states. These results separate feasibility, liquidity, and realized participation.

Implementation requires a precommitted, report-blind menu and an operational input that participants cannot manipulate. Exact balance removes the operator's direct monetary stake in the label, but not participants' preferences over labels or other nonmonetary motives. The results remain conditional on pointwise preservation and the maintained worst-case evaluation and, for the feasibility characterization, on the fixed-adverse-label architecture. The sharp frontiers apply to their stated benchmark domains. Repetition, learning, or complete label-contingent contracting may therefore require the mechanism to be redesigned.

% ========================================================================
% APPENDICES
% ========================================================================
\clearpage
\appendix
\numberwithin{equation}{section}

% ========================================================================

\section{Participation and transfer-feasibility bounds}\label{app:transferbound}
% ========================================================================

\QuantitativeParticipationBound

\subsection{A general participation bound}

Retain the statewise value notation of Section~\ref{sec:transfers}. The following bound allows arbitrary exposure patterns and uses only the row interval jointly imposed by rule-by-rule IR and exact balance. Proposition~\ref{prop:rule_ir_sharp} gives the sharper incidence-sensitive bound when exposures may vary across agents but are constant across each agent's non-adverse labels.

\begin{proposition}[Rule-by-rule ex post IR implies a thickness bound]\label{prop:rule_ir_bound}
Fix a state $\omega$ and suppose the ambiguity profile is generated by \eqref{eq:Deltaqi}--\eqref{eq:qstar}. Suppose $y_i(\omega)\ge0$ for every agent. If a payment-rule family preserves the original transfers, balances exactly, and satisfies rule-by-rule ex post IR at $\omega$, then
\[
  s(\omega)\le R\,V(\omega),
\]
where $R=\sum_i q_i^{i\ast}-1=\sum_j k^j\sum_{i\neq j}\varepsilon_i^j$.
\end{proposition}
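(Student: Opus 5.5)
The plan is to reuse the accounting identity from the necessity half of Theorem~\ref{thm:bbtrans} and then bound every adverse-label spread by $V(\omega)$ using only rule-by-rule IR and exact balance. Fix the state and suppress it. Under the lower-bound foundation, \eqref{eq:qstar} gives $q_i^{j\ast}=(1-\varepsilon_i^j)k^j$ for $j\neq i$.

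\emph{Step 1 (identity).} Sum the preservation identities over agents and subtract the $k$-weighted sum of the balance identities. Using only that each row of $q^\ast$ and $k$ sums to one, this yields
\[
  s=\sum_{i\in\N}\sum_{j\neq i}\bigl(k^j-q_i^{j\ast}\bigr)\bigl(x_i^i-x_i^j\bigr)
   =\sum_{i\in\N}\sum_{j\neq i}\varepsilon_i^j k^j\bigl(x_i^i-x_i^j\bigr).
\]
I will re-derive this explicitly rather than cite it, to make clear that dominance \eqref{eq:monotone} is not used. The coefficients $\varepsilon_i^jk^j$ are nonnegative. Their total is $\sum_jk^j\sum_{i\neq j}\varepsilon_i^j=R$, by the expression for $R$ following Lemma~\ref{lem:qstar}.

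\emph{Step 2 (spread bound).} For each pair $i\neq j$ I will show $x_i^i-x_i^j\le V$.
\begin{itemize}
\item Rule-by-rule IR under label $i$ gives $x_i^i\le y_i$.
\item Exact balance under label $j$ gives $x_i^j=-\sum_{h\neq i}x_h^j$.
\item Rule-by-rule IR for the other agents under label $j$ gives $x_h^j\le y_h$, so $x_i^j\ge-\sum_{h\neq i}y_h$.
\end{itemize}
Hence $x_i^i-x_i^j\le y_i+\sum_{h\neq i}y_h=V$.

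\emph{Step 3 (combine).} Because all coefficients in Step 1 are nonnegative, multiplying the spread bound by them and summing gives $s\le V\sum_i\sum_{j\neq i}\varepsilon_i^jk^j=R\,V$.

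The only delicate point is Step 2. The spreads are not signed without dominance, so the bound has to come from the IR and balance constraints alone, and the estimate above does exactly that. The hypothesis $y_i\ge0$ is then only needed to ensure that $V\ge0$, so that the bound is meaningful.
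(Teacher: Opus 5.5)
Your proposal is correct and follows the paper's proof. You use the same preservation-minus-$k$-weighted-balance decomposition, whose nonnegative coefficients $\varepsilon_i^jk^j$ sum to $R$, and the same combination of rule-by-rule IR with exact balance to place row $i$ inside $[y_i-V,y_i]$. The only difference is cosmetic: the paper bounds the absolute spread $|x_i^i-x_i^j|$ by $V$, whereas you bound the signed spread $x_i^i-x_i^j$ directly, which suffices because the coefficients are nonnegative.
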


\begin{proof}
Rule-by-rule IR gives $x_i^j(\omega)\le y_i(\omega)$. Exact balance also gives
\[
  x_i^j(\omega)=-\sum_{h\neq i}x_h^j(\omega)
  \ge-\sum_{h\neq i}y_h(\omega)
  =y_i(\omega)-V(\omega),
\]
so every entry in row $i$ lies in the interval $[y_i(\omega)-V(\omega),y_i(\omega)]$. Using \eqref{eq:qstar}, preservation implies
\[
  s(\omega)=\sum_jk^j\sum_ix_i^j(\omega)
  +\sum_jk^j\sum_{i\neq j}\varepsilon_i^j
  \bigl(x_i^i(\omega)-x_i^j(\omega)\bigr).
\]
The first term is zero, and each within-row spread has absolute value at most $V(\omega)$, which yields the claim.
\end{proof}

\subsection{Two-agent transfer-cap benchmark}

The next benchmark isolates the inverse-exposure lower bound and the unique preserving family at a single-winner state.

\begin{proposition}[Transfer caps and ambiguity thickness in the symmetric two-agent case]\label{prop:blowup_lb}
Consider a fixed state $\omega$, two agents, and two rule labels. Suppose $k^1=k^2=\frac12$, $\varepsilon_i^i=0$, and $\varepsilon_i^j=\varepsilon\in(0,1)$ for $i\neq j$, so that
\[
q_1^\ast=\Bigl(\frac{1+\varepsilon}{2},\frac{1-\varepsilon}{2}\Bigr),
\qquad
q_2^\ast=\Bigl(\frac{1-\varepsilon}{2},\frac{1+\varepsilon}{2}\Bigr).
\]
Let $x(\omega)=(x_1,x_2)$ have surplus $s:=x_1+x_2\ge0$. Any family satisfying exact rule-by-rule balance and preservation obeys
\[
\max_{i\in\{1,2\},\,j\in\{1,2\}} |x_i^j(\omega)|
\ge\frac{s}{2\varepsilon}.
\]
Consequently, a cap $\bar m>0$ requires $\varepsilon\ge s/(2\bar m)$. If $x(\omega)=(t,0)$ with $t>0$, the feasible family is unique and equals
\[
x^1(\omega)=\Bigl(\frac{1+\varepsilon}{2\varepsilon}t,\,-\frac{1+\varepsilon}{2\varepsilon}t\Bigr),
\qquad
x^2(\omega)=\Bigl(-\frac{1-\varepsilon}{2\varepsilon}t,\,\frac{1-\varepsilon}{2\varepsilon}t\Bigr),
\]
so its maximum absolute transfer is $(1+\varepsilon)t/(2\varepsilon)$.
\end{proposition}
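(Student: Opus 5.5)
Fix the state, suppress $\omega$, and write $x_i^j$ for the four unknowns. The system consists of two balance equations $x_1^j+x_2^j=0$ for $j=1,2$ and two preservation equations $q_i^\ast\cdot x_i=x_i$. Because the claimed cap bound holds for any family satisfying balance and preservation, with no dominance requirement, I would not appeal to Theorem~\ref{thm:frontier}, which assumes $n\ge3$. Instead I would use Proposition~\ref{prop:arbitrary_menu_identity}, or equivalently the exact-balance form of Lemma~\ref{lem:trace}. Two remarks make this legitimate. First, own-label exposure is zero here and is irrelevant in the lemma. Second, the lemma's proof uses only the displayed form of $q_i^\ast$, not dominance. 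It then gives $x_1^1+x_2^2=s/\varepsilon$, so $\max_i x_i^i\ge s/(2\varepsilon)$, which is the displayed cap bound. If $\bar m$ bounds every entry, then $s/(2\varepsilon)\le\bar m$, that is, $\varepsilon\ge s/(2\bar m)$. The edge case $s=0$ is trivial.

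For the single-winner state I would solve the linear system explicitly. Set $a:=x_1^1$ and $b:=x_1^2$. Balance gives $x_2^1=-a$ and $x_2^2=-b$. The preservation equations become
\[
\tfrac{1+\varepsilon}{2}a+\tfrac{1-\varepsilon}{2}b=t,
\qquad
-\tfrac{1-\varepsilon}{2}a-\tfrac{1+\varepsilon}{2}b=0.
\]
The determinant of this $2\times2$ system is $-\bigl[(1+\varepsilon)^2-(1-\varepsilon)^2\bigr]/4=-\varepsilon\neq0$ for $\varepsilon\in(0,1)$, so the solution is unique. From the second equation, $b=-\frac{1-\varepsilon}{1+\varepsilon}a$. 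Substituting into the first gives $\frac{a}{2}\cdot\frac{(1+\varepsilon)^2-(1-\varepsilon)^2}{1+\varepsilon}=t$, hence $a=\frac{(1+\varepsilon)t}{2\varepsilon}$ and $b=-\frac{(1-\varepsilon)t}{2\varepsilon}$. These values reproduce the displayed $x^1$ and $x^2$.

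Finally, the maximum absolute entry is $\frac{1+\varepsilon}{2\varepsilon}t$, because $1+\varepsilon>1-\varepsilon$. This is consistent with the general bound, since $\frac{1+\varepsilon}{2\varepsilon}t\ge\frac{t}{2\varepsilon}$. The family also satisfies dominance: $x_1^1\ge x_1^2$ and $x_2^2\ge x_2^1$ both reduce to $\frac{t}{\varepsilon}\ge0$, which makes it a genuine BB transformation. The step I expect to need the most care is not computational. It is checking that Lemma~\ref{lem:trace} applies without dominance and with $n=2$. If there is any doubt, I would instead add the two preservation identities directly and use balance; the reference-weighted terms cancel because $k^j=\tfrac12$ for both labels, which yields $\varepsilon(x_1^1+x_2^2)=s$.
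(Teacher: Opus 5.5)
Your proposal is correct and essentially follows the paper's argument. The paper writes $x^1=(a,-a)$ and $x^2=(b,-b)$, adds the two preservation identities to get $\varepsilon(a-b)=s$, and solves the same nonsingular $2\times 2$ system for the single-winner family; your $\varepsilon(x_1^1+x_2^2)=s$, via Lemma~\ref{lem:trace} or direct addition, is that same identity.

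One minor caution: Proposition~\ref{prop:arbitrary_menu_identity} states preservation with $\max_{\ell} z_i^\ell$, so without dominance it is not equivalent to fixed-weight preservation. Lemma~\ref{lem:trace}, or your direct fallback, is the right citation here.
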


\begin{proof}
Exact balance gives $x^1=(a,-a)$ and $x^2=(b,-b)$. Preservation gives
\[
\frac{1+\varepsilon}{2}a+\frac{1-\varepsilon}{2}b=x_1,
\qquad
-\frac{1-\varepsilon}{2}a-\frac{1+\varepsilon}{2}b=x_2.
\]
Adding yields $\varepsilon(a-b)=s$, hence $2\max\{|a|,|b|\}\ge s/\varepsilon$. The cap condition follows. When $x=(t,0)$, solving the two equations gives the displayed unique family.
\end{proof}

\subsection{General transfer-feasibility bound}

The symmetric two-agent benchmark in Proposition~\ref{prop:blowup_lb} isolates the $1/\varepsilon$ blow-up in the clearest possible way. The next proposition records a more general necessary condition that allows arbitrary $n$, heterogeneous ambiguity, and a uniformly bounded ex post imbalance.

\begin{proposition}[A general lower bound with transfer caps and approximate balance]\label{prop:general_transfer_lb}
Fix a state $\omega$ and suppose the ambiguity profile satisfies \eqref{eq:Deltaqi}--\eqref{eq:qstar}. Let
\[
  s(\omega):=\sum_{i\in\N} x_i(\omega),
  \qquad
  T_j(\omega):=\sum_{i\in\N} x_i^j(\omega),
  \qquad
  R:=\sum_{i\in\N} q_i^{i\ast}-1
  =\sum_{j\in\N} k^j \sum_{i\neq j} \varepsilon_i^j.
\]
Consider any payment-rule family $\{x^j(\omega)\}_{j\in\N}$ satisfying
\[
\sum_{j\in\N} q_i^{j\ast}x_i^j(\omega)=x_i(\omega)
\quad\text{for all } i\in\N.
\]
If for some $\bar m,\bar\delta\ge 0$,
\[
|x_i^j(\omega)|\le \bar m
\quad\text{for all } i,j,
\qquad
|T_j(\omega)|\le \bar\delta
\quad\text{for all } j,
\]
then
\[
  s(\omega)\le \bar\delta + 2R\,\bar m.
\]
Equivalently,
\[
  \bar\delta \ge [\,s(\omega)-2R\,\bar m\,]_+.
\]
In particular, if $\bar\delta=0$ and $R>0$, then
\[
  \bar m \ge \frac{s(\omega)}{2R}.
\]
\end{proposition}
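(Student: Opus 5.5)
The plan is to reuse the decomposition already written in the proof of Proposition~\ref{prop:rule_ir_bound}. It splits the surplus into a reference-weighted imbalance term and a reference-wedge term built from the within-row spreads. The imbalance term is controlled by $\bar\delta$ and the spread term by $2\bar m$. Dominance is not assumed, so the argument must bound spreads in absolute value rather than use their sign.

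\emph{Step 1 (rewrite preservation).} Fix $\omega$ and suppress it. Substitute the weights \eqref{eq:qstar} into preservation for agent $i$. The off-diagonal coefficients are $(1-\varepsilon_i^j)k^j$, and the diagonal coefficient is $k^i+\sum_{h\neq i}\varepsilon_i^h k^h$. Regrouping gives
\[
  x_i=\sum_{j\in\N}k^j x_i^j+\sum_{j\neq i}\varepsilon_i^j k^j\bigl(x_i^i-x_i^j\bigr).
\]
This is a purely algebraic step, and it is the only place where care is needed. The point is that the extra diagonal mass is exactly $\sum_{h\neq i}\varepsilon_i^h k^h$, so the diagonal term absorbs the missing off-diagonal mass.

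\emph{Step 2 (sum over agents).} Summing over $i$ and exchanging the order of summation in the first term gives
\[
  s=\sum_{j\in\N}k^jT_j+\sum_{j\in\N}k^j\sum_{i\neq j}\varepsilon_i^j\bigl(x_i^i-x_i^j\bigr).
\]

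\emph{Step 3 (bound each term).} Since $k\in\Delta(\N)$ and $|T_j|\le\bar\delta$, the first term is at most $\bar\delta$. In the second term every coefficient $k^j\varepsilon_i^j$ is nonnegative, and these coefficients sum to $R$. Each spread satisfies $x_i^i-x_i^j\le|x_i^i|+|x_i^j|\le 2\bar m$ by the cap. So the second term is at most $2R\bar m$, which yields $s\le\bar\delta+2R\bar m$.

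\emph{Step 4 (consequences).} The bound is equivalent to $\bar\delta\ge s-2R\bar m$. Together with $\bar\delta\ge0$ this gives the $[\cdot]_+$ form. Setting $\bar\delta=0$ and dividing by $R>0$ gives $\bar m\ge s/(2R)$.
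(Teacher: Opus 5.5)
Your proposal is correct and matches the paper's proof essentially step for step. The paper also uses \eqref{eq:qstar} to rewrite preservation as a $k$-weighted row average plus wedge-weighted spreads, sums over agents, bounds the imbalance term by $\bar\delta$, and bounds each spread by $2\bar m$ with nonnegative coefficients summing to $R$; your explicit check that the diagonal coefficient equals $k^i+\sum_{h\neq i}\varepsilon_i^h k^h$ is the one step the paper leaves implicit.
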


\begin{proof}
Using \eqref{eq:qstar}, rewrite the preservation identities as
\begin{align*}
  x_i(\omega)
  &= q_i^{i\ast}x_i^i(\omega) + \sum_{j\neq i}(1-\varepsilon_i^j)k^j x_i^j(\omega) \\
  &= \sum_{j\in\N} k^j x_i^j(\omega)
   + \sum_{j\neq i} \varepsilon_i^j k^j \bigl(x_i^i(\omega)-x_i^j(\omega)\bigr)
\end{align*}
for each agent $i$. Summing over $i$ yields
\[
  s(\omega)
  = \sum_{j\in\N} k^j T_j(\omega)
  + \sum_{j\in\N} k^j \sum_{i\neq j} \varepsilon_i^j \bigl(x_i^i(\omega)-x_i^j(\omega)\bigr).
\]
Therefore,
\begin{align*}
  s(\omega)
  &\le \sum_{j\in\N} k^j |T_j(\omega)|
   + \sum_{j\in\N} k^j \sum_{i\neq j} \varepsilon_i^j \bigl|x_i^i(\omega)-x_i^j(\omega)\bigr| \\
  &\le \bar\delta + 2\bar m \sum_{j\in\N} k^j \sum_{i\neq j} \varepsilon_i^j
   = \bar\delta + 2R\bar m.
\end{align*}
The equivalent lower bound on $\bar\delta$ is immediate. If $\bar\delta=0$ and $R>0$, rearranging gives $\bar m \ge s(\omega)/(2R)$.
\end{proof}

The bound is vacuous when $s(\omega)\le0$. At positive-surplus states it specializes to Proposition~\ref{prop:blowup_lb} in the symmetric two-agent exact-balance benchmark: $R=\varepsilon$ and $\bar\delta=0$ give $\bar m\ge s/(2\varepsilon)$. The benchmark result additionally identifies the unique preserving family when $x(\omega)=(t,0)$.

Proposition~\ref{prop:general_transfer_lb} uses neither dominance nor common exposure and bounds each adverse-label spread by two capped transfers, which explains its factor of two. When exposures may vary across agents but are constant across each agent's non-adverse labels, the master decomposition instead closes into the exact identity of Lemma~\ref{lem:devidentity}. With common exposure this yields $s(\omega)\le(1-\varepsilon)\bar\delta+n\varepsilon\bar m$. With a uniform reference, the associated lower bound truncated at zero is exact throughout the interior exposure range over the feasible cap domain in the single-winner benchmark for $n\ge4$, and Corollary~\ref{cor:capped_anyx} extends the same value to componentwise nonnegative payment vectors on its stated cap domain. With three agents, Lemma~\ref{lem:three_allocation} supplies an additional allocation line. Outside these stated domains, the inequality remains only a necessary bound. Thus the general inequality is a quick screen, while the identities and constructions in Section~\ref{sec:price} price repair sharply.

\section{Feasibility extensions and classification}\label{app:feasibility_supp}

\subsection{Why necessity is class-relative}\label{app:scope_counterexample}

The dominance condition in Theorem~\ref{thm:bbtrans} is substantive. Let $n=2$, $q_1^\ast=(0.7,0.3)$, $q_2^\ast=(0.3,0.7)$, and $x=(-1,0)$. The unique family satisfying balance and preservation alone is
\[
  x^1=\bigl(-\tfrac74,\tfrac74\bigr),
  \qquad
  x^2=\bigl(\tfrac34,-\tfrac34\bigr).
\]
It repairs the deficit but violates dominance because $x_1^2>x_1^1$. Dropping the fixed-adverse-label restriction therefore enlarges the set of repairable rules. Under the lower-bound foundation, dominance is also what supports a report-independent worst-case minimizer.

\subsection{Calibration and unknown exposure}\label{app:calibration}

\begin{proposition}[Calibration sensitivity of the closed form]\label{prop:calibration_sensitivity}
Suppose \eqref{eq:closed_form} is constructed using preserving weights $q^\ast$ satisfying the reference-wedge condition with $R>0$. Let $\tilde q_i\in\Delta(\N)$ be any report-independent evaluation weight. Then
\[
  \sum_{j\in\N}\tilde q_i^j x_i^j(\omega)-x_i(\omega)
  =\frac{\tilde q_i^i-q_i^{i\ast}}{R}\,s(\omega).
\]
Consequently, if $|\tilde q_i^i-q_i^{i\ast}|\le\eta_i$ and $|s(\omega)|\le\bar s$ over the relevant report profiles, the evaluated-transfer error is at most $\eta_i\bar s/R$. Every incentive margin changes by at most $2\eta_i\bar s/R$, and every participation payoff by at most $\eta_i\bar s/R$.
\end{proposition}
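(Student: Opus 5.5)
The plan is to substitute the closed form \eqref{eq:closed_form} directly into the $\tilde q_i$-weighted sum and read off the error term. Write agent $i$'s row as
\[
  x_i^j(\omega)=x_i(\omega)+\frac{s(\omega)}{R}\bigl(\mathbf 1\{j=i\}-q_i^{i\ast}\bigr).
\]
The computation then has three steps.

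First, because $\tilde q_i$ is a probability vector, $\sum_j\tilde q_i^j x_i(\omega)=x_i(\omega)$.

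Second, the indicator term contributes $\sum_j\tilde q_i^j\mathbf 1\{j=i\}=\tilde q_i^i$.

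Third, the constant term contributes $-q_i^{i\ast}\sum_j\tilde q_i^j=-q_i^{i\ast}$.

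Collecting the three pieces gives
\[
  \sum_j\tilde q_i^j x_i^j(\omega)-x_i(\omega)=\frac{\tilde q_i^i-q_i^{i\ast}}{R}\,s(\omega).
\]
This is exactly the displayed identity. It uses only $R>0$ and the fact that every row of \eqref{eq:closed_form} is a constant vector plus a multiple of $e_i-q_i^{i\ast}\mathbf 1$. It does not use the wedge inequalities or the off-diagonal entries of $\tilde q_i$.

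For the bounds, take absolute values in the identity. Using $|\tilde q_i^i-q_i^{i\ast}|\le\eta_i$ and $|s(\omega)|\le\bar s$, the evaluated-transfer error is at most $\eta_i\bar s/R$ at every report profile.

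The allocation $g$ is label-independent, so the valuation term is the same under $\tilde q_i$ and under the original mechanism. Agent $i$'s evaluated payoff at any profile $(\theta,r)$ therefore differs from the pure payoff $v_i(g(r),\theta)-x_i(r)$ by at most $\eta_i\bar s/R$. Evaluating at truthful reports gives the participation bound $\eta_i\bar s/R$.

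An incentive margin is a difference of two such payoffs, one at the truthful report and one at a deviation. Each of the two carries an error of at most $\eta_i\bar s/R$, so by the triangle inequality the margin moves by at most $2\eta_i\bar s/R$.

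The only point needing care is interpretation. If $\tilde q_i$ is meant to be the agent's actual worst case, the identity applies to whatever report-independent weight she uses, so no minimization argument is needed. The "relevant report profiles" must include the deviation profiles, which is why the bound $\bar s$ is imposed there. There is no real obstacle beyond this bookkeeping.
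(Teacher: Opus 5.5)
Your proof is correct and follows the paper's argument: you substitute the closed form \eqref{eq:closed_form} into the $\tilde q_i$-weighted sum, use $\sum_j\tilde q_i^j=1$, and then note that quasi-linear payoffs carry one transfer error in a participation payoff and two in a report comparison. Your side remark is also accurate: the identity uses only $R>0$ and the simplex property of $\tilde q_i$, not the wedge inequalities or the off-diagonal entries of $\tilde q_i$.
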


\begin{proof}
Substitution in \eqref{eq:closed_form} gives
\[
\begin{aligned}
  \sum_j\tilde q_i^j x_i^j(\omega)
  &=x_i(\omega)+\frac{s(\omega)}{R}
    \sum_j\tilde q_i^j\bigl(\mathbf{1}\{j=i\}-q_i^{i\ast}\bigr)\\
  &=x_i(\omega)+\frac{\tilde q_i^i-q_i^{i\ast}}{R}s(\omega).
\end{aligned}
\]
The payoff bounds follow because quasi-linear utility subtracts transfers. A report-payoff difference contains two transfer errors, while participation contains one.
\end{proof}

The support-only family's sensitivity is especially transparent. Under common-contamination weights
\[
  q_i^\ast(\varepsilon)=(1-\varepsilon)k+\varepsilon e_i,
\]
evaluating the endpoint family \eqref{eq:frequency_unrestricted_repair} gives
\begin{equation}\label{eq:endpoint_calibration_error}
  \sum_{j\in\N}q_i^{j\ast}(\varepsilon)\widehat x_i^j(\omega)-x_i(\omega)
  =
  -\frac{(1-\varepsilon)(1-k^i)}{n-1}\,s(\omega).
\end{equation}
With a full-support reference and positive surplus, the error vanishes at the support-only endpoint but not at any narrower common exposure. Because the surplus varies with reports, this is generally an incentive-relevant error rather than a harmless constant. Thus the endpoint is a distinct support regime, not a worst-case plug-in value for an unknown interior exposure.

For the closed form, off-diagonal misspecification enters only through the simplex constraint and the resulting mass on the adverse label. Proposition~\ref{prop:no-uniform-calibration} gives the exact two-exposure impossibility in the main text. Proposition~\ref{prop:calibration_sensitivity} and Proposition~\ref{prop:eta_robust} quantify the corresponding approximate errors and feasibility boundary.

\subsection{Bayesian incentive and interim participation}\label{app:bayesian_extension}

\BayesianExtensionResult

\HomogeneousAndClassificationResults

\subsection{Details for the reference-set classification}\label{app:wedge_details}

\begin{proof}[Proof of Lemma~\ref{lem:wedgechar}]
By definition, $K^+(q^\ast)=\{k\in\Delta(\N):M^j\le k^j\le q_j^{j\ast}\ \forall j\}$. Membership implies $1=\sum_jk^j\ge\sum_jM^j$. Conversely, suppose $\sum_jM^j\le1$. Since row $i$ sums to one,
\[
  \sum_jM^j\ge\sum_{j\neq i}q_i^{j\ast}+M^i=(1-q_i^{i\ast})+M^i,
\]
so $M^i\le q_i^{i\ast}$ for every $i$. Moreover, $q_j^{j\ast}\ge M^j\ge q_1^{j\ast}$ for $j\neq1$, hence $\sum_jq_j^{j\ast}\ge1$. The box $\prod_j[M^j,q_j^{j\ast}]$ therefore meets the hyperplane whose coordinates sum to one, and every intersection point lies in $K^+(q^\ast)$. The argument for $K^-(q^\ast)=\{k\in\Delta(\N):q_j^{j\ast}\le k^j\le m^j\ \forall j\}$ is symmetric. Necessity is $1\le\sum_jm^j$. For sufficiency, the same row-sum argument gives $m^i\ge q_i^{i\ast}$ for every $i$. In addition, $q_j^{j\ast}\le m^j\le q_1^{j\ast}$ for $j\neq1$, while $q_1^{1\ast}$ is the first coordinate of row~1; hence $\sum_jq_j^{j\ast}\le\sum_jq_1^{j\ast}=1$. The box $\prod_j[q_j^{j\ast},m^j]$ therefore meets the unit-sum hyperplane.
\end{proof}

\begin{corollary}[Case (iii) collapses; case (iv) requires crossed weights]\label{cor:crossed}
If $K^+(q^\ast)$ and $K^-(q^\ast)$ are both nonempty, then $M^j=m^j$ for every $j$. Consequently, both sets can be empty only if two opponents disagree about the weight of some label; in particular, this is impossible when $n=2$.
\end{corollary}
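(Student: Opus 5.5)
We prove Corollary~\ref{cor:crossed} in two steps, using only the box descriptions of the reference sets from the proof of Lemma~\ref{lem:wedgechar}: $K^+(q^\ast)=\{k\in\Delta(\N):M^j\le k^j\le q_j^{j\ast}\ \forall j\}$ and $K^-(q^\ast)=\{k\in\Delta(\N):q_j^{j\ast}\le k^j\le m^j\ \forall j\}$.

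\textbf{First claim.} Suppose both sets are nonempty, and pick $k^+\in K^+(q^\ast)$ and $k^-\in K^-(q^\ast)$. For every label $j$ we then have the chain
\[
m^j\le M^j\le k^{+j}\le q_j^{j\ast}\le k^{-j}\le m^j .
\]
The first inequality holds because a minimum over the same index set never exceeds the maximum. The chain starts and ends at $m^j$, so every inequality in it is an equality. In particular $M^j=m^j$ for every $j$, and also $k^+=k^-$ with each coordinate equal to $q_j^{j\ast}$. No summation argument is needed, only the coordinatewise box constraints.

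\textbf{Second claim.} We prove the contrapositive: if no two opponents disagree about any label, then at least one of the two sets is nonempty. "No disagreement" means that for each $j$ all agents $i\neq j$ assign the same weight to label $j$, so $M^j=m^j=:c^j$. By Lemma~\ref{lem:wedgechar}, $K^+(q^\ast)\ne\varnothing$ iff $\sum_j c^j\le1$, and $K^-(q^\ast)\ne\varnothing$ iff $\sum_j c^j\ge1$. One of these inequalities always holds, so the sets cannot both be empty. Hence both being empty forces $M^j>m^j$ for some $j$, which means two opponents of $j$ put different weights on label $j$.

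\textbf{The case $n=2$.} Each label $j$ has exactly one opponent, so $M^j=m^j$ holds trivially, and the previous paragraph rules out both sets being empty.

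The only delicate point is applying Lemma~\ref{lem:wedgechar} correctly when $M^j=m^j$. Its two conditions, $\sum_j M^j\le 1$ and $\sum_j m^j\ge 1$, are then stated in terms of the same sum $\sum_j c^j$, so they are complementary and together exhaust all cases. The rest is bookkeeping.
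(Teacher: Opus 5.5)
Your proof is correct, but your argument for the first claim differs from the paper's. The paper works with sums: nonemptiness of both sets gives $\sum_j M^j\le 1\le\sum_j m^j$ through Lemma~\ref{lem:wedgechar}, and $M^j\ge m^j$ then forces every inequality to bind. You instead squeeze each coordinate separately through the diagonal weight, $m^j\le M^j\le k^{+j}\le q_j^{j\ast}\le k^{-j}\le m^j$. This uses only the box constraints, needs no summation, and gives a stronger conclusion. Both reference sets collapse to the single point $(q_j^{j\ast})_j$. Moreover, $q_i^{j\ast}=q_j^{j\ast}$ for every $i$, so all agents share one preserving distribution. That ties case~(iii) of Theorem~\ref{thm:general_feasibility} directly to Proposition~\ref{prop:common_preserving_distribution}, which explains why that case forces $s(\omega)=0$. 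The paper's sum argument reuses the scalar criteria it has just proved but does not record this finer conclusion. For the second claim and the case $n=2$ you follow the paper's route: when $M^j=m^j$ for every $j$, the two scalar criteria of Lemma~\ref{lem:wedgechar} concern the same sum and are complementary. You also state the contrapositive more explicitly than the paper's one-line version does.
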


\begin{proof}
Both sets nonempty gives $\sum_jM^j\le1\le\sum_jm^j$. Since $M^j\ge m^j$ componentwise, every inequality binds. With two agents, each label has a single opponent and therefore $M^j=m^j$ automatically; Lemma~\ref{lem:wedgechar} then makes at least one reference set nonempty.
\end{proof}

\subsection{Crossed preserving weights: the betting branch}\label{sec:crossed}

Case (ii) of Theorem~\ref{thm:general_feasibility} is the sign-reversed branch for statewise deficits. Case (iv) is the only branch with no aggregate-sign restriction and hence the only one that can repair both surplus and deficit states. Corollary~\ref{cor:crossed} shows that it requires at least three agents and genuinely crossed preserving weights.

\begin{example}[A cyclic crossed profile repairs a pure deficit]\label{ex:cyclic}
Let $n=3$ and
\[
q_1^\ast=(0.2,0.7,0.1),\qquad
q_2^\ast=(0.1,0.2,0.7),\qquad
q_3^\ast=(0.7,0.1,0.2).
\]
Then $M^j=0.7$ and $m^j=0.1$, so both reference sets are empty and case (iv) applies. For the uniform deficit rule $x=(-d,-d,-d)$ with $d>0$, define, with indices modulo three,
\[
x_i^{\,i}=x_i^{\,i-1}=\tfrac{10}{11}d,
\qquad
x_i^{\,i+1}=-\tfrac{20}{11}d.
\]
Every column sums to zero. For each agent, preservation follows from
\[
0.2\cdot\tfrac{10}{11}d
+0.7\cdot\bigl(-\tfrac{20}{11}d\bigr)
+0.1\cdot\tfrac{10}{11}d=-d,
\]
and dominance holds because $x_i^{\,i+1}<x_i^{\,i-1}=x_i^{\,i}$.
\end{example}

Every realized rule balances, yet every agent's evaluation registers a receipt. The construction is therefore a money pump built from disagreement, in the spirit of belief-based extraction \citep{CremerMcLean1985,CremerMcLean1988}.

\begin{remark}[Where crossed profiles can come from]\label{rem:crossed_foundation}
The common-reference foundation cannot generate case (iv), because its reference lies in $K^+(q^\ast)$. Crossed profiles require agent-specific primitive references. Let $r_1=(0.2,0.7,0.1)$, $r_2=(0.1,0.2,0.7)$, and $r_3=(0.7,0.1,0.2)$. Replacing $k$ by $r_i$ in \eqref{eq:Deltaqi}, with common exposure $\varepsilon$, gives $\tilde q_i^{j\ast}=(1-\varepsilon)r_i^j$ for $j\neq i$ and $\tilde q_i^{i\ast}=\varepsilon+(1-\varepsilon)r_i^i$. Thus $M^j=0.7(1-\varepsilon)$ and $m^j=0.1(1-\varepsilon)$, so case (iv) persists for $\varepsilon\in[0,11/21)$. This branch relies on non-common references over the designer's own device and should delimit, rather than extend, the common-reference economic interpretation.
\end{remark}

% ========================================================================

\section{Additional price analysis}\label{app:price_supp}

\FullCappedFrontier

\HeterogeneousExposureResults

\subsection{Spread and approximate preservation}\label{app:spread_robustness}

\SecondarySpreadAndRobustness

\subsection{Endpoint accounting and complete insurance}

\begin{proposition}[General endpoint accounting identity]\label{prop:endpoint_accounting_general}
Fix a state and suppose the preserving weight is $q_i^\ast=e_i$ for every agent. For any BB transformation $\{z^j\}_{j\in\N}$ and any accounting vector $k\in\Delta(\N)$, define
\[
  \mathcal{R}_i^k:=x_i-\sum_{j\in\N}k^jz_i^j.
\]
Then
\[
  \mathcal{R}_i^k\ge0
  \quad\text{for every }i,
  \qquad
  \sum_{i\in\N}\mathcal{R}_i^k=s.
\]
For the canonical family \eqref{eq:frequency_unrestricted_repair},
\[
  \mathcal{R}_i^k=\frac{1-k^i}{n-1}s,
\]
and hence $\mathcal{R}_i^k=s/n$ for every agent under the uniform accounting vector.
\end{proposition}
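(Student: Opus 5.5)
With $q_i^\ast=e_i$, pointwise preservation pins down the adverse entry of each row: $z_i^i=x_i$ for every agent. Fixed-adverse-label dominance \eqref{eq:monotone} then gives $z_i^j\le z_i^i=x_i$ for every label $j$. The plan is to use these two facts to prove nonnegativity, exact balance to prove the aggregate identity, and direct substitution for the canonical formula.

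For nonnegativity, I would write $\mathcal{R}_i^k=\sum_j k^j(x_i-z_i^j)$, using $\sum_j k^j=1$. Each difference $x_i-z_i^j$ is nonnegative by dominance, and the weights $k^j$ are nonnegative, so $\mathcal{R}_i^k\ge0$ for every agent and every accounting vector $k\in\Delta(\N)$.

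For the aggregate identity, I would sum over agents and exchange the order of summation:
\[
  \sum_i\mathcal{R}_i^k=\sum_i x_i-\sum_j k^j\sum_i z_i^j=s,
\]
because every column sum vanishes under \eqref{eq:bb}. This step uses neither dominance nor preservation.

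For the canonical family \eqref{eq:frequency_unrestricted_repair}, Corollary~\ref{cor:frequency_unrestricted} already shows that it is a BB transformation with these endpoint weights. Substituting $\widehat x_i^j=x_i-\mathbf1\{j\neq i\}\,s/(n-1)$ gives
\[
  \mathcal{R}_i^k=\frac{s}{n-1}\sum_{j\neq i}k^j=\frac{(1-k^i)\,s}{n-1}.
\]
With the uniform vector $k^i=1/n$, this becomes $\frac{(n-1)/n}{n-1}\,s=s/n$.

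None of the steps is a real obstacle. The only point that needs care is noting that dominance, not preservation alone, is what delivers nonnegativity for an arbitrary accounting vector $k$.
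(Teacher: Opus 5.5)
Your proposal is correct and follows essentially the same route as the paper's proof. You pin down $z_i^i=x_i$ by preservation at $e_i$, get nonnegativity from dominance, obtain the aggregate identity from exact balance alone, and verify the canonical formula by direct substitution.
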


\begin{proof}
Preservation at $e_i$ gives $z_i^i=x_i$, and dominance gives $z_i^j\le x_i$ for every $j$. Hence $\mathcal{R}_i^k\ge0$. Exact balance yields
\[
  \sum_i\mathcal{R}_i^k
  =
  \sum_i x_i-\sum_jk^j\sum_i z_i^j
  =s.
\]
For the canonical family,
\[
  \sum_jk^j\widehat x_i^j
  =
  k^ix_i+(1-k^i)
  \left(x_i-\frac{s}{n-1}\right),
\]
which gives the individual formula. This direct argument is needed because the odds ratio in Proposition~\ref{prop:reference_compensation} is undefined at the endpoint.
\end{proof}

\begin{remark}[Complete label-contingent insurance]\label{rem:complete_insurance}
Suppose the grand coalition can commit before the label is realized to an enforceable side contract contingent on the report profile and the label, without side-transfer caps or transaction costs. For any statewise balanced pure payment rule $c(r)$, the receipt
\[
  z_i^j(r):=x_i^j(r)-c_i(r)
\]
is zero sum and makes the net payment equal to the label-independent rule $c_i(r)$. Conversely, any zero-sum side contract that makes net payments label independent produces a statewise balanced pure rule.

The canonical reference-zero hedge sets $c_i(r):=\sum_jk^jx_i^j(r)$. Proposition~\ref{prop:reference_compensation} gives $c_i(r)\le x_i(r)$ and $\sum_i[x_i(r)-c_i(r)]=s(r)$. This comparison is attractive at a fixed report profile, but it does not implement the original pure rule: if anticipated when reporting, the effective rule is $c$, whose incentive properties are not inherited from $x$. Requiring $c=x$ is compatible with balance only at zero-surplus states. This is an insurance-feasibility observation under unrestricted multilateral contracting, not an equilibrium prediction.

\end{remark}

\subsection{Frontier comparisons}

\begin{remark}[The frontier under approximate preservation]\label{rem:eta_frontier}
Let $m^\ast_\eta(\omega)$ replace exact preservation in $m^\ast(\omega)$ by the tolerance in Proposition~\ref{prop:eta_robust}. On the equal-split domain $\varepsilon\le\hat\varepsilon(n)$, at a single-winner state and for $\eta\in[0,t/n)$,
\[
  \frac{t-n\eta}{n\varepsilon}
  \le m^\ast_\eta(\omega)
  \le\frac{t-\eta}{n\varepsilon}.
\]
The lower bound follows by applying Lemma~\ref{lem:trace} to the exactly preserved vector after the errors are absorbed; the upper bound applies the equal-split construction to $(t-\eta)e_1$. Both converge to $t/(n\varepsilon)$ as $\eta\to0$.
\end{remark}

\begin{remark}[Relation to the general bound and the two-agent benchmark]\label{rem:frontier_comparison}
Here $R=(n-1)\varepsilon$. On the equal-split branch, the exact cap $t/(n\varepsilon)$ is $2(n-1)/n$ times the general floor $s/(2R)$ from Proposition~\ref{prop:general_transfer_lb}; the difference is the factor lost when a two-sided spread is bounded by two capped transfers. In the two-agent, two-label single-winner benchmark, Proposition~\ref{prop:blowup_lb} gives the sharper rigid cap $(1+\varepsilon)t/(2\varepsilon)$. Under common contamination, enriching the finite label set cannot improve the equal-split branch: Proposition~\ref{prop:arbitrary_menu_identity} gives the same lower bound, which the $n$-label construction attains. Elsewhere the exact frontier also reflects the winner's preservation floor and, with three agents, the allocation bound in \eqref{eq:full_exact_frontier}.
\end{remark}

\begin{remark}[Two minimax criteria]\label{rem:two_criteria}
The closed form \eqref{eq:closed_form} minimizes the maximum adverse-label spread, whereas Theorem~\ref{thm:frontier} minimizes the largest absolute transfer. The criteria conflict. In Example~\ref{ex:vcg_vickrey3} at $\varepsilon=1/10$ and $s=6$, the closed form has minimal spread $30$ but cap $24$, while the equal-split family has optimal cap $20$ but spread $36\tfrac23$. Limited liability therefore favors the latter, while dispersion minimization favors the former. Remark~\ref{rem:capped_position} extends the comparison to the full cap--imbalance frontier.
\end{remark}

\begin{remark}[The three-agent capped value is not always surplus-only]\label{rem:three_not_surplus_only}
Let $n=3$, fix $t>0$ and $\varepsilon\in(0,1/7]$, and set $\bar m=s=t$. The single-winner capped value is $t(1-\varepsilon)/(1+\varepsilon)$. For the symmetric profile $x=(t/3,t/3,t/3)$, the matrix with diagonal $t$ and every off-diagonal entry $-\varepsilon t/(1-\varepsilon)$ satisfies preservation, dominance, and the cap and attains $t(1-3\varepsilon)/(1-\varepsilon)$. The values differ by $4\varepsilon^2t/(1-\varepsilon^2)>0$, so below the domain of Corollary~\ref{cor:capped_anyx} the distribution of surplus can matter.
\end{remark}

\subsection{Linear-programming form and duality}

\begin{remark}[Dual formulation]\label{rem:lp}
For fixed $\omega$, $m^\ast(\omega)$ is the value of a finite linear program. Whenever the primal is feasible, standard duality gives
\[
  m^\ast(\omega)=\max\Bigl\{\textstyle\sum_i\lambda_i x_i(\omega):
  \varphi\ge0,\ \textstyle\sum_{i,j}|\Psi_{ij}|\le1\Bigr\},
\]
where $\Psi_{ij}=\lambda_iq_i^{j\ast}+\mu_j-\varphi_{ij}$ for $j\neq i$ and $\Psi_{ii}=\lambda_iq_i^{i\ast}+\mu_i+\sum_{\ell\neq i}\varphi_{i\ell}$. Here $\lambda$, $\mu$, and $\varphi$ are the multipliers on preservation, balance, and dominance. Stationarity in the capped transfers identifies $\Psi_{ij}$ with the difference of the multipliers on the two cap constraints, while stationarity in the cap makes their total mass one.

For any wedge witness $k$, the flat dual point $\lambda_i=1/(2R)$, $\mu_j=-k^j/(2R)$, and $\varphi=0$ is feasible because
\[
  \sum_{i,j}|\Psi_{ij}|=\frac{1}{2R}
  \left[\sum_i(q_i^{i\ast}-k^i)
  +\sum_i\sum_{j\neq i}(k^j-q_i^{j\ast})\right]=1.
\]
Its value is $s(\omega)/(2R)$, the bound in Proposition~\ref{prop:general_transfer_lb}. Moreover, $m^\ast(\omega)=+\infty$ exactly when the primal is infeasible, which is the Farkas system in Appendix~\ref{app:general_feasibility}. Theorem~\ref{thm:general_feasibility} therefore describes the finiteness pattern of the frontier.
\end{remark}

\subsection{The scaled closed form}

\begin{proposition}[Scaled closed form]\label{prop:scaled}
Suppose $R=\sum_iq_i^{i\ast}-1>0$, and fix a state $\omega$ with $s(\omega)\ge0$. For every $\gamma\in[0,1]$, the family \eqref{eq:scaled} satisfies \eqref{eq:monotone} and \eqref{eq:preserve}, and
\[
  T_j(\omega;\gamma)=(1-\gamma)s(\omega)
  \qquad\text{for every }j.
\]
Consequently, for every $\bar m\ge\max_i|x_i(\omega)|$,
\[
  \delta^\ast_{\bar m}(\omega)
  \le\bigl(1-\gamma_{\bar m}(\omega)\bigr)s(\omega),
  \qquad
  \gamma_{\bar m}(\omega):=\max\Bigl\{\gamma\in[0,1]:
  \max_{i,j}|x_i^j(\omega;\gamma)|\le\bar m\Bigr\}.
\]
\end{proposition}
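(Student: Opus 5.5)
Since $R>0$ and $s(\omega)\ge0$, every property of the scaled family can be checked directly. The key observation is that \eqref{eq:scaled} is an affine perturbation of $x$ along the fixed direction used in the closed form \eqref{eq:closed_form}. The closed form is the case $\gamma=1$, and the constant family $x^j\equiv x$ is the case $\gamma=0$. The plan is to verify dominance, preservation, and the column sums by substitution, reusing the computations in the proof of Theorem~\ref{thm:bbtrans}. Those computations are linear in the coefficient multiplying $s/R$, so scaling by $\gamma$ passes through each one.

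Fix $\omega$ and suppress it. For dominance, observe that for $j\neq i$ the spread $x_i^i-x_i^j$ equals $\gamma s/R\ge0$, because the indicator contributes $1$ on the diagonal and $0$ off it. For preservation, compute
\[
  \sum_j q_i^{j\ast}x_i^j=x_i+\frac{\gamma s}{R}\Bigl(\sum_jq_i^{j\ast}\mathbf 1\{j=i\}-q_i^{i\ast}\sum_jq_i^{j\ast}\Bigr)=x_i,
\]
using only that row $i$ of $q^\ast$ sums to one. For the column sum, note that $\sum_i(\mathbf 1\{j=i\}-q_i^{i\ast})=1-\sum_iq_i^{i\ast}=-R$. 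Hence
\[
  T_j=s+\frac{\gamma s}{R}(-R)=(1-\gamma)s
\]
for every $j$.

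For the consequence, I first check that $\gamma_{\bar m}(\omega)$ is well defined. The map $\gamma\mapsto\max_{i,j}|x_i^j(\omega;\gamma)|$ is continuous; it is a maximum of absolute values of affine functions, hence convex. At $\gamma=0$ its value is $\max_i|x_i|\le\bar m$. So the set of admissible $\gamma\in[0,1]$ is a nonempty closed interval containing $0$, and its maximum exists. The family at $\gamma_{\bar m}$ satisfies \eqref{eq:monotone} and \eqref{eq:preserve} and respects the cap, so it belongs to the constraint set in \eqref{eq:deltastar}. Its worst-rule absolute imbalance is $\max_j|T_j|=(1-\gamma_{\bar m})s$, using $s\ge0$ and $\gamma_{\bar m}\le1$. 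Since $\delta^\ast_{\bar m}(\omega)$ is an infimum over that set, the stated bound follows.

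I do not expect a real obstacle. The only points needing care are (a) that $s\ge0$ is what yields dominance and a nonnegative imbalance, and (b) that the maximizing $\gamma$ exists, which follows from closedness of the sublevel set of the continuous cap function.
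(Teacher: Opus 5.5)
Your proposal is correct and follows essentially the paper's own argument. You verify dominance (spread $\gamma s/R$), preservation (the adjustment has zero $q_i^\ast$-mean) and the column sums $(1-\gamma)s$ by direct substitution, and then obtain $\gamma_{\bar m}$ because the cap function is a continuous maximum of absolute affine functions and the $\gamma=0$ family already fits the cap. Your convexity remark spells out why the admissible set of $\gamma$ is an interval, which the paper simply asserts.
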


\begin{proof}
The adjustment in \eqref{eq:scaled} has zero $q_i^\ast$-mean, and
$x_i^i(\omega;\gamma)-x_i^j(\omega;\gamma)=\gamma s(\omega)/R\ge0$. Summing over agents gives
\[
  T_j(\omega;\gamma)=s(\omega)+\gamma\frac{1-\sum_iq_i^{i\ast}}{R}s(\omega)
  =(1-\gamma)s(\omega).
\]
The maximum absolute entry is a continuous maximum of affine functions of $\gamma$. At zero it is no larger than $\bar m$, so the feasible set of scaling factors is a nonempty closed interval and $\gamma_{\bar m}$ is attained. Its family is feasible in \eqref{eq:deltastar}, proving the bound.
\end{proof}

\begin{remark}[Where the scaled closed form sits]\label{rem:capped_position}
At a single-winner state $x=(t,0,\ldots,0)$ with $t>0$, uniform reference, and common exposure, the scaled family's largest entry is $t\bigl(1+\gamma(1-\varepsilon)/(n\varepsilon)\bigr)$. Thus, over its relevant cap range, it delivers
\[
  \max_j|T_j|
  =\frac{t\bigl(1+(n-1)\varepsilon\bigr)-n\varepsilon\bar m}{1-\varepsilon}.
\]
On the equal-split domain $\varepsilon\le\hat\varepsilon(n)$, the comparison is especially simple. For $n\ge4$, the scaled value exceeds the exact frontier by $(n-1)\varepsilon t/(1-\varepsilon)$ throughout the binding interval. For $n=3$ and $t>0$, the exact optimality gap is
\[
\begin{cases}
\dfrac{\varepsilon\bigl[(1+2\varepsilon)t+(1-4\varepsilon)\bar m\bigr]}{1-\varepsilon^2},
  & t\le\bar m\le\dfrac{t}{1-4\varepsilon},\\[9pt]
\dfrac{2\varepsilon t}{1-\varepsilon},
  & \dfrac{t}{1-4\varepsilon}<\bar m\le\dfrac{t}{3\varepsilon},\\[9pt]
\dfrac{(1+2\varepsilon)t-3\varepsilon\bar m}{1-\varepsilon},
  & \dfrac{t}{3\varepsilon}<\bar m
    \le\dfrac{(1+2\varepsilon)t}{3\varepsilon}.
\end{cases}
\]
The expressions agree at both kinks, and the gap is zero above the last displayed cap. Outside the equal-split domain, Theorem~\ref{thm:capped_frontier} still gives the exact comparator, but the displayed three-piece gap formula no longer applies. The scaled family reaches exact balance only above the minimum exact-balance cap. Hence it is an implementable benchmark, not a cap-optimal repair.
\end{remark}

\begin{remark}[One value surface]\label{rem:capped_lp}
For fixed $\omega$, the feasible cap--imbalance pairs form a closed convex set. The minimal cap is its zero-imbalance section, while the capped repair value gives its fixed-cap sections. Theorem~\ref{thm:capped_frontier} and Corollary~\ref{cor:frequency_unrestricted} trace these sections at uniform-reference single-winner states. More generally, the flat dual point in Remark~\ref{rem:lp} gives $\delta^\ast_{\bar m}(\omega)\ge[\,s(\omega)-2R\bar m\,]_+$. Fixed-cap sections become finite once the cap contains the original transfers and reach zero exactly in the feasible branches of Theorem~\ref{thm:general_feasibility}.
\end{remark}

\section{Omitted proofs}\label{app:proofs}

\subsection{Proof of Proposition~\ref{prop:eta_robust}}\label{app:eta_robust}

\begin{proof}
\emph{Necessity.}
Fix $\omega$ and write
\[
  \delta_i
  :=
  x_i(\omega)-\sum_jq_i^{j\ast}x_i^j(\omega)
  \in[-\eta,\eta].
\]
The family exactly preserves the vector $\bigl(x_i(\omega)-\delta_i\bigr)_i$. Applying the identity from the proof of Proposition~\ref{prop:closedform_optimal} to that vector gives
\[
  s(\omega)-\sum_{i\in\N}\delta_i
  =
  \sum_{i\in\N}\sum_{j\neq i}
  \bigl(k^j-q_i^{j\ast}\bigr)
  \bigl(x_i^i(\omega)-x_i^j(\omega)\bigr)
  \ge0.
\]
The reference wedge makes the coefficients nonnegative and dominance makes the spreads nonnegative. Hence $s(\omega)\ge\sum_i\delta_i\ge-n\eta$.

\emph{Sufficiency.}
Define $\tilde x_i(\omega):=x_i(\omega)+\max\{0,-s(\omega)\}/n$. The rule $\tilde x$ is statewise non-deficit, so Theorem~\ref{thm:bbtrans} provides a BB transformation of it. That family satisfies dominance and exact balance, and its preservation error relative to $x$ is exactly $\max\{0,-s(\omega)\}/n\le\eta$ at every state.
\end{proof}

\subsection{Proof of Corollary~\ref{cor:alpha_repair}}\label{app:alpha_repair}

\begin{proof}
For $e_i$, contamination gives
\[
  a_i=\rho_i^\alpha(e_i)=(1-\varepsilon)k^i+\alpha\varepsilon,
  \qquad
  A_\rho-1=\varepsilon(n\alpha-1).
\]
Thus the pointwise construction and argument of Proposition~\ref{prop:rho_repair} prove sufficiency and give the displayed family.

For necessity and the spread bound, consider any preserving, dominant, exactly balanced family. Put
\[
  d_i:=x_i^i,\qquad
  m_i:=\min_jx_i^j,\qquad
  h_i:=d_i-m_i,\qquad
  y_i^j:=x_i^j-m_i.
\]
Dominance implies $h_i\ge0$, $y_i^j\ge0$, and $y_i^i=h_i$. Exact column balance gives, for every label $j$,
\[
  0=\sum_i x_i^j=\sum_i m_i+\sum_i y_i^j.
\]
Averaging these identities across labels yields
\[
  \sum_i m_i=-\frac1n\sum_{i,j}y_i^j.
\]
Weighting the column-balance identities by $k$ instead yields
\[
  \sum_i\sum_jk^jx_i^j
  =
  \sum_jk^j\sum_i x_i^j
  =0.
\]
Under common contamination,
\[
  \rho_i^\alpha\bigl((x_i^j)_j\bigr)
  =
  (1-\varepsilon)\sum_jk^jx_i^j
  +\varepsilon\bigl[\alpha d_i+(1-\alpha)m_i\bigr].
\]
Preservation and the weighted balance identity therefore give
\[
  \frac{s}{\varepsilon}
  =
  \alpha\sum_i h_i+\sum_i m_i
  =
  \alpha\sum_i h_i-\frac1n\sum_{i,j}y_i^j.
\]
Since $\sum_{i,j}y_i^j\ge\sum_i h_i$,
\[
  \frac{s}{\varepsilon}
  \le
  \left(\alpha-\frac1n\right)\sum_i h_i.
\]
If $\alpha\le1/n$, the right-hand side is nonpositive, contradicting $s>0$. Thus $\alpha>1/n$. Only then, using $\sum_i h_i\le n\max_i h_i$, do we obtain
\[
  \frac{s}{\varepsilon}
  \le
  (n\alpha-1)\max_i h_i.
\]
Rearranging gives the stated spread bound. The construction has $h_i=s/[\varepsilon(n\alpha-1)]$ for every agent and therefore attains equality.
\end{proof}

\subsection{Proof of Theorem~\ref{thm:general_feasibility}}\label{app:general_feasibility}

\begin{proof}
Fix a state $\omega$ and suppress $\omega$ in notation.
We seek numbers $\{x_i^j\}_{i,j\in\N}$ satisfying:
(i) $x_i^j-x_i^i\le0$ for all $i$ and $j\neq i$;
(ii) $\sum_i x_i^j=0$ for all $j$;
(iii) $\sum_j q_i^{j\ast}x_i^j=x_i$ for all $i$.
By a standard linear alternative (the system $\{Az=b,\ Cz\le0\}$ is feasible if and only if every pair $(y,\phi)$ with $\phi\ge0$ and $y^{\top}A+\phi^{\top}C=0$ satisfies $y^{\top}b\ge0$), this system is feasible if and only if every collection of multipliers
$\alpha\in\mathbb{R}^n$, $\beta\in\mathbb{R}^n$, and
$\phi\in\mathbb{R}_+^{n(n-1)}$ satisfying
\begin{align*}
  \alpha_i q_i^{j\ast}+\beta_j+\phi_{i,j}&=0
  &&\text{for all } i,\; j\neq i,\\
  \alpha_i q_i^{i\ast}+\beta_i-\sum_{\ell\neq i}\phi_{i,\ell}&=0
  &&\text{for all } i
\end{align*}
also satisfies $\sum_i\alpha_i x_i\ge0$.

Because $\phi_{i,j}\ge0$ enters the first set of equations additively, dual feasibility of $(\alpha,\beta)$ requires
\[
  \alpha_i q_i^{j\ast}+\beta_j\le0
  \quad\text{for all } i \text{ and } j\neq i,
\]
and, given any such $(\alpha,\beta)$, the multipliers $\phi_{i,j}=-(\alpha_i q_i^{j\ast}+\beta_j)\ge0$ are admissible provided the second set of equations holds. Substituting
$\phi_{i,j}=-(\alpha_i q_i^{j\ast}+\beta_j)$
into the second set of equations gives
\[
  \alpha_i+\sum_{\ell\in\N}\beta_\ell=0
  \quad\text{for all }i.
\]
Hence every dual-feasible vector has $\alpha_i\equiv\alpha$ for some scalar $\alpha$.

Suppose $\alpha>0$ and define $k^j:=-\beta_j/\alpha$. Then $\sum_j k^j=1$, and dividing the inequality display above by $\alpha$ gives
\[
  k^j\;\ge\;q_i^{j\ast}\quad\text{for all }i\neq j,
  \qquad\text{equivalently}\qquad
  k^j\;\ge\;M^j=\max_{i\neq j}q_i^{j\ast}\quad\text{for all }j.
\]
Note that the upper inequality $k^j\le q_j^{j\ast}$ is \emph{not} implied by dual feasibility: the diagonal equations are exhausted by the substitution above, so a positive dual scalar exists if and only if some $k$ satisfies $\sum_jk^j=1$ and $k^j\ge M^j$ for every $j$. Since $M^j\ge0$, such a vector exists if and only if $\sum_jM^j\le1$, and by Lemma~\ref{lem:wedgechar} this holds if and only if $K^+(q^\ast)\neq\varnothing$. Thus positive dual scalars exist if and only if $K^+(q^\ast)$ is nonempty.
Conversely, any $k\in K^+(q^\ast)$ and any $\alpha>0$ generate a dual-feasible vector directly, by setting
$\beta_j=-\alpha k^j$ and
$\phi_{i,j}=\alpha(k^j-q_i^{j\ast})\ge0$.

Similarly, if $\alpha<0$ and $k^j:=-\beta_j/\alpha$, dividing the inequality display by $\alpha<0$ reverses it:
\[
  k^j\;\le\;m^j=\min_{i\neq j}q_i^{j\ast}
  \quad\text{for all }j,
\]
together with $\sum_jk^j=1$. The intermediate vector need only satisfy these two conditions and may fall outside the simplex; such a vector exists if and only if $\sum_jm^j\ge1$, and Lemma~\ref{lem:wedgechar} converts this scalar condition into the existence of a genuine simplex vector in $K^-(q^\ast)$. Thus negative dual scalars exist if and only if $K^-(q^\ast)$ is nonempty; conversely any $k\in K^-(q^\ast)$ and $\alpha<0$ generate a dual-feasible vector by $\beta_j=-\alpha k^j$ and $\phi_{i,j}=-\alpha(q_i^{j\ast}-k^j)\ge0$.
In the case $\alpha=0$, the first set of equations forces $\beta_j=-\phi_{i,j}\le0$ for every $j$, while the second forces $\beta_i=\sum_{\ell\neq i}\phi_{i,\ell}\ge0$; hence $\beta=0$ and $\phi=0$, and the zero multiplier imposes no restriction on $x$.

The dual feasibility condition therefore reduces to the following scalar requirement:
$\alpha s\ge0$ for every $\alpha>0$ when $K^+$ is nonempty, and for every $\alpha<0$ when $K^-$ is nonempty.
This is equivalent to the four cases stated in the theorem:
if only $K^+$ is nonempty, $s\ge0$; if only $K^-$ is nonempty, $s\le0$; if both are nonempty, $s=0$; and if neither is nonempty, no nonzero aggregate multiplier exists and there is no aggregate-sign restriction.
Because the linear system separates across states, the statewise condition is necessary and sufficient for a BB transformation of the whole payment rule.
\end{proof}

\subsection{Proof of Proposition~\ref{prop:preserveIC}}\label{app:preserveIC}

\begin{proof}
Fix agent $i$. Proposition~\ref{prop:pointwise_incentive} gives equality \eqref{eq:pointwise_preserve_payoff} at every report profile. The same $q_i^\ast$ also minimizes every interim expectation. Indeed, for every $q_i\in\Delta q_i$ and every realization of the other agents' types, the integrand evaluated at $q_i^\ast$ is no larger than the integrand evaluated at $q_i$. Taking the conditional expectation preserves this inequality, so the minimum of the expectation is attained at $q_i^\ast$.

\textbf{Part (a): BIC.}
For every type $\theta_i$ and candidate report $r_i$, the preceding observation and \eqref{eq:preserve} give
\[
\begin{aligned}
&\min_{q_i \in \Delta q_i}
  \E_p\!\left[v_i(g(r_i,\theta_{-i}),\theta)
  -\sum_{j\in\N}q_i^j x_i^j(r_i,\theta_{-i})\mid\theta_i\right]\\
&\qquad=
  \E_p\!\left[v_i(g(r_i,\theta_{-i}),\theta)
  -x_i(r_i,\theta_{-i})\mid\theta_i\right].
\end{aligned}
\]
Thus every candidate report has exactly its pure-mechanism interim payoff, and the BIC constraints coincide.

\textbf{Part (b): IIR.}
\[
  \min_{q_i\in\Delta q_i}
  \E_p\!\left[v_i(g(\theta),\theta)-\sum_{j\in\N}q_i^j x_i^j(\theta)\mid\theta_i\right]
  =\E_p\!\left[v_i(g(\theta),\theta)-x_i(\theta)\mid\theta_i\right],
\]
so the IIR condition is identical under both mechanisms.
\end{proof}

\subsection{Proof of Lemma~\ref{lem:three_allocation}}\label{app:three_bound}

\begin{proof}
Put $q:=1-\varepsilon$ and $\kappa:=1+2\varepsilon$. Average any capped preserving family with the family obtained by simultaneously interchanging agents and labels $2$ and $3$. Preservation and the cap are retained, while the maximum absolute column imbalance cannot increase. The averaged matrix has the form
\[
  \begin{pmatrix}
    a&b&b\\
    c&d&f\\
    c&f&d
  \end{pmatrix},
\]
and preservation is equivalent to
\[
  \kappa a+2qb=3t,
  \qquad
  q(c+f)+\kappa d=0.
\]
Its column sums are $U:=a+2c$ and $V:=b+d+f$, with the latter repeated twice. Let $D:=\max\{|U|,|V|\}$. From $U\le D$,
\[
  c\le\frac{D-a}{2}.
\]
Loser preservation also gives
\[
  V=b-\frac{q}{\kappa}c+\frac{3\varepsilon}{\kappa}f.
\]
Using $f\ge-\bar m$, $a\le\bar m$, and $D\ge V$ therefore yields
\begin{align*}
  \left(1+\frac{q}{2\kappa}\right)D
  &\ge b+\frac{q}{2\kappa}a-\frac{3\varepsilon}{\kappa}\bar m\\
  &=\frac{3t}{2q}
    -\frac{3\varepsilon(2+\varepsilon)}{2q\kappa}a
    -\frac{3\varepsilon}{\kappa}\bar m\\
  &\ge
  \frac{3\bigl[\kappa t-\varepsilon(4-\varepsilon)\bar m\bigr]}
       {2q\kappa}.
\end{align*}
Since $1+q/(2\kappa)=3(1+\varepsilon)/(2\kappa)$, this is \eqref{eq:three_allocation_bound}. The original family obeys the same bound because symmetrization cannot increase the objective.
\end{proof}

\subsection{Proof of Theorem~\ref{thm:frontier}}\label{app:frontier}

\begin{proof}
Fix the state and suppress it; after relabeling, $x=(t,0,\dots,0)$ with $t>0$.

\emph{Universal lower bounds.}
Preservation for agent $1$ writes $t=\sum_jq_1^{j\ast}x_1^j$ as a convex combination of row $1$, so $\max_j|x_1^j|\ge t$. By Lemma~\ref{lem:trace}, exact balance forces $\sum_ix_i^i=t/\varepsilon$, so $\max_ix_i^i\ge t/(n\varepsilon)$. Neither step uses dominance.

\emph{The equal-split branch.}
Let $k^j\equiv1/n$. On the branch $n\ge4$ with $\varepsilon\le1/n$, or $n=3$ with $\varepsilon\le1/7$, use
\begin{align*}
  x_1^1&=\frac{t}{n\varepsilon}, &
  x_1^j&=\frac{\bigl[(n^2-n+1)\varepsilon-1\bigr]\,t}
                {n(n-1)\varepsilon(1-\varepsilon)}
                &&(j\neq1),\\[2pt]
  x_i^1&=-\frac{t}{n(n-1)\varepsilon}
                &&(i\neq1), &
  x_i^i&=\frac{t}{n\varepsilon}
                &&(i\neq1),\\[2pt]
  x_i^j&=-\frac{\bigl[(n^2-2n+2)\varepsilon+n-2\bigr]\,t}
                {n(n-1)(n-2)\varepsilon(1-\varepsilon)}
                &&(i\neq1,\ j\notin\{1,i\}).
\end{align*}
Put $A=D=t/(n\varepsilon)$ and denote the other three entries by $B,C,E$ in the order displayed above. Direct substitution gives row preservation and
\[
  A+(n-1)C=0,
  \qquad
  B+D+(n-2)E=0.
\]
Thus every column balances. Moreover,
\[
  B-C=\frac{t}{1-\varepsilon},
  \qquad
  B+D=
  \frac{\bigl[(n^2-2n+2)\varepsilon+n-2\bigr]t}
       {n(n-1)\varepsilon(1-\varepsilon)}>0,
\]
which verifies loser preservation and dominance once the cap inequalities hold. Clearly $|C|\le A$. One checks that $|B|\le A$ for $\varepsilon\le1/n$, and that $|E|\le A$ if and only if
\[
  \varepsilon\le
  \bar\varepsilon(n):=
  \frac{(n-2)^2}{2n^2-5n+4}.
\]
Here $\bar\varepsilon(3)=1/7$, while
\[
  n(n-2)^2-(2n^2-5n+4)=(n-1)^2(n-4)\ge0
\]
shows $\bar\varepsilon(n)\ge1/n$ for $n\ge4$. Hence the equal-split family is feasible with cap $t/(n\varepsilon)$ on exactly the branch stated in the theorem.

\emph{Four or more agents, thick exposure.}
For $n\ge4$ and $\varepsilon\ge1/n$, use the thick-exposure family
\begin{align*}
  x_1^j&=t &&(j\in\N),\\
  x_i^1&=-\frac{t}{n-1} &&(i\neq1),\\
  x_i^i&=\frac{(1-\varepsilon)t}{(n-1)\varepsilon} &&(i\neq1),\\
  x_i^j&=-\frac{[1+(n-2)\varepsilon]t}
                   {(n-1)(n-2)\varepsilon}
  &&(i\neq1,\ j\notin\{1,i\}).
\end{align*}
The winner's row is constant at $t$, so it is preserved. Column $1$ balances because
\[
  t+(n-1)\left(-\frac{t}{n-1}\right)=0,
\]
and for every other column the loser entries satisfy
\[
  \frac{(1-\varepsilon)t}{(n-1)\varepsilon}
  +(n-2)\left(
  -\frac{[1+(n-2)\varepsilon]t}
        {(n-1)(n-2)\varepsilon}\right)
  =-t.
\]
For a loser, the row sum is $-nt/(n-1)$, so
\[
  \frac{1-\varepsilon}{n}\sum_jx_i^j+\varepsilon x_i^i=0,
\]
which is preservation. The off-diagonal entries are negative and the diagonal is nonnegative, so dominance holds. The diagonal is at most $t$ if and only if $\varepsilon\ge1/n$; the last off-diagonal has magnitude at most $t$ if and only if $\varepsilon\ge1/(n-2)^2$. For $n\ge4$, the latter follows from $\varepsilon\ge1/n$. Thus the family attains cap $t$.

\emph{Three agents.}
Under exact balance, Lemma~\ref{lem:three_allocation} with $\max_j|T_j|=0$ implies
\[
  m^\ast\ge
  \frac{(1+2\varepsilon)t}{\varepsilon(4-\varepsilon)}=:m_3.
\]
Together with the diagonal bound this gives the maximum in \eqref{eq:full_exact_frontier}. The diagonal bound is the larger one exactly when $\varepsilon\le1/7$, where the equal-split construction attains it. If $\varepsilon\ge1/7$, use
\begin{equation}\label{eq:three_exact_thick}
\begin{pmatrix}
 m_3 & \dfrac{(7\varepsilon-1)t}{2\varepsilon(4-\varepsilon)}
     & \dfrac{(7\varepsilon-1)t}{2\varepsilon(4-\varepsilon)}\\[8pt]
 -m_3/2 & \dfrac{3(1-\varepsilon)t}{2\varepsilon(4-\varepsilon)} & -m_3\\[8pt]
 -m_3/2 & -m_3 & \dfrac{3(1-\varepsilon)t}{2\varepsilon(4-\varepsilon)}
\end{pmatrix}.
\end{equation}
Its distinct entries have
\[
  B=\frac{(7\varepsilon-1)t}{2\varepsilon(4-\varepsilon)},
  \qquad
  C=-\frac{m_3}{2},
  \qquad
  D=\frac{3(1-\varepsilon)t}{2\varepsilon(4-\varepsilon)}.
\]
Direct substitution verifies preservation and zero column sums. Also $0\le B\le m_3$ for $\varepsilon\in[1/7,1)$, and $0\le D\le m_3$ is equivalent to $\varepsilon\ge1/7$. Thus every entry lies in $[-m_3,m_3]$ and each diagonal dominates its row. This attains the remaining branch.
\end{proof}

\subsection{Proof of Theorem~\ref{thm:capped_frontier}}\label{app:capped}

\begin{proof}
Fix the state and suppress it. Preservation for the winner writes $t$ as a convex combination of her row, so no family fits a cap below $t$. For the universal lower bound, Lemma~\ref{lem:trace} gives
\[
  \sum_jk^jT_j
  =t-\frac{\varepsilon}{1-\varepsilon}\sum_i(x_i^i-x_i)
  \ge\frac{t-n\varepsilon\bar m}{1-\varepsilon}.
\]
Since $\max_j|T_j|\ge\sum_jk^jT_j$ and the objective is nonnegative, part~(i) follows.

\emph{Four or more agents.}
Suppose first that $\varepsilon\le1/n$ and $t\le\bar m\le t/(n\varepsilon)$. Put $\kappa:=1+(n-1)\varepsilon$ and use the capped equal-split family
\begin{align*}
  x_1^1&=\bar m, &
  x_1^j&=\frac{n\,t-\kappa\bar m}{(n-1)(1-\varepsilon)}
          &&(j\neq1),\\[2pt]
  x_i^1&=\frac{t-\kappa\bar m}{(n-1)(1-\varepsilon)}
          &&(i\neq1), &
  x_i^i&=\bar m
          &&(i\neq1),\\[2pt]
  x_i^j&=-\frac{t+(n-2)\kappa\bar m}
                  {(n-1)(n-2)(1-\varepsilon)}
          &&(i\neq1,\ j\notin\{1,i\}).
\end{align*}
Denote the five entries by $A,B,C,D,E$ in the order displayed, so $A=D=\bar m$.
Preservation is equivalent to
\[
  \frac{1-\varepsilon}{n}\sum_jx_i^j+\varepsilon x_i^i=x_i.
\]
The winner's row sum is $n(t-\varepsilon\bar m)/(1-\varepsilon)$, and every loser row sum is $-n\varepsilon\bar m/(1-\varepsilon)$, so preservation holds. Each column sum equals
\[
  \bar\delta:=\frac{t-n\varepsilon\bar m}{1-\varepsilon}\ge0.
\]

It remains to check the cap. The inequalities $B\le\bar m$ and $C\le0$ follow from $\bar m\ge t$. Because $n\ge4$ and $\varepsilon\le1/n\le1/4$, the lower bounds $B,C\ge-\bar m$ follow from $(n-1)(1-2\varepsilon)\ge1$. Finally,
\[
  |E|\le\bar m
  \quad\Longleftrightarrow\quad
  t\le c(n,\varepsilon)\bar m,
  \qquad
  c(n,\varepsilon):=(n-2)[(n-2)-2(n-1)\varepsilon].
\]
For $n\ge4$ and $\varepsilon\le1/n$,
\[
  c(n,\varepsilon)\ge1
  \quad\Longleftrightarrow\quad
  \varepsilon\le\frac{n-3}{2(n-2)},
\]
and the last inequality follows from $(n-1)(n-4)\ge0$. Hence the family is capped and dominant and attains the universal lower bound. At $\bar m=t/(n\varepsilon)$ it becomes the exactly balanced equal-split family, which remains feasible under every larger cap. If instead $\varepsilon\ge1/n$, the thick-exposure family is exactly balanced already under cap $t$; the positive part in part~(ii) is then zero for every $\bar m\ge t$.

\emph{Three agents.}
Put $q:=1-\varepsilon$ and $\kappa:=1+2\varepsilon$. Whenever $L_1$ is the positive maximum, use
\begin{equation}\label{eq:three_tight_family}
\begin{pmatrix}
  \bar m & B & B\\
  C & D & -\bar m\\
  C & -\bar m & D
\end{pmatrix},
\qquad
B:=\frac{3t-\kappa\bar m}{2q},\quad
D:=\frac{(3-2\varepsilon)\bar m-t}{2(1+\varepsilon)},\quad
C:=\bar m-\frac{\kappa}{q}D.
\end{equation}
Lemma~\ref{lem:three_allocation}, part~(i), and nonnegativity of the objective give the lower bound $\max\{0,L_1(\bar m),L_2(\bar m)\}$. Winner and loser preservation for \eqref{eq:three_tight_family} follow from
\[
  \kappa\bar m+2qB=3t,
  \qquad
  q(C-\bar m)+\kappa D=0,
\]
and all three column sums equal $L_1(\bar m)$. For $\bar m\ge t$, one has $D>0>C$ and
\begin{align*}
  \bar m-B&=\frac{3(\bar m-t)}{2q},&
  B+\bar m&=\frac{3t+(1-4\varepsilon)\bar m}{2q},\\
  \bar m-D&=\frac{t-(1-4\varepsilon)\bar m}{2(1+\varepsilon)},&
  C+\bar m&=
  \frac{\kappa t+(1-4\varepsilon)\bar m}{2q(1+\varepsilon)}.
\end{align*}
If $\varepsilon\le1/7$, these inequalities verify the cap and dominance for
\[
  t\le\bar m\le\frac{t}{1-4\varepsilon}.
\]
If $\varepsilon\ge1/7$, they do so for
\[
  t\le\bar m\le
  m_3:=\frac{\kappa t}{\varepsilon(4-\varepsilon)}.
\]
Indeed, for $\varepsilon<1/4$ one has $m_3\le t/(1-4\varepsilon)$; for $\varepsilon\ge1/4$ the potentially negative numerators remain nonnegative because
\[
  m_3\le\frac{\kappa t}{4\varepsilon-1},
  \qquad
  m_3\le\frac{3t}{4\varepsilon-1},
\]
both valid for $\varepsilon<1$. Thus the tight family attains $L_1$ throughout its binding positive range.

The two affine bounds differ by
\[
  L_1(\bar m)-L_2(\bar m)
  =\frac{\varepsilon[t-(1-4\varepsilon)\bar m]}
         {1-\varepsilon^2}.
\]
For $\varepsilon\le1/7$, they meet at $t/(1-4\varepsilon)$. From that point to $t/(3\varepsilon)$, the capped equal-split family specialized to $n=3$ satisfies the cap and dominance and attains $L_2$; at the endpoint it is exactly balanced. For $\varepsilon\ge1/7$, $L_1$ remains the positive maximum until it reaches zero at $m_3$, where the tight family becomes \eqref{eq:three_exact_thick}. The appropriate exactly balanced endpoint remains feasible under every larger cap. This proves part~(iii).
\end{proof}

\ifdefined\UTMDWorkingPaper
\else
\ifdefined\ArxivPreprint
\section*{Funding}
This work was supported by the Japan Society for the Promotion of Science (JSPS) through KAKENHI (Grants-in-Aid for Scientific Research), Grant Number JP21H04979, and by the Japan Science and Technology Agency (JST) through Exploratory Research for Advanced Technology (ERATO), Grant Number JPMJER2301.

\section*{Verification code}
The ancillary files include the computational verification scripts used to cross-check the partition, arbitrary-menu, calibration, frontier, and participation results.

\section*{Acknowledgments}
I thank Hitoshi Matsushima and Atsushi Iwasaki for guidance and helpful discussions. An earlier version of this paper circulated as UTMD Working Paper No.~130, ``Exact Budget-Balance Transformations via Payment-Rule Ambiguity'' (April 2026); the present manuscript is substantially revised and expanded. Any remaining errors are my own.

\section*{Use of generative AI tools}
During the preparation of this work, the author used OpenAI's ChatGPT to assist with language editing, literature cross-checking, manuscript organization, consistency review, and the generation of code for numerical and algebraic cross-checks. The author also used Refine to obtain an AI-assisted review of logic and internal consistency. The author independently reviewed and verified all outputs and suggestions, edited the manuscript as needed, and takes full responsibility for the content of this work.
\else
\section*{Funding}
This work was supported by the Japan Society for the Promotion of Science (JSPS) through KAKENHI (Grants-in-Aid for Scientific Research), Grant Number JP21H04979, and by the Japan Science and Technology Agency (JST) through Exploratory Research for Advanced Technology (ERATO), Grant Number JPMJER2301.
The funders had no role in the design of the study, the analysis or interpretation of the results, the preparation of the manuscript, or the decision to submit it for publication.

\section*{Declaration of competing interest}
The author has no relevant financial or non-financial interests to disclose.

\section*{Data availability}
No data were used for the research described in this article.

\section*{Code availability}
The computational verification scripts used to cross-check the partition, arbitrary-menu, calibration, frontier, and participation results are supplied as supplementary material.

\section*{Acknowledgments}
I thank Hitoshi Matsushima and Atsushi Iwasaki for guidance and helpful discussions. An earlier version of this paper circulated as UTMD Working Paper No.~130, ``Exact Budget-Balance Transformations via Payment-Rule Ambiguity'' (April 2026); the present manuscript is substantially revised and expanded. Any remaining errors are my own.

\section*{Declaration of generative AI and AI-assisted technologies in the manuscript preparation process}
During the preparation of this work, the author used OpenAI's ChatGPT to assist with language editing, literature cross-checking, manuscript organization, consistency review, and the generation of code for numerical and algebraic cross-checks. The author also used Refine to obtain an AI-assisted review of logic and internal consistency. The author independently reviewed and verified all outputs and suggestions, edited the manuscript as needed, and takes full responsibility for the content of the published article.
\fi
\fi

% Bibliography
\begingroup
\ifdefined\UTMDWorkingPaper
\else
\small
\fi
\bibliographystyle{plainnat}
\bibliography{references}
\endgroup

\end{document}